\titlespacing{\paragraph}{%
	0pt}{
	0.5\baselineskip}{
	1em}
\def\moverlay{\mathpalette\mov@rlay}
\def\mov@rlay#1#2{\leavevmode\vtop{%
		\baselineskip\z@skip \lineskiplimit-\maxdimen
		\ialign{\hfil$\m@th#1##$\hfil\cr#2\crcr}}}
\newcommand{\charfusion}[3][\mathord]{
	#1{\ifx#1\mathop\vphantom{#2}\fi
		\mathpalette\mov@rlay{#2\cr#3}
	}
	\ifx#1\mathop\expandafter\displaylimits\fi}
\colorlet{DarkRed}{red!75!black}
\colorlet{DarkGreen}{green!75!black}
\colorlet{DarkBlue}{blue!75!black}
\declaretheorem[numberwithin=section]{theorem}
\newenvironment{fminipage}%
{\begin{Sbox}\begin{minipage}}%
		{\end{minipage}\end{Sbox}\fbox{\TheSbox}}
\newtheorem{thm}{Theorem}  
\newtheorem{lem}[thm]{Lemma}
\newtheorem{clm}[thm]{Claim}
\renewcommand{\tilde}{\widetilde}
\DeclareMathOperator{\poly}{poly}
\DeclareMathOperator{\OPT}{OPT}
\newcommand{\SelectColumns}{\textbf{\text{ApproximatelySelectColumns}}}
\newcommand{\R}{\mathbb{R}}
\newcommand{\A}{\mathcal{A}}
\newcommand{\BS}{B^{\star}}
\newcommand{\Bs}[1]{B_{#1}^{\star}}
\newcommand{\Ex}{\mathbb{E}}
\newcommand{\opt}{\mathrm{OPT}}
\newcommand{\optone}{\mathrm{OPT}^{(1)}}
\newcommand{\optk}{\mathrm{OPT}^{(k)}}
\newcommand{\eps}{\epsilon}
\newcommand{\nnz}[1]{\left\Vert #1 \right\Vert_0}
\newcommand{\nnzs}[1]{\Vert #1 \Vert_0}
\newcommand{\Def}{\overset{\text{def}}{=}}
\newcommand{\probOne}{Boolean $\ell_0$-Rank-$1$\xspace}
\newcommand{\realsRankK}{Reals $\ell_0$-Rank-$k$\xspace}
\newcommand{\realsRankOne}{Reals $\ell_0$-Rank-$1$\xspace}
\newcommand{\tu}{{\widetilde u}}
\newcommand{\tv}{{\widetilde v}}
\newcommand{\tZ}{{\widetilde Z}}
\begin{document}

\title{Approximation Algorithms for \\ $\ell_0$-Low Rank Approximation}

\author[1]{Karl Bringmann}
\author[1]{Pavel Kolev\thanks{This work has been funded by the Cluster of Excellence ``Multimodal Computing and Interaction'' within the Excellence Initiative of the German Federal Government.}}
\author[2]{David P. Woodruff}

\affil[1]{Max Planck Institute for Informatics, Saarland Informatics Campus, Saarbr{\"u}cken, Germany \protect\\ \texttt{\{kbringma,pkolev\}@mpi-inf.mpg.de}}
\affil[2]{Department of Computer Science, Carnegie Mellon University \protect\\ \texttt{dwoodruf@cs.cmu.edu}}

\date{}	
\maketitle

\begin{abstract}
  We study the $\ell_0$-Low Rank Approximation Problem, where the goal is, 
  given an $m \times n$ matrix $A$, to output a rank-$k$ matrix $A'$ for which
  $\|A'-A\|_0$ is minimized.
  Here, for a matrix $B$, $\|B\|_0$ denotes the number of its non-zero entries. 
  This NP-hard variant of low rank approximation is natural for problems 
  with no underlying metric, and its goal is to minimize the number of disagreeing
  data positions.
  
  We provide approximation algorithms which significantly improve the running time 
  and approximation factor of previous work. 
  For $k > 1$, we show how to find, in poly$(mn)$ time for every $k$, 
  a rank $O(k \log(n/k))$ matrix $A'$ for which $\|A'-A\|_0 \leq O(k^2 \log(n/k)) \OPT$. 
  To the best of our knowledge, this is the first algorithm with provable guarantees 
  for the $\ell_0$-Low Rank Approximation Problem for $k > 1$, 
  even for bicriteria algorithms. 
  
  For the well-studied case when $k = 1$, we give a $(2+\epsilon)$-approximation 
  in {\it sublinear time}, which is impossible for other variants of low rank 
  approximation such as for the  Frobenius norm. 
  We strengthen this for the well-studied case of binary matrices to obtain 
  a $(1+O(\psi))$-approximation in sublinear time, 
  where $\psi = \OPT/\nnz{A}$.
  For small $\psi$, our approximation factor is $1+o(1)$.
\end{abstract}

\newpage
\tableofcontents
\newpage

\section{Introduction}
Low rank approximation of an $m \times  n$ matrix $A$ is an extremely well-studied problem, where the goal is to replace the matrix $A$ with a rank-$k$ matrix $A'$ which well-approximates $A$, in the sense that $\|A-A'\|$ is small under some measure $\|\cdot\|$. Since any rank-$k$ matrix $A'$ can be written as $U \cdot V$, where $U$ is $m \times k$ and $V$ is $k \times n$, this allows for a significant parameter reduction. Namely, instead of storing $A$, which has $mn$ entries, one can store $U$ and $V$, which have only $(m+n)k$ entries in total. Moreover, when computing $Ax$, one can first compute $Vx$ and then $U(Vx)$, which takes $(m+n)k$ instead of $mn$ time. We refer the reader to several surveys \cite{kv09, m11, w14} for references to the many results on low rank approximation. 

We focus on approximation algorithms for the low-rank approximation problem, i.e.  we seek to output a rank-$k$ matrix $A'$ for which $\|A-A'\| \leq \alpha \|A-A_k\|$, where $A_k = \textrm{argmin}_{\textrm{rank}(B)=k} \|A-B\|$ is the best rank-$k$ approximation to $A$, and the approximation ratio $\alpha$ is as small as possible.
One of the most widely studied error measures is the Frobenius norm $\|A\|_F = (\sum_{i=1}^m \sum_{j=1}^n A_{i,j}^2 )^{1/2}$, for which the optimal rank-k approximation can be obtained via the singular value decomposition (SVD). Using randomization and approximation, one can compute an $\alpha = 1+\epsilon$-approximation, for any $\epsilon > 0$, in time much faster than the $\min(m n^2, m n^2)$ time required for computing the SVD, namely, in $O(\nnz{A} + n \cdot \poly(k/\epsilon))$ time \cite{cw13,mm13,nn13}, where $\nnz{A}$ denotes the number of non-zero entries of $A$. For the Frobenius norm $\nnz{A}$ time is also a lower bound, as any algorithm that does not read nearly all entries of $A$ might not read a very large entry, and therefore cannot achieve a relative error approximation.

The rank-$k$ matrix $A_k$ obtained by computing the SVD is also optimal with respect to any rotationally invariant norm, such as the operator and Schatten-$p$ norms. Thus, such norms can also be solved exactly in polynomial time. Recently, however, there has been considerable interest \cite{cw15, agkm16, rsw16} in obtaining low rank approximations for NP-hard error measures such as the {\it entrywise} $\ell_p$-norm $\| A\|_p = \big(\sum_{i,j} |A_{i,j}|^p \big)^{1/p}$, where $p \geq 1$ is a real number. Note that for $p < 1$, this is not a norm, though it is still a well-defined quantity. For $p = \infty$, this corresponds to the max-norm or Chebyshev norm. It is known that one can achieve a $\poly(k \log (mn))$-approximation in $\poly(mn)$ time for the low-rank approximation problem with entrywise $\ell_p$-norm for every $p \geq 1$ \cite{swz16, cgklpw17}.

\subsection{$\ell_0$-Low Rank Approximation}
A natural variant of low rank approximation which the results above do not cover is that of {\it $\ell_0$-low rank approximation}, where the measure $\|A\|_0$ is the number of non-zero entries. In other words, we seek a rank-$k$ matrix $A'$ for which the number of entries $(i,j)$ with $A'_{i,j} \neq A_{i,j}$ is as small as possible. Letting $\OPT = \min_{\textrm{rank}(B)=k} \sum_{i,j} \delta(A_{i,j} \neq A'_{i,j})$, where $\delta(A_{i,j} \neq A'_{i,j}) = 1$ if $A_{i,j} \neq A'_{i,j}$ and $0$ otherwise, we would like to output a rank-$k$ matrix $A'$ for which there are at most $\alpha \OPT$ entries $(i,j)$ with $A'_{i,j} \neq A_{i,j}$. Approximation algorithms for this problem are essential since solving the problem exactly is NP-hard \cite{dajw15,GV15}, even when $k = 1$ and $A$ is a binary matrix.  

The $\ell_0$-low rank approximation problem is quite natural for problems with no underlying metric,
and its goal is to minimize the number of disagreeing data positions with a low rank matrix. 
Indeed, this error measure directly answers the following question: if we are allowed to ignore some data - outliers or anomalies - what is the best low-rank model we can get? One well-studied case is when $A$ is binary, but $A'$ and its factors $U$ and $V$ need not necessarily be binary. This is called unconstrained Binary Matrix Factorization in \cite{j14}, which has applications to association rule mining \cite{k03}, biclustering structure identification \cite{z10,z07}, pattern discovery for gene expression \cite{sjy09}, digits reconstruction \cite{mgnr06}, mining high-dimensional discrete-attribute data \cite{kgr05,k06}, market based clustering \cite{li05}, and document clustering \cite{z07}. There is also a body of work on Boolean Matrix Factorization which restricts the factors to also be binary, which is referred to as constrained Binary Matrix Factorization in \cite{j14}. This is motivated in applications such as classifying text documents and there is a large body of work on this, see, e.g.  \cite{mv14,rpg15}.

The $\ell_0$-low rank approximation problem coincides with a number of problems in different areas. It exactly coincides with the famous matrix rigidity problem over the reals, which asks for the minimal number $\OPT$ of entries of $A$ that need to be changed in order to obtain a matrix of rank at most~$k$. The matrix rigidity problem is well-studied in complexity theory \cite{d76, d80, v77} and parameterized complexity \cite{FLMSZ17}. These works are not directly relevant here as they do not provide approximation algorithms. There are also other variants of $\ell_0$-low rank approximation, corresponding to cases such as when $A$ is binary, $A' = UV$ is required to have binary factors $U$ and $V$, and multiplication is either performed over a binary field \cite{y11,ggyt12,dajw15,prf15}, or corresponds to an OR of ANDs. The latter is known as the Boolean model \cite{bv10,dajw15,mmgdm08,sbm03,sh06,vag07}. These different notions of inner products lead to very different algorithms and results for the $\ell_0$-low rank approximation problem. However, all these models coincide in the special and important case in which $A$ is binary and $k = 1$. This case was studied in \cite{k03,sjy09,j14}, as their algorithm for $k = 1$ forms the basis for their successful heuristic for general $k$, e.g.  the PROXIMUS technique \cite{k03}. 

Another related problem is robust PCA~\cite{c11},
in which there is an 
underlying matrix $A$ that can be written as a low rank matrix $L$ plus a sparse matrix $S$~\cite{Candes2011}. 
Cand\`es et al.~\cite{Candes2011} argue that both components are of arbitrary magnitude, and we do not know the locations 
of the non-zeros in $S$ nor how many there are. Moreover, grossly corrupted observations are common 
in image processing, web data analysis, and bioinformatics where some measurements are \textit{arbitrarily} corrupted due to occlusions, malicious tampering, or sensor failures.
Specific scenarios include video surveillance, face recognition, latent semantic indexing, and ranking of movies, books, etc.~\cite{Candes2011}. 
These problems have the common theme of being an arbitrary magnitude sparse perturbation to a low rank matrix with no natural underlying metric, and so the $\ell_0$-distance measure (which is just the Hamming distance, or number of disagreements) is appropriate.
In order to solve robust PCA in practice, Cand\`es et al.~\cite{Candes2011} relaxed the $\ell_0$-distance measure to the $\ell_1$-norm. Understanding theoretical guarantees for solving the original $\ell_0$-problem is of fundamental importance, and we study this problem in this paper.

Finally, interpreting $0^0$ as $0$, the $\ell_0$-low rank approximation problem coincides with the aforementioned notion of entrywise $\ell_p$-approximation when $p = 0$. It is not hard to see that previous work \cite{cgklpw17} for general $p \geq 1$ fails to give any approximation factor for $p = 0$. Indeed, critical to their analysis is the scale-invariance property of a norm, which does not hold for $p = 0$ since $\ell_0$ is not a norm. 

\subsection{Our Results}
We provide approximation algorithms for the $\ell_0$-low rank approximation problem which significantly improve the running time or approximation factor of previous work. In some cases our algorithms even run in {\it sublinear time}, i.e., faster than reading all non-zero entries of the matrix.
This is provably impossible for other measures such as the Frobenius norm and more generally, any $\ell_p$-norm for $p > 0$. For $k > 1$, our approximation algorithms are, to the best of our knowledge, the first with provable guarantees for this problem. 

First, for $k = 1$, we significantly improve the polynomial running time of previous $(2+\epsilon)$-approximations for this problem. The best previous algorithm due to Jiang et al.~\cite{j14} was based on the observation
that there exists a column $u$ of $A$ spanning a $2$-approximation. Therefore, solving the problem $\min_v \|A-uv\|_0$  for each column $u$ of $A$ yields a 2-approximation, where for a matrix $B$ the measure $\|B\|_0$ counts the number of non-zero entries.
The problem $\min_v \|A-uv\|_0$ decomposes into
$\sum_i \min_{_i} \|A_{:,i}-v_i u\|_0$, where $A_{:,i}$ is the $i$-th column
of $A$, and $v_i$ the $i$-th entry of vector $v$. The optimal $v_i$ is the mode of the ratios
$A_{i,j}/u_j$, where $j$ ranges over indices in $\{1, 2, \ldots, m\}$ with $u_j \neq 0$.
As a result, one can find a rank-1 matrix $u v^T$ providing a $2$-approximation in $O(\nnz{A} n)$ time, which was the
best known running time. Somewhat surprisingly, we show that one can achieve {\it sublinear time}
for solving this problem. Namely, we obtain a $(2+\epsilon)$-approximation 
in $(m+n) \poly(\eps^{-1} \psi^{-1} \log(mn))$ time, for any $\epsilon > 0$, 
where $\psi = \OPT / \nnz{A}$.
This significantly improves upon the
earlier $O( \nnz{A} n )$ time for not too small $\eps$ and $\psi$. 
Our result should be contrasted to Frobenius norm low rank approximation, for which $\Omega(\nnz{A})$ time is required
even for $k = 1$, as otherwise one might miss a very large entry in $A$. Since $\ell_0$-low rank approximation
is insensitive to the magnitude of entries of $A$, we bypass this general impossibility result.

Next, still considering the case of $k = 1$, we show that if the matrix $A$ is binary, a well-studied case
coinciding with the abovementioned $GF(2)$ and Boolean models,
we obtain an approximation algorithm parameterized in terms of the ratio $\psi = \OPT / \nnz{A}$, 
showing it is possible in time $(m+n) \psi^{-1} \poly(\log(mn))$ 
to obtain a $(1+O(\psi))$-approximation. Note
that our algorithm is again sublinear, unlike all algorithms in previous work. Moreover,
when $A$ is itself very well approximated by a low rank matrix, then $\psi$ may actually be sub-constant,
and we obtain a significantly better $(1+o(1))$-approximation than the previous best known $2$-approximations. Thus,
we simultaneously improve the running time and approximation factor. We also show that the running time of our
algorithm is optimal up to $\poly(\log(mn))$ factors by proving that any $(1+O(\psi))$-approximation succeeding with constant
probability must read $\Omega((m+n) \psi^{-1})$ entries of $A$ in the worst case. 

Finally, for arbitrary $k > 1$, we first give an impractical algorithm, running in $n^{O(k)}$ time and achieving an $\alpha = \poly(k)$-approximation. To the best of our knowledge this is the first approximation algorithm for the $\ell_0$-low rank approximation problem with any non-trivial approximation factor. To make our algorithm practical, we reduce the running time to $\poly(mn)$, with an exponent independent of $k$, if we allow for a bicriteria solution. 
In particular, we allow the algorithm to output a matrix $A'$ of somewhat larger rank 
$O(k\log(n/k))$, for which $\|A-A'\|_0 \leq O(k^2 \log(n/k)) \cdot \min_{\textrm{rank}(B)=k}\|A-B\|_0$. 
Although we do not obtain rank exactly $k$, many of the motivations for finding a low rank approximation, such as reducing the number of parameters and fast matrix-vector product, still hold if the output rank is $O(k \log(n/k))$. We are not aware of any alternative algorithms which achieve $\poly(mn)$ time and any provable approximation factor, even for bicriteria solutions.

\section{Preliminaries}

For an matrix $A \in \mathbb{A}^{m \times n}$ with entries $A_{i,j}$, we write $A_{i,:}$ for its $i$-th row and $A_{:,j}$ for its $j$-th column.

\paragraph{Input Formats} 
We always assume that we have random access to the entries of the given matrix~$A$, i.e.  we can read any entry $A_{i,j}$ in constant time. 
For our sublinear time algorithms we need more efficient access to the matrix, specifically the following two variants:

(1) We say that we are given $A$ \emph{with column adjacency arrays} if we are given arrays $B_1,\ldots,B_n$ and lengths $\ell_1,\ldots,\ell_n$ such that for any $1 \le k \le \ell_j$ the pair $B_j[k] = (i,A_{i,j})$ stores the row $i$ containing the $k$-th nonzero entry in column $j$ as well as that entry $A_{i,j}$. This is a standard representation of matrices used in many applications. Note that given only these adjacency arrays $B_1,\ldots,B_n$, in order to access any entry $A_{i,j}$ we can perform a binary search over $B_j$, and hence random access to any matrix entry is in time $O(\log n)$. 
Moreover, we assume to have random access to matrix entries in constant time, and note that this is optimistic by at most a factor $O(\log n)$.

(2) We say that we are given matrix $A$ \emph{with row and column sums} if we can access the numbers $\sum_j A_{i,j}$ for $i \in [m]$ and $\sum_i A_{i,j}$ for $j \in [n]$ in constant time (and, as always, access any entry $A_{i,j}$ in constant time). Notice that storing the row and column sums takes $O(m+n)$ space, and thus while this might not be standard information it is very cheap to store.

We show that the first access type even allows to sample from the set of nonzero entries uniformly in constant time.

\begin{lem} \label{lem:samplingnonzeroentries}
  Given a matrix $A \in \mathbb{R}^{m \times n}$ with column adjacency arrays, after $O(n)$ time preprocessing we can sample a uniformly random nonzero entry $(i,j)$ from $A$ in time $O(1)$.
\end{lem}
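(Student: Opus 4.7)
The plan is to reduce the task to the standard problem of sampling from a discrete distribution on $[n]$, which admits constant-time sampling after linear-time preprocessing via Walker's alias method. First I would use the lengths $\ell_1,\ldots,\ell_n$ (which are given with the input) to compute the total number of nonzero entries $N = \sum_{j=1}^n \ell_j$ in $O(n)$ time. I would then build, in $O(n)$ time, a data structure for the discrete distribution on $[n]$ that assigns probability $p_j := \ell_j/N$ to column $j$. Using Walker's alias method, this data structure supports drawing a sample from $(p_1,\ldots,p_n)$ in $O(1)$ time per query.

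To sample a uniformly random nonzero entry, I would proceed in two stages. First, draw a column index $J \in [n]$ from the distribution $(p_1,\ldots,p_n)$ in $O(1)$ time using the alias data structure. Second, draw an index $K$ uniformly at random from $\{1,\ldots,\ell_J\}$ in $O(1)$ time (using one additional uniform random word), and return the pair $B_J[K] = (i, A_{i,J})$; since the representation stores the row index together with the entry, this yields the desired nonzero $(i,j) = (i,J)$ in constant time. For any fixed nonzero entry $(i,j)$, its probability of being returned is
\[
\Pr[J = j] \cdot \Pr[K \text{ selects the given entry} \mid J=j] \;=\; \frac{\ell_j}{N}\cdot \frac{1}{\ell_j} \;=\; \frac{1}{N},
\]
which is exactly the uniform distribution over the $N$ nonzero entries of $A$.

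The main (and only nontrivial) ingredient is the alias construction, which is a standard $O(n)$-time algorithm producing two length-$n$ tables that together encode $(p_1,\ldots,p_n)$ so that sampling requires one uniform real in $[0,1)$ and one table lookup. I would briefly recall its guarantee, since it is the source of both the $O(n)$ preprocessing bound and the $O(1)$ per-sample bound; everything else is a direct calculation. No further obstacle arises: accessing $\ell_j$ and $B_j[K]$ is by assumption $O(1)$, and the two-stage sampling identity above certifies uniformity.
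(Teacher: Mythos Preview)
Your proposal is correct and essentially identical to the paper's proof: both use Walker's alias method to sample a column $j$ with probability $\ell_j/\sum_k \ell_k$ after $O(n)$ preprocessing, then sample uniformly within the chosen column's adjacency array, and verify uniformity via the same telescoping computation $(\ell_j/N)\cdot(1/\ell_j)=1/N$.
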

\begin{proof}
  Note that we are in particular given the number of nonzero entries $\ell_j = \nnzs{A_{:,j}}$ for each column. We want to first sample a column $X \in [n]$ such that $\Pr[X = j] = \ell_j / \sum_{k \in [n]} \ell_k$, then sample $Y \in [\ell_j]$ uniformly, read $B_X[Y] = (i, A_{i,X})$, and return $A_{i,X}$. Observe that this process indeed samples each nonzero entry of $A$ with the same probability, since the probability of sampling a particular nonzero entry $(i,j)$ is $(\ell_j / \sum_{k \in [n]} \ell_k) \cdot (1/ \ell_j) = 1/\sum_{k \in [n]} \ell_k$. Sampling $Y \in [\ell_j]$ uniformly can be done in constant time by assumption. For sampling $X$, we use the classic Alias Method by Walker~\cite{W74}, which is given the probabilities $\Pr[X=1],\ldots,\Pr[X=n]$ as input and computes, in $O(n)$ time, a data structure that allows to sample from $X$ in time $O(1)$. This finishes the construction.
\end{proof}

\section{Algorithms for Reals $\ell_{0}$-Rank-$k$}

Given a matrix $A\in\mathbb{R}^{m\times n}$, the \realsRankK
problem asks to find a matrix $A^{\prime}\in\mathbb{R}^{m\times n}$ 
with rank $k$ such that the difference between $A$ and $A^{\prime}$ 
measured in $\ell_{0}$-distance is minimized. We denote the optimum value by

\begin{equation}\label{eq:l0-Rank-k}
\optk\overset{\text{def}}{=}\min_{\mathrm{rank}(A^{\prime})=k}\left\Vert A - A^{\prime}\right\Vert _{0}=\min_{U\in\mathbb{R}^{m\times k},\:V\in\mathbb{R}^{k\times n}}\left\Vert A - UV\right\Vert _{0}.
\end{equation}

In this section, we establish several new results on the \realsRankK problem.
In Subsection~\ref{subsec:3.1SR}, we prove a structural lemma that shows the existence 
of $k$ columns which provide a $(k+1)$-approximation to $\optk$, 
and we also give an $\Omega(k)$-approximation lower bound 
for any algorithm that selects $k$ columns from the input matrix $A$.
In Subsection~\ref{subsec:3.2BasicAlg}, we give an approximation algorithm
that runs in $\mathrm{poly}(n^k,m)$ time and achieves an $O(k^2)$-approximation.
To the best of our knowledge, this is the first algorithm with provable non-trivial 
approximation guarantees.
In Subsection~\ref{subsec:3.3BiCritAlg}, we design a practical algorithm that 
runs in $\mathrm{poly}(n,m)$ time with an exponent independent of $k$, 
if we allow for a bicriteria solution.

\subsection{Structural Results}\label{subsec:3.1SR}

We give a new structural result for $\ell_0$-distance showing that 
any matrix $A$ contains $k$ columns which provide a $(k+1)$-approximation 
for the \realsRankK problem.

\begin{lem}\label{lem_SR}
	Let $A\in\mathbb{R}^{m\times n}$
	be a matrix and $k\in\left[n\right]$. 
	There is a subset $J^{(k)}\subset[n]$
	of size $k$ and a matrix $Z\in\mathbb{R}^{k\times n}$ such that
	$\lVert A - A_{:,J^{(k)}}Z\rVert_{0}\leq(k+1)\optk$.
\end{lem}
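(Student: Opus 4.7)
The plan is to pick $J^{(k)}$ as a basis of the linear matroid on the columns of an optimal right factor of $A$. Let $U^{*}V^{*}$ achieve $\optk$, set $E := A - U^{*}V^{*}$ and $\sigma_{j} := \|E_{:,j}\|_{0}$, so that $\sum_{j}\sigma_{j} = \optk$. One may assume $\mathrm{rank}(V^{*}) = k$ (otherwise the same argument with $r := \mathrm{rank}(V^{*}) < k$ in place of $k$ yields a column set of size $r$ with error at most $(r+1)\optk$, which is then padded to size $k$ by adding arbitrary columns with zero rows in $Z$). Let $\mathcal{B}$ denote the family of $k$-subsets $J = \{j_{1},\ldots,j_{k}\} \subseteq [n]$ with $V^{*}_{:,J}$ invertible. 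For each $J \in \mathcal{B}$, set $W_{J} := (V^{*}_{:,J})^{-1} V^{*}$, so that $V^{*} = V^{*}_{:,J} W_{J}$, and take $Z := W_{J}$. Direct substitution yields $A_{:,J} W_{J} = U^{*} V^{*} + E_{:,J} W_{J}$, so by the triangle inequality
\[
\|A - A_{:,J} W_{J}\|_{0} \;\le\; \|E\|_{0} + \|E_{:,J} W_{J}\|_{0} \;=\; \optk + \|E_{:,J} W_{J}\|_{0}.
\]

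It therefore suffices to exhibit some $J \in \mathcal{B}$ with $\|E_{:,J} W_{J}\|_{0} \le k \cdot \optk$. Since $(E_{:,J} W_{J})_{i,l}$ can be nonzero only when both $E_{i,j_{a}} \ne 0$ and $(W_{J})_{a,l} \ne 0$ for some $a \in [k]$, a union bound yields
\[
\|E_{:,J} W_{J}\|_{0} \;\le\; \sum_{a=1}^{k} \sigma_{j_{a}} \cdot \|(W_{J})_{a,:}\|_{0}.
\]
Averaging this bound over a uniformly random $J \in \mathcal{B}$ gives $\mathbb{E}_{J}\bigl[\sum_{a}\sigma_{j_{a}}\|(W_{J})_{a,:}\|_{0}\bigr] = \sum_{j \in [n]} \sigma_{j}\, \alpha_{j}$, where $\alpha_{j} := \mathbb{E}_{J}[\mathbf{1}[j \in J] \cdot \|(W_{J})_{a(j),:}\|_{0}]$ and $a(j)$ denotes the index of $j$ within $J$. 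The whole proof then reduces to showing the combinatorial inequality $\alpha_{j} \le k$ for every $j$, which gives $\sum_{j}\sigma_{j}\alpha_{j} \le k\,\optk$ and so some basis $J$ attains the desired $(k+1)\optk$ bound.

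The main obstacle is establishing $\alpha_{j} \le k$, which is where the matroid structure enters. The key observation is that $(W_{J})_{a(j),l}$ is nonzero precisely when either $l = j$, or $l \notin J$ and $(J \setminus \{j\}) \cup \{l\}$ is again a basis. Using the involution $J \leftrightarrow J' := (J \setminus \{j\}) \cup \{l\}$ to swap the roles of the bases, the count of $(J,l)$ pairs with $j \in J$, $l \neq j$, $l \notin J$, and $(J \setminus \{j\}) \cup \{l\} \in \mathcal{B}$ equals the count of $(J',l)$ pairs with $j \notin J'$, $l \in J'$, and $(J' \setminus \{l\}) \cup \{j\} \in \mathcal{B}$. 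For each fixed $J' \not\ni j$, the admissible values of $l$ form $C(j,J') \setminus \{j\}$, where $C(j,J')$ is the fundamental circuit of $j$ with respect to $J'$; since circuits in a rank-$k$ matroid have at most $k+1$ elements, this set has size at most $k$. Summing then gives $|\mathcal{B}|\,\alpha_{j} \le |\{J : j \in J\}| + k\,(|\mathcal{B}| - |\{J : j \in J\}|) \le k\,|\mathcal{B}|$, so $\alpha_{j} \le k$ and the lemma follows. The triangle inequality, union bound, and averaging step are routine once this matroid-exchange inequality is in hand.
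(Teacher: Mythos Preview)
Your proof is correct and takes a genuinely different route from the paper's.

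The paper argues by an iterative peeling scheme: starting from the full row set, at each step either the optimal cost on the current submatrix is so large that a trivial bound suffices, or some column agrees with $UV$ on all but a tiny fraction of the remaining rows; one keeps that column, deletes the disagreement rows, and recurses. After $k$ rounds the selected columns, restricted to the surviving rows, span $U_{S,:}V$, and the deleted-row blocks each contribute at most $\optk$. This is a constructive-flavoured argument (though still relying on the unknown $U,V$) that bounds cost region by region.

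Your argument instead fixes the basis structure of $V^{*}$ and averages over all bases $J$ of the column matroid, using the matroid-exchange identity to bound the expected row support $\alpha_{j}\le k$. The decomposition $\|A-A_{:,J}W_{J}\|_{0}\le\|E\|_{0}+\|E_{:,J}W_{J}\|_{0}$ together with the union bound $\|E_{:,J}W_{J}\|_{0}\le\sum_{a}\sigma_{j_{a}}\|(W_{J})_{a,:}\|_{0}$ is clean, and the crucial step---the bijection $(J,l)\leftrightarrow(J',l)$ and the fundamental-circuit bound $|C(j,J')\setminus\{j\}|\le k$---is a nice use of matroid basics that the paper does not invoke at all. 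Your handling of the case $\mathrm{rank}(V^{*})<k$ by running the argument at rank $r$ and padding is also fine.

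What each approach buys: the paper's peeling argument is elementary (no matroid language) and makes explicit the ``one column per round, one $\optk$ of cost per round'' intuition, which may be easier for readers unfamiliar with matroids. Your averaging argument is shorter, arguably more elegant, and isolates precisely where the factor $k$ comes from (circuit size in a rank-$k$ matroid); it also makes clear that the same factorisation $A_{:,J}W_{J}=U^{*}V^{*}+E_{:,J}W_{J}$ works for \emph{every} basis $J$, with only the error term varying.
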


\begin{proof}
	Let $Q^{(0)}$ be the set of columns $j$ with $U V_{:,j} = 0$, and let $R^{(0)} \overset{\text{def}}{=} [n] \setminus Q^{(0)}$. Let $S^{(0)}\overset{\text{def}}{=}[n]$,
	$T^{(0)}\overset{\text{def}}{=}\emptyset$. We split the value $\optk$ into $\mathrm{OPT}(S^{(0)},R^{(0)})\overset{\text{def}}{=}\lVert A_{S^{(0)},R^{(0)}} - UV_{S^{(0)},R^{(0)}} \rVert_{0}$
	and 
	\[
	\mathrm{OPT}(S^{(0)},Q^{(0)})\overset{\text{def}}{=}\lVert A_{S^{(0)},Q^{(0)}} - UV_{S^{(0)},Q^{(0)}} \rVert_{0}=\lVert A_{S^{(0)},Q^{(0)}}\rVert_{0}.
	\]
	
	Suppose $\mathrm{OPT}(S^{(0)},R^{(0)})\geq|S^{(0)}||R^{(0)}|/(k+1)$.
	Then, for any subset $J^{(k)}$ it follows that 
	\[
	\min_{Z}\lVert A - A_{S^{(0)},J^{(k)}}Z \rVert_{0}\leq|S^{(0)}||R^{(0)}|+\lVert A_{S^{(0)},Q^{(0)}}\rVert_{0}\leq(k+1)\optk.
	\]
	Otherwise, there is a column $i^{(1)}$ such that 
	\[
	\big\Vert A_{S^{(0)},i^{(1)}} -  (UV)_{S^{(0)},i^{(1)}} \big\Vert _{0}\leq\mathrm{OPT}(S^{(0)},R^{(0)})/|R^{(0)}| \le \optk / |R^{(0)}|.
	\]
	
	Let $T^{(1)}$ be the set of indices on which $(UV)_{S^{(0)},i^{(1)}}$
	and $A_{S^{(0)},i^{(1)}}$ disagree, and similarly $S^{(1)}\overset{\text{def}}{=}S^{(0)}\backslash T^{(1)}$
	on which they agree. Then we have $|T^{(1)}|\leq \optk / |R^{(0)}|$. Hence, in the submatrix $T^{(1)} \times R^{(0)}$ the total error is at most $|T^{(1)}| \cdot |R^{(0)}| \le \optk$.
	Let $R^{(1)},D^{(1)}$ be a partitioning
	of $R^{(0)}$ such that $A_{S^{(1)},j}$ is linearly dependent on
	$A_{S^{(1)},i^{(1)}}$ iff $j\in D^{(1)}$. Then by selecting
	column $A_{:,i^{(1)}}$ the incurred cost on matrix $S^{(1)}\times D^{(1)}$ is zero. 
	For the remaining submatrix $S^{(\ell)}\times R^{(\ell)}$, we perform a recursive call of the algorithm.
	
	We make at most $k$ recursive calls, on instances $S^{(\ell)}\times R^{(\ell)}$
	for $\ell\in\{0,\ldots,k-1\}$. In the $\ell^{th}$ iteration, either
	$\mathrm{OPT}(S^{(\ell)},R^{(\ell)})\geq|S^{(\ell)}||R^{(\ell)}|/(k+1-\ell)$
	and we are done, or there is a column $i^{(\ell+1)}$ which partitions
	$S^{(\ell)}$ into $S^{(\ell+1)},T^{(\ell+1)}$ and $R^{(\ell)}$
	into $R^{(\ell+1)},D^{(\ell+1)}$ such that 
	\[
	|S^{(\ell+1)}|\geq m\cdot\prod_{i=0}^{\ell}\left(1-\frac{1}{k+1-i}\right)=
	\frac{k-\ell}{k+1}\cdot m
	\]
	and for every $j\in D^{(\ell)}$ the column $A_{S^{(\ell+1)},j}$
	belongs to the span of $\{A_{S^{(\ell+1)},i^{(t)}}\}_{t=1}^{\ell+1}$.
	
	Suppose we performed $k$ recursive calls. We show now that
	the incurred cost in submatrix $S^{(k)}\times R^{(k)}$ 
	is at most $\mathrm{OPT}(S^{(k)},R^{(k)})\leq\optk$.
	By construction, $|S^{(k)}|\geq m/(k+1)$ and the sub-columns
	$\{A_{S^{(k)},i}\}_{i\in I^{(k)}}$ are linearly independent, 
	where $I^{(k)}=\{i^{(1)},\dots,i^{(k)}\}$ is the set of the selected columns,
	and $A_{S^{(k)},I^{(k)}}=(UV)_{S^{(k)},I^{(k)}}$.
	Since $\mathrm{rank}(A_{S^{(k)},I^{(k)}})=k$, it follows that
	$\mathrm{rank}(U_{S^{(k)},:})=k$, $\mathrm{rank}(V_{:,I^{(k)}})=k$ and
	the matrix $V_{:,I^{(k)}}\in\mathbb{R}^{k\times k}$ is invertible.
	Hence, for matrix $Z=(V_{:,I^{(k)}})^{-1}V_{:,R^{k}}$ we have
	$$\mathrm{OPT}(S^{(k)},R^{(k)})=\lVert A_{S^{k},R^{k}}-A_{S^{k},I^{k}}Z\rVert_{0}.$$
	
	The statement follows by noting that the recursive calls accumulate 
	a total cost of at most $k\cdot\optk$ in the submatrices
	$T^{(\ell+1)}\times R^{(\ell)}$ for $\ell \in \{0,1,\ldots,k-1\}$, as well as cost at most $\optk$ in submatrix $S^{(k)}\times R^{(k)}$.
\end{proof}

We also show that any algorithm that selects $k$ columns of 
a matrix $A$ incurs at least an $\Omega(k)$-approximation 
for the \realsRankK problem.

\begin{lem}\label{lem_lowerBound_L0rankK}
	Let $k \le n/2$.
	Suppose $A=(G_{k\times n};\ I_{n\times n})\in\mathbb{R}^{(n+k)\times n}$
	is a matrix composed of a Gaussian random matrix $G\in\mathbb{R}^{k\times n}$
	with $G_{i,j}\sim N(0,1)$ and identity matrix $I_{n\times n}$. Then
	for any subset $J^{(k)}\subset\left[n\right]$ of size $k$, we have
	$\min_{Z\in\mathbb{R}^{k\times n}}\lVert A - A_{:,J^{(k)}}Z\rVert_{0}=\Omega(k) \cdot \optk.$
\end{lem}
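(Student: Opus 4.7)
The plan is to exploit the column-wise separability of the cost, namely $\min_Z \|A - A_{:,J^{(k)}} Z\|_0 = \sum_{j=1}^n \min_{z \in \R^k} \|A_{:,j} - A_{:,J^{(k)}} z\|_0$, and to analyze each column separately. For $j \in J^{(k)}$ the column $A_{:,j}$ is literally one of the columns of $A_{:,J^{(k)}}$, so taking $z$ to be the corresponding standard basis vector yields per-column error $0$. The meat of the argument is to lower-bound the per-column cost by $k+1$ almost surely for each of the $n-k$ indices $j \notin J^{(k)}$, which then gives total error at least $(n-k)(k+1)$.

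For the per-column lower bound with $j \notin J^{(k)}$, I would split the cost into a top contribution $\|G_{:,J^{(k)}} z - G_{:,j}\|_0$ from the Gaussian block and a bottom contribution which, because $j \notin J^{(k)}$, equals $\nnz{z} + 1$: the $+1$ is the forced disagreement at row $j$ of the identity block, while each nonzero coordinate of $z$ contributes a disagreement at a row indexed by $J^{(k)}$. Since $G_{:,J^{(k)}} \in \R^{k\times k}$ is a.s.\ invertible, substituting $w := G_{:,J^{(k)}} z$ rewrites the per-column cost as $(k - |T(w)|) + (k - |S(w)|) + 1$, where $T(w) := \{l \in [k] : w_l = (G_{:,j})_l\}$ counts coordinate matches of $w$ with $G_{:,j}$ and $S(w) := \{l \in [k] : (G_{:,J^{(k)}}^{-1} w)_l = 0\}$ counts zero entries of the reconstructed $z$. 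Thus the cost falls below $k+1$ if and only if there is a $w$ with $|T(w)| + |S(w)| \ge k+1$.

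The main obstacle is to exclude this latter event, which I would do by a union bound. For any fixed $T, S \subseteq [k]$ with $|T| + |S| = k+1$, the combined constraints form a linear system $Mw = b$, where $M \in \R^{(k+1)\times k}$ has rows $\{e_l^\top\}_{l\in T}$ and $\{(G_{:,J^{(k)}}^{-1})_{l,:}\}_{l\in S}$, and $b$ has entries $(G_{:,j})_l$ on the $T$-block and zeros on the $S$-block. Since the rows of $G_{:,J^{(k)}}^{-1}$ are linearly independent and generic with respect to the standard basis, $M$ has rank exactly $k$ almost surely, so its left null space is spanned by a single vector $\eta \ne 0$. The key point is that $\eta$ must have a nonzero coordinate on the $T$-block, since otherwise $\eta|_S$ would witness a linear dependence among rows of $G_{:,J^{(k)}}^{-1}$, contradicting invertibility. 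Then $\eta^\top b = \sum_{l \in T} \eta_l (G_{:,j})_l$ is a nontrivial linear combination of entries of $G_{:,j}$, independent of $G_{:,J^{(k)}}$, hence nonzero almost surely. A union bound over the finitely many triples $(T, S, j)$ (and, if desired, over the finitely many $J^{(k)}$) completes the per-column bound with probability $1$.

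It remains to convert the absolute bound into an approximation ratio. I would upper-bound $\optk$ by exhibiting the rank-$k$ matrix $B = (G;\, 0_{n\times n})$ (of rank $k$ a.s.), which differs from $A$ exactly on the $n$ diagonal entries of the identity block, so $\optk \le n$. Combined with $\min_Z \|A - A_{:,J^{(k)}} Z\|_0 \ge (n-k)(k+1)$ and the hypothesis $k \le n/2$ (giving $n-k \ge n/2$), the resulting ratio is at least $(k+1)/2 = \Omega(k)$, as required.
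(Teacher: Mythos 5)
Your proof is correct and arrives at the same quantitative bound (per-column cost $\ge k+1$ for $j \notin J^{(k)}$, total $\ge (n-k)(k+1)$, and $\optk \le n$ via the rank-$k$ matrix $(G;\,0)$) by the same high-level column-wise decomposition as the paper, but you organize the core genericity step differently. The paper fixes the support of $z$ (size $\ell$), observes that the corresponding $\ell \times \ell$ Gaussian submatrix is invertible so that matching $\ell$ rows of $G_{:,j}$ already pins down $z$, and then no further row can match with positive probability; it then union-bounds over supports and row subsets. You instead substitute $w = G_{:,J^{(k)}} z$, parameterize by a pair $(T,S)$ with $|T|+|S|=k+1$ (matching rows in the Gaussian block, and zero coordinates of $z = G_{:,J^{(k)}}^{-1} w$), and argue inconsistency of the joint linear system $Mw=b$ by exhibiting a left null vector $\eta$ whose $T$-part is nonzero, so that $\eta^{\top} b$ is a nontrivial linear combination of entries of $G_{:,j}$ and hence a.s.\ nonzero. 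Your version treats the two sparsity counts symmetrically and avoids the WLOG reduction of the support, which is tidy; the paper works directly with $G$ rather than its inverse. One small remark: the intermediate claim that $M$ has rank exactly $k$ a.s.\ (justified by the vague phrase ``generic with respect to the standard basis'') is not actually needed and would itself require a separate genericity argument about minors of $G_{:,J^{(k)}}^{-1}$; what your proof really uses --- and does establish cleanly --- is that every nonzero left-null vector $\eta$ of $M$ has $\eta|_T \ne 0$, which follows from linear independence of the rows of $G_{:,J^{(k)}}^{-1}$ regardless of whether $M$ has rank $k$ or less.
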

\begin{proof}
	Notice that the optimum cost is at most $n$, achieved by selecting 
	$U=(I_{k\times k};\ 0_{n\times k})$ and $V=G_{k\times n}$. 
	It is well known that Gaussian matrices are invertible with probability 1, 
	see e.g. \cite[Thm 3.3]{SST06}. Hence, $G_{:,J^{(k)}}$ is a nonsingular matrix 
	for every subset $J^{(k)}\subset[n]$ of size $k$.

	We will show next that for any subset $J^{(k)}$ of $k$ columns 
	the incurred cost is at least $(n-k)k \ge nk/2$.
	Without loss of generality, the chosen columns $J^{(k)} = [k]$ are the first $k$ columns of $A$. Let $R = [2k]$ be the first $2k$ rows and $C = [n] \setminus J$ be the last $n-k$ columns. We bound 
	\begin{align*}
	\min_Z \nnzs{A - A_{:,[k]} Z} 
	& \ge \min_Z \nnzs{A_{R,C} - A_{R,[k]} Z} \\
	& = \sum_{j \in C} \min_{z^{(j)}} \nnzs{A_{R,j} - A_{R,[k]} z^{(j)}},
	\end{align*}
	i.e. we ignore all rows and columns except $R$ and $C$. Consider any column $j \in C$. Since $A_{R,j} = (G_{:,j}, 0_{k})$ and $A_{R,[k]} = (G_{:,[k]}, I_{k \times k})$, for any vector $z \in \mathbb{R}^k$ we have
	\begin{align*}
	\nnzs{A_{R,j} - A_{R,J^{(k)}} z} 
	& = \nnzs{G_{:,j} - G_{:,[k]} z} + \nnzs{I_{k \times k} z} \\
	& = \nnzs{G_{:,j} - G_{:,[k]} z} + \nnz{z}.
	\end{align*}
	Let $\ell \Def \nnz{z}$. By symmetry, without loss of generality we can assume that the first $\ell$ entries of $z$ are non-zero and the remaining entries are 0. Let $x \in \mathbb{R}^\ell$ be the vector containing the first $\ell$ entries of $z$. Then we have
	$$\nnzs{A_{R,j} - A_{R,J^{(k)}} z} = \nnzs{G_{:,j} - G_{:,[\ell]} x} + \ell.$$

	We consider w.l.o.g. the first $k$ columns of $A$,
	and we construct the optimum matrix $Z$ that minimizes
	$\lVert A_{:,1:k}Z - A \rVert_{0}$. Observe that it is 
	optimal to set the first $k$ columns of $Z$ to $I_{k\times k}$,
	and since $A_{2k+1:n,1:k}=0$ we can focus only on the submatrix
	$A_{1:2k,k+1:n}=(G_{1:k,k+1:n}; 0_{k\times n-k})$.
	
	Consider a column $A_{1:2k,j}$ for $j\in[k+1,n]$. Our goal is to find
	a vector $v\in\mathbb{R}^k$ minimizing the objective function
	$\Psi=\min_{v}\{\lVert v\rVert_{0}+\lVert G^{(k)}v-g\rVert_{0}\}$,
	where $G^{(k)}\Def\{G_{1:k,1:k}\}$ and $g\Def G_{1:k,j}$.
	It holds with probability $1$ that $G^{(k)}$ and $g$
	do not have an entry equal to zero. Moreover, since $G^{(k)}$ is invertible
	every row in $[G^{(k)}]^{-1}$ is non-zero, and thus with probability
	$1$ a vector $v=[G^{(k)}]^{-1}g$ has entry equal to zero.

	Let $v=(x;\ 0)$ be an arbitrary vector with $\lVert x\rVert_{1}=\ell$.
	Let $G^{(\ell)}$ be a submatrix of $G^{(k)}$ induced by the first
	$\ell$ columns. For every subset $S\subset[m]$ of $\ell$ rows the
	corresponding submatrix $G_{S,:}^{(\ell)}$ has a full rank. Suppose
	there is a subset $S$ such that for $G_{S,:}^{(\ell)}$ and $g_{S}$
	there is a vector $x\in\mathbb{R}^{k}$ satisfying $G_{S,:}^{(\ell)}x=g_{S}$.
	Since $G_{S,:}^{(\ell)}$ is invertible, the existence of $x$ implies
	its uniqueness. On the other hand, for any row $i\in[m]\backslash S$
	the probability of the event $G_{i,:}^{(\ell)}x=g_{i}$ is equals
	to $0$. Since $G^{(k)}v=G^{(\ell)}x$ and there are finitely many possible 
	subsets $S$ as above, i.e. ${m \choose \ell}\leq m^{\ell}$, by union bound 
	it follows that $\lVert G^{(k)}v-g\rVert_{0}\geq k-\ell$.
	Therefore, it holds that $\phi\geq k$.
	
	The statement follows by noting that the total cost incurred by
	$A_{:,1:k}$ and any $Z$ is lower bounded by
	$(n-k)k+(n-k)=\left(1-k/n\right)(k+1)n$.
\end{proof}

\subsection{Basic Algorithm}\label{subsec:3.2BasicAlg}

We give an impractical algorithm that runs in $\mathrm{poly}(n^k,m)$ 
time and achieves an $O(k^2)$-approximation. To the best of our knowledge this is 
the first approximation algorithm for the \realsRankK problem
with non-trivial approximation guarantees.
\begin{thm}\label{thm_L0rankK_k^2approx} 
	Given $A\in\mathbb{R}^{m\times n}$ and $k\in[n]$ we can compute in
	$O(n^{k+1}m^{2}k^{\omega+1})$ time a set of $k$ indices 
	$J^{(k)}\subset[n]$ and a matrix 
	$Z\in\mathbb{R}^{k\times n}$ such that
	$\lVert A - A_{:,J^{(k)}}Z\rVert_{0}\leq O(k^{2}) \cdot \optk$.
\end{thm}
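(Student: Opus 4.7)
My plan is to combine the structural guarantee of Lemma~\ref{lem_SR} with an $O(k)$-approximation subroutine for column-wise $\ell_0$-regression. Lemma~\ref{lem_SR} guarantees that among the $\binom{n}{k}\le n^k$ subsets $J\subset[n]$ of size $k$ there is at least one for which $\min_Z \lVert A - A_{:,J}Z\rVert_0 \le (k+1)\optk$. Since we have no way to identify this subset a priori, the algorithm enumerates all such $J$, computes an approximately optimal $Z$ for each, and returns the best pair $(J,Z)$. For a fixed $J$, setting $U\Def A_{:,J}$, the objective decomposes column-by-column as $\sum_{j=1}^{n}\min_{z_j\in\mathbb{R}^k}\lVert A_{:,j} - U z_j\rVert_0$, so it suffices to $O(k)$-approximate the scalar regression $\min_z\lVert a - Uz\rVert_0$ per column $a = A_{:,j}$. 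Multiplying the inner $O(k)$ factor by the outer $(k+1)$ factor yields the claimed $O(k^2)$-approximation.

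For the per-column regression, I plan to use a row-anchor enumeration: for each small subset $I\subseteq[m]$ of size at most $\mathrm{rank}(U)\le k$ with $U_I$ of full column rank, solve the exact linear system $U_I z = a_I$ in $O(k^\omega)$ time and evaluate $\lVert a - Uz\rVert_0$ in $O(mk)$ time, retaining the best $z$. The justification is a structural analog of Lemma~\ref{lem_SR} applied in the row direction (on the transpose of the regression instance). An iterative peeling mirroring the recursive column peeling in the proof of Lemma~\ref{lem_SR} shows that at least one enumerated anchor $I$ produces a $z$ whose $\ell_0$-error is within $O(k)$ of the column-wise optimum. The invertibility step uses the observation, essentially already extracted at the end of the proof of Lemma~\ref{lem_SR}, that once we locate enough agreement rows to span $\mathrm{col}(U)$, the associated $k\times k$ subsystem is invertible and recovers $z$ exactly on the anchor.

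Chaining the guarantees gives the $O(k^2)$ bound, and the running time decomposes into $\binom{n}{k}$ outer enumerations, $n$ column regressions per subset, and the per-regression anchor enumeration together with the $O(k^\omega + mk)$ work per anchor, which bookkeeps into the stated $O(n^{k+1}m^{2}k^{\omega+1})$ bound. The main obstacle I anticipate is the row-direction structural argument: keeping the anchor enumeration short enough to be consistent with the $m^2$ factor in the target runtime while still certifying an $O(k)$ approximation. My intended adaptation of the budgeted decomposition of Lemma~\ref{lem_SR} accumulates at most $k$ peeling rounds, each charging at most $\optk$ mass to a sub-instance carved out by the chosen anchor rows, so that the total error telescopes to $O(k)\cdot\optk$ and the number of anchors to test remains polynomial in $m$ for fixed $k$.
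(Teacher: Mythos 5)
Your outer framework matches the paper exactly: invoke Lemma~\ref{lem_SR} to guarantee a subset $J^*$ of $k$ columns achieving a $(k+1)$-approximation, enumerate all $\binom{n}{k}$ subsets, decompose the regression column-by-column, approximate each column regression to within a factor $O(k)$, and multiply the two factors to get $O(k^2)$. Where you diverge is the inner subroutine, and that is where the gap lies.

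The paper does not construct an $\ell_0$-regression solver from scratch; it invokes Theorem~\ref{thm_kAppxAlg} (Alon--Panigrahy--Yekhanin~\cite{APY09}, building on Berman--Karpinski~\cite{bk02}) as a black box, which returns a $k$-approximate minimizer of $\min_x \lVert Ux - b\rVert_0$ in $O(m^2 k^{\omega+1})$ time. That $m^2$ is exactly what makes the total runtime $O(n^{k+1}m^2 k^{\omega+1})$. Your proposed replacement---enumerate all full-column-rank anchor sets $I \subseteq [m]$ of size at most $k$, solve $U_I z = a_I$, and keep the best---requires testing on the order of $\binom{m}{k} \approx m^k$ anchors, yielding per-column work on the order of $m^k(k^\omega + mk)$, i.e.\ a total bound closer to $n^{k+1} m^{k+1} \poly(k)$. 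You flag this yourself as ``the main obstacle,'' but the resolution you sketch (a row-direction analog of Lemma~\ref{lem_SR}'s peeling) does not close it: the peeling argument in Lemma~\ref{lem_SR} is existential and is carried out with reference to the unknown optimal $(U,V)$, so it certifies that \emph{some} anchor of bounded size works but does not shrink the set of anchors one must \emph{search} over. Nothing in that argument reduces the enumeration from $m^k$ to $m^2$. Note also that if one truly enumerated all full-rank agreement anchors one would typically recover the per-column optimum exactly (not merely within $O(k)$), which is a signal that the exhaustive enumeration and the $m^2$-time $k$-approximation are genuinely different algorithms; the latter trades exactness for efficiency. To make the proof fit the stated runtime you need to either cite the \cite{APY09}/\cite{bk02} subroutine as the paper does, or reproduce an algorithm with the same $O(m^2 k^{\omega+1})$ guarantee---the naive anchor enumeration is not it.
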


We use as a subroutine the algorithm of Berman and Karpinski~\cite{bk02} 
(attributed also to Kannan in that paper) which given a matrix $U$ and a vector $b$
approximates $\min_x \|Ux-b\|_0$ in polynomial time.
Specifically, we invoke in our algorithm the following variant of this result 
established by Alon, Panigrahy, and Yekhanin~\cite{APY09}.

\begin{thm}\label{thm_kAppxAlg} \cite{APY09} 
	There is an algorithm that given a matrix $A\in\mathbb{R}^{m\times k}$
	and a vector $b\in\mathbb{R}^{m}$, outputs in time $O(m^{2}k^{\omega+1})$
	a vector $z\in\mathbb{R}^{k}$ such that w.h.p.\ $\left\Vert Az-b\right\Vert _{0}\leq k\cdot\min_{x}\left\Vert Ax-b\right\Vert _{0}$.
\end{thm}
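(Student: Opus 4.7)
This theorem is due to Alon, Panigrahy, and Yekhanin~\cite{APY09}; I would prove it via a randomized recursive dimension-reduction strategy. First, since $\|Az - b\|_0 \le m$ holds for every $z$, whenever $\opt \ge m/k$ the zero vector already gives a $k$-approximation; so henceforth assume $\opt < m/k$. Letting $x^*$ be optimal and $S^* := \{i \in [m] : A_{i,:}x^* = b_i\}$ be the set of ``correct'' rows, one has $|S^*| \ge m - \opt > m(1 - 1/k)$, so a uniformly random row lies in $S^*$ with probability at least $1 - 1/k$.

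The core recursion samples a row $i$ and, conditional on $i \in S^*$, uses the fact that the optimum must lie on the hyperplane $H_i = \{z : A_{i,:}z = b_i\}$. By Gaussian elimination on the single equation $A_{i,:}z = b_i$ in $O(k^\omega)$ time, I parametrize $H_i$ by $k-1$ free variables; expressing the remaining rows of $A$ and entries of $b$ in these new coordinates takes $O(mk)$ time. The result is a $(k-1)$-dimensional $\ell_0$-regression instance whose optimum is at most $\opt$ and whose optimal solution is the image of $x^*$. Iterating this reduction $k - 1$ more times reaches the base case $k = 1$, which is solved exactly in $O(m)$ time by taking the mode of $\{b_i / A_{i,1}\}_{A_{i,1} \neq 0}$. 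If every one of the $k$ pivots sampled along this descent lies in the (current) correct-row set---an event of probability at least $(1 - 1/k)^k \ge 1/e$---the descent recovers $x^*$ exactly.

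For the running time and amplification, a single descent costs $O(m k^\omega)$ for the $k$ coordinate-change steps plus $O(mk)$ for a final $\|Az - b\|_0$ evaluation. Running $\Theta(m)$ independent descents and returning the candidate with smallest $\|Az - b\|_0$ reduces the failure probability to $e^{-\Omega(m)}$; on success at least one descent returns $x^*$, yielding a vector with $\opt \le k \cdot \opt$ errors, so the minimum over candidates is a $k$-approximation. The total running time is $O(m \cdot (m k^\omega + mk)) = O(m^2 k^{\omega+1})$, matching the theorem.

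The main technical obstacle will be arguing cleanly that the correct-row probability remains at least $1 - O(1/k)$ at every recursion level, even after removing previously-chosen pivots and after the coordinate changes. At level $t$ the reduced problem has $m - t$ rows and optimum at most $\opt$, so the correct-row fraction is $(m - \opt - t)/(m - t) = 1 - \opt/(m-t)$, which stays bounded below $1 - O(1/k)$ under the assumption $k = o(m)$; this assumption can be enforced without loss of generality by first row-reducing $A$ to have rank at most $\min(m,k)$ in $O(mk^\omega)$ preprocessing, which can only decrease the effective dimension and hence only help the approximation.
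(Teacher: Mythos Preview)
The paper does not prove this statement; it is quoted as a black box from \cite{APY09} (and the paper also credits Berman--Karpinski~\cite{bk02} and Kannan). So there is no ``paper's own proof'' to compare against, and your write-up is essentially a reconstruction of the Berman--Karpinski argument, which is the right approach: sample rows, restrict to the affine subspace they determine, and repeat until the system is determined; amplify by independent repetitions.

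That said, a few points need fixing before this is a proof:
\begin{itemize}
  \item The inequality $(1-1/k)^k \ge 1/e$ is false (the sequence increases to $1/e$ from below). What is true, and sufficient, is $(1-1/k)^k \ge 1/4$ for $k\ge 2$, or equivalently $(1-1/k)^{k-1}\ge 1/e$ if you only need $k-1$ good pivots.
  \item Your proposed preprocessing ``row-reduce $A$'' is not valid here: elementary row operations on $[A\,|\,b]$ change the support of $Az-b$ and hence do not preserve $\|Az-b\|_0$. The correct reduction is on the \emph{columns}: pass to a maximal linearly independent set of columns of $A$, which leaves $\{Az:z\in\mathbb{R}^k\}$ unchanged and gives effective dimension $r=\mathrm{rank}(A)\le \min(m,k)$. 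If $r=m$ then $\opt=0$ and one solves exactly; otherwise $r<m$ and your recursion applies with $k$ replaced by $r$.
  \item The arithmetic $m\cdot(mk^\omega+mk)=O(m^2k^{\omega+1})$ is off by a factor of $k$; as written you actually get $O(m^2 k^\omega)$. This is harmless for matching the stated bound (it only helps), but the equality sign is wrong.
  \item In the last paragraph you invoke ``$k=o(m)$'' to keep the per-level correct-row probability bounded. After the column reduction above you have $r\le m$, and the product of the per-level probabilities is $\binom{m-\opt}{r-1}/\binom{m}{r-1}$, which under $\opt<m/r$ is $\Omega(1)$ without any asymptotic assumption on $r$ versus $m$; you should state this directly rather than assuming $k=o(m)$.
\end{itemize}
With these corrections the argument goes through and matches the cited result.
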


\begin{proof}[Proof of Theorem~\ref{thm_L0rankK_k^2approx}]
	The existence of a subset $J^{*}$ of $k$ columns of $A$ and matrix 
	$Z^* \in\mathbb{R}^{k\times n}$
	with $\lVert A - A_{:,J^{*}}Z^* \rVert_{0}\leq(k+1)\optk$
	follows by Lemma \ref{lem_SR}. 
	We enumerate all ${n \choose k}$ subsets $J^{(k)}$ of $k$ columns.
	For each $J^{(k)}$, we split $\min_{Z}\lVert A_{:,J^{(k)}}Z-A\rVert _{0}=
	\sum_{i=1}^{n}\min_{z^{(i)}}
	\lVert A_{:,J^{(k)}}z^{(i)}-A_{:,i}\rVert _{0}$, and we run the algorithm from 
	Theorem~\ref{thm_kAppxAlg} for each column $A_{:,i}$, obtaining approximate 
	solutions $\tilde z^{(1)},\ldots,\tilde z^{(n)}$ that form a matrix $\tilde Z$. 
	Then, we return the best solution $(A_{:,J^{(k)}},\tilde Z)$. 
	To verify that this yields a $k(k+1)$-approximation, 
	note that for $J^{(k)} = J^*$ we have
	\begin{eqnarray*}
		\lVert A_{:,J^{*}}\tilde{Z}-A\rVert_{0}
		&=&\sum_{i=1}^{n}\lVert A_{:,J^{*}}\tilde{z}^{(i)}-A_{:,i}\rVert_{0}\le k\sum_{i=1}^{n}\min_{z^{(i)}}\lVert A_{:,J^{*}}z^{(i)}-A_{:,i}\rVert_{0}\\
		&=&k\cdot\min_{Z}\lVert A_{:,J^{*}}Z-A\rVert_{0}
		\le k(k+1)\cdot\optk.
	\end{eqnarray*}
	
	The time bound $O(n^{k+1} m^2 k^{\omega+1})$ is immediate from Theorem~\ref{thm_kAppxAlg}. 
	This proves the statement.
\end{proof}

\subsection{Bicriteria Algorithm}\label{subsec:3.3BiCritAlg}

Our main contribution in this section is to design a practical algorithm
that runs in $\mathrm{poly}(n,m)$ time with an exponent independent of $k$, 
if we allow for a bicriteria solution.
\begin{thm}\label{thm_manuscript} 
	Given $A\in\mathbb{R}^{m\times n}$ with $m\geq n$ and $k\in\{1,\dots,n\}$, 
	there is an algorithm that in expected $\mathrm{poly}(m,n)$ time outputs a subset of indices
	$J\subset[n]$ with $|J|=O(k\log (n/k))$ and a matrix $Z\in\mathbb{R}^{|J|\times n}$
	such that $\left\Vert A - A_{:,J}Z\right\Vert _{0}\leq O(k^{2} \log (n/k))\cdot \optk$.
\end{thm}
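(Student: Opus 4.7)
The plan is to run $T = O(\log(n/k))$ rounds of adaptive random sampling. In round $t$, the algorithm computes (via Theorem~\ref{thm_kAppxAlg}) the current column-wise residuals of the best regression using $A_{:,J_t}$, then samples $\Theta(k)$ new columns from $[n]$ with probabilities proportional to these residuals, forming $J_{t+1}$. After the final round, I output $J := J_T$ together with the matrix $\tZ$ obtained by invoking Theorem~\ref{thm_kAppxAlg} column-by-column on $A_{:,J}$ against each column of $A$. Because $|J| = O(k \log(n/k))$, Theorem~\ref{thm_kAppxAlg} runs in time polynomial in the rank, and there are only $O(\log(n/k))$ rounds with $n$ regression instances per round, the total running time is $\poly(m,n)$ with exponent independent of $k$.

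To analyze correctness I would couple the sampling with the stagewise argument in Lemma~\ref{lem_SR}. Fix the optimum factorization $A \approx UV$ and the induced SR decomposition $(S^{(\ell)}, R^{(\ell)}, T^{(\ell)}, D^{(\ell)})$. The analysis uses a potential function $\Phi_t := \min_Z \|A - A_{:,J_t} Z\|_0$, with target $\Phi_T = O((k+1) \opt^{(k)})$. In each round, either $\Phi_t$ is already within the target, or a constant fraction of columns have per-column residuals well above the average, so residual-weighted sampling hits such a ``heavy'' column with constant probability per draw. Using the averaging step of the SR proof, such a sampled heavy column is with constant probability SR-good at the current virtual stage, meaning it has per-column error on $S^{(\ell)}$ at most $2 \opt^{(k)}/|R^{(\ell)}|$. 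A batch of $\Theta(k)$ samples thus contains $\Omega(k)$ SR-good columns, effectively advancing one virtual SR stage and dropping $\Phi_t$ by a constant factor; hence $T = O(\log(n/k))$ rounds suffice.

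Combining factors: in the success event, there exists an unknown $Z^*$ with $\|A - A_{:,J} Z^*\|_0 = \Phi_T = O((k+1) \opt^{(k)})$. The $\ell_0$-regression subroutine of Theorem~\ref{thm_kAppxAlg}, applied column-by-column on $A_{:,J}$, returns $\tZ$ satisfying $\|A - A_{:,J} \tZ\|_0 \leq |J| \cdot \Phi_T = O(k \log(n/k)) \cdot (k+1) \opt^{(k)} = O(k^2 \log(n/k)) \opt^{(k)}$, as claimed.

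The main obstacle is the coupling between the observable column residuals that drive the sampling distribution and the unobservable row-column partitioning of the SR decomposition, which depends on the unknown optimum: the algorithm cannot verify SR-goodness of any sampled column nor explicitly update the virtual stage, so the proof must show that residual-weighted sampling implicitly biases toward the SR-good columns in $R^{(\ell)}$ at the current virtual stage via a careful charging argument. A secondary subtlety is the endgame when $|R^{(\ell)}|$ becomes small enough that even residual-weighted sampling may miss specific SR-good columns; in that regime one must charge the remaining contribution to $\opt^{(k)}$ against the trivial bound $|S^{(\ell)}||R^{(\ell)}|/(k+1-\ell)$ from Lemma~\ref{lem_SR}, absorbing it within the final $O(k^2 \log(n/k))$ approximation factor.
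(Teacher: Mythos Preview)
Your proposal has a genuine gap in the round-count analysis. You claim that the potential $\Phi_t := \min_Z \lVert A - A_{:,J_t}Z\rVert_0$ drops by a constant factor each round and conclude that $T = O(\log(n/k))$ rounds suffice. But $\Phi_0 = \lVert A\rVert_0$ (since $J_0 = \emptyset$) and your target is $\Theta(k\,\optk)$, so a geometric decrease would yield $T = \Theta\big(\log(\lVert A\rVert_0 / (k\,\optk))\big)$ rounds; when $\optk$ is tiny (e.g.\ $\optk=1$ for a dense rank-$k$ matrix with a single perturbed entry) this is $\Theta(\log(mn))$, giving $|J|=\Theta(k\log(mn))$ rather than $O(k\log(n/k))$. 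Separately, the assertion that a batch of $\Theta(k)$ residual-weighted samples ``advances one virtual SR stage'' does not mesh with a constant-factor potential drop: the decomposition in Lemma~\ref{lem_SR} has only $k$ stages, so that logic would instead give $T=k$ rounds and $|J|=\Theta(k^2)$. More fundamentally, neither the constant-factor drop of $\Phi_t$ nor the stage-advancement is actually established: the coupling you flag as ``the main obstacle'' (between observable approximate residuals and the unobservable SR partition) is precisely what would be needed to make either claim go through, and you leave it open. Note also that a column being ``heavy'' for the current residuals says nothing about its per-column error on $S^{(\ell)}$ relative to $\optk/|R^{(\ell)}|$; columns outside $R^{(\ell)}$ can carry arbitrarily large residual.

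The paper's proof avoids these issues by a different mechanism. It samples $2k$ columns \emph{uniformly} (not residual-weighted), guesses $\optk$ up to a factor of~$2$, and defines a \emph{verifiable} coverage test: column $i$ is nearly approximately covered if the regression of Theorem~\ref{thm_kAppxAlg} against the sampled set returns cost at most $100(k+1)^2\optk/|Q|$. Lemma~\ref{lem:cover}, which applies Lemma~\ref{lem_SR} to the random $(2k{+}1)$-subset $R\cup\{i\}$ rather than to all of $A$, shows that with probability at least $1/3$ a $1/10$ fraction of the remaining columns pass this test. The algorithm then recurses on the uncovered columns, so the \emph{number of uncovered columns} (not a cost potential) shrinks geometrically, yielding $O(\log(n/k))$ rounds independently of $\optk$; the cost bound follows by summing the coverage thresholds. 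The two ingredients your plan lacks are (i) guessing $\optk$ so that coverage can be checked, and (ii) invoking Lemma~\ref{lem_SR} on the random sample rather than trying to track its global row/column partition.
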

The structure of the proof follows a recent approximation 
algorithm~\cite[Algorithm 3]{cgklpw17} for the $\ell_p$-low rank approximation problem, for any $p \geq 1$. We note that the analysis of \cite[Theorem 7]{cgklpw17} is missing an $O(\log^{1/p} n)$ approximation factor, and na\"ively provides an $O(k\log^{1/p}n)$-approximation rather than the stated 
$O(k)$-approximation. Further, it might be possible to obtain an efficient algorithm yielding an $O(k^2\log k)$-approximation for Theorem~\ref{thm_manuscript} using unpublished techniques 
in \cite{swz18}; we leave the study of obtaining the optimal approximation factor to future work.

There are two critical differences with the proof of \cite[Theorem 7]{cgklpw17}.
We cannot use the earlier \cite[Theorem 3]{cgklpw17} which shows that any 
matrix $A$ contains $k$ columns which provide an $O(k)$-approximation 
for the $\ell_p$-low rank approximation problem, since that proof requires
$p \geq 1$ and critically uses scale-invariance, which does not hold for $p = 0$. 
Our combinatorial argument in Lemma~\ref{lem_SR} seems fundamentally
different than the maximum volume submatrix argument in \cite{cgklpw17} 
for $p \geq 1$. 

Second, unlike for $\ell_p$-regression for $p \geq 1$, 
the $\ell_0$-regression problem $\min_x \|Ux-b\|_0$ given a matrix $U$ and vector $b$ is
not efficiently solvable since it corresponds to a nearest codeword problem,
which is NP-hard~\cite{Alekhnovich11a}.
Thus, we resort to an approximation algorithm for $\ell_0$-regression, 
based on ideas for solving the nearest codeword problem in \cite{APY09,bk02}.

Note that $\optk \leq \nnz{A}$. Since there are only $mn+1$ possibilities
of $\optk$, we can assume we know $\optk$ and we can run the 
Algorithm~\ref{alg:randomcolumns} below for each such possibility, 
obtaining a rank-$O(k \log n)$ solution,
and then outputting the solution found with the smallest cost.

\begin{algorithm}[t]
	\SelectColumns($A, k$)
	
	$\quad$ ensure $A$ has at least $k \log (n/k))$ columns
	\smallskip\smallskip
	
	$\,\,\,$1. \textbf{If} the number of columns of matrix $A$ is less than or equal to $2k$
	
	$\,\,\,$2. $\quad$ \textbf{Return} all the columns of $A$
	
	$\,\,\,$3. \textbf{Else} 
	
	$\,\,\,$4. $\quad$ \textbf{Repeat}
	
	$\,\,\,$5. $\quad$ $\quad$ Let $R$ be a set of $2k$ uniformly random columns of $A$
	
	$\,\,\,$6. $\quad$ \textbf{Until} at least $1/10$ fraction of columns of $A$ 
	are nearly approximately covered
	\smallskip
	
	$\,\,\,$7. $\quad$ Let $A_{\overline{R}}$ be the columns of $A$ not
	nearly approximately covered by $R$
	
	$\,\,\,$8. $\quad$ \textbf{Return} $R \cup \SelectColumns(A_{\overline{R}}, k)$
	
	\caption{Bicriteria Algorithm: Selecting $O(k \log (n/k))$ Columns}
	\label{alg:randomcolumns}
\end{algorithm}

This can be further optimized by forming instead $O(\log(mn))$ guesses of $\optk$. 
One of these guesses is within a factor of $2$ from the true value of $\optk$, 
and we note that the following argument only needs to know $\optk$ up to a factor of $2$. 

We start by defining the notion of approximate coverage, which is different than the corresponding notion in \cite{cgklpw17} for
$p \geq 1$, due to the fact that $\ell_0$-regression cannot be efficiently solved. Consequently, approximate coverage for $p = 0$ cannot be efficiently tested.
Let $Q\subseteq [n]$ and $M=A_{:,Q}$ be an $m\times |Q|$ submatrix of $A$. 
We say that a column $M_{:,i}$ is
$(S,Q)$-\emph{approximately covered} by a submatrix $M_{:,S}$ of $M$, if $|S|=2k$ and 
\begin{equation}
\min_x \|M_{:,S}\cdot x-M_{:,i}\|_0 \leq \frac{100(k+1) \optk}{|Q|}.
\end{equation}

\begin{lem}(Similar to \cite[Lemma 6]{cgklpw17}, but using Lemma \ref{lem_SR})\label{lem:cover}
	Let $Q\subseteq [n]$ and $M=A_{:,Q}$ be a submatrix of $A$.
	Suppose we select a subset $R$ of $2k$ uniformly random columns of $M$. 
	Then with probability at least $1/3$, at least a $1/10$ fraction of the columns of $M$
	are $(R,Q)$-approximately covered.
\end{lem}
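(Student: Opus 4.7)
I would follow the strategy of the analogous \cite[Lemma~6]{cgklpw17} for $\ell_p$-low rank approximation, replacing the $\ell_p$ structural result with our Lemma~\ref{lem_SR} and handling the lack of scale invariance of $\ell_0$. Write $M := A_{:,Q}$. Since $M$ is a column submatrix of $A$, $\optk(M) \leq \optk$, so Lemma~\ref{lem_SR} applied to $M$ yields a set $J^* \subseteq [|Q|]$ of size $k$ and a matrix $Z^* \in \mathbb{R}^{k \times |Q|}$ with $\|M - M'\|_0 \leq (k+1)\optk$, where $M' := M_{:,J^*} Z^*$. Define the per-column errors $e_j := \|M_{:,j} - M'_{:,j}\|_0$, so that $\sum_j e_j \leq (k+1)\optk$.

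Next, I would sample $R$ of $2k$ uniformly random columns of $M$ and track two events. Event $\mathcal{E}_1$: the submatrix $M'_{:,R}$ attains the full rank $r \leq k$ of $M'$; since $R$ has size $2k$ while $M'$ has rank $\le k$, this holds with constant probability by a routine combinatorial argument, and on this event there is a basis $R^* \subseteq R$ with $|R^*| \leq k$ so that for every $j$ one can choose $\beta_j \in \mathbb{R}^{2k}$ supported on $R^*$ with $M'_{:,R}\beta_j = M'_{:,j}$. Event $\mathcal{E}_2$: $\sum_{r \in R} e_r$ is close to its expectation $(2k/|Q|)\sum_j e_j \leq 2k(k+1)\optk/|Q|$, which follows by Markov's inequality.

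Conditioning on $\mathcal{E}_1 \cap \mathcal{E}_2$, for each $j$ I would decompose
\[
M_{:,R}\beta_j - M_{:,j} \;=\; (M_{:,R} - M'_{:,R})\beta_j \;+\; (M'_{:,j} - M_{:,j}),
\]
and use the $\ell_0$ support-union bound that the nonzero rows of $(M_{:,R} - M'_{:,R})\beta_j$ lie in $\bigcup_{r : \beta_{j,r} \neq 0} \mathrm{supp}(M_{:,r} - M'_{:,r})$, which gives $\|M_{:,R}\beta_j - M_{:,j}\|_0 \leq \sum_{r \in R^*} e_r + e_j$. A final Markov step on the $e_j$'s shows that most columns have $e_j$ within a constant factor of the average $(k+1)\optk/|Q|$, and combining this with the bound on $\sum_{r \in R^*} e_r$ from $\mathcal{E}_2$ yields at least a $1/10$ fraction of columns with covering error at most $100(k+1)\optk/|Q|$, as required.

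The main obstacle is the crudeness of the $\ell_0$ support-union bound $\|(M_{:,R}-M'_{:,R})\beta_j\|_0 \leq \sum_{r : \beta_{j,r}\neq 0} e_r$: it introduces a factor equal to the sparsity of $\beta_j$, and unlike the $\ell_p$ case for $p \geq 1$ there is no scale-invariant triangle inequality with controlled constants to sidestep this. Selecting the basis $R^*$ carefully (for instance, as a minimum-$e$ basis of $M'_{:,R}$) and tuning the two Markov-inequality constants so that the conditional bound on $\sum_{r \in R^*} e_r$ plus the average $e_j$ lands at the stated threshold $100(k+1)\optk/|Q|$ is the delicate part of the argument.
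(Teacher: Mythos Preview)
Your approach departs from the paper's in a way that creates a real gap. The paper does \emph{not} apply Lemma~\ref{lem_SR} once globally to $M$ and then sample $R$; instead it first draws $T=R\cup\{i\}$ of size $2k+1$ (with $i$ a single extra random column) and applies Lemma~\ref{lem_SR} to the small submatrix $M_{:,T}$. This gives $L(T)\subset T$ of size $k$ with $\min_X\|M_{:,T}-M_{:,L(T)}X\|_0\le(k+1)\eta$, where $\eta=\optk(M_{:,T})$. Conditioned on $T$, the index $i$ is uniform in $T$, so its expected share of this cost is $\tfrac{(k+1)\eta}{2k+1}$; and $\mathbb{E}_T[\eta]\le\tfrac{2k+1}{|Q|}\optk$. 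Two Markov steps plus the event $L(T)\subseteq R$ (probability $>1/2$, since given $T$ the split into $R$ and $\{i\}$ is uniform) then give $\min_x\|M_{:,R}x-M_{:,i}\|_0\le 100(k+1)\optk/|Q|$ directly---no auxiliary rank-$k$ matrix $M'$ and no support-union bound are used.

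Your route through $M'=M_{:,J^*}Z^*$ cannot hit the stated threshold. The support-union estimate $\|(M_{:,R}-M'_{:,R})\beta_j\|_0\le\sum_{r\in R^*}e_r$ contributes a term whose expectation is $|R^*|\cdot(\text{average }e_r)=\Theta(k)\cdot(k+1)\optk/|Q|$, and Markov only controls it up to a constant factor. No choice of basis $R^*\subseteq R$ removes this factor of $k$: even when every $r\in R^*$ has near-average error, summing $k$ of them gives $k$ times the average, and nothing forces the columns needed to span $M'_{:,R}$ to have $e_r=O(\text{avg}/k)$. Hence your argument delivers at best an $O(k(k+1))\optk/|Q|$ covering bound, which would propagate to an $O(k^3\log(n/k))$ final approximation rather than the claimed $O(k^2\log(n/k))$. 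Separately, your event $\mathcal{E}_1$ is not a routine consequence of $|R|=2k\ge\operatorname{rank}(M')$: if one direction of the column space of $M'$ is supported on only $o(|Q|/k)$ columns, a random $R$ misses it with probability $1-o(1)$.
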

\begin{proof}
	To show this, as in \cite{cgklpw17}, 
	consider a uniformly random column index $i$ not in the set $R$.
	Let $T \Def R \cup \{i\}$, $\eta \Def \min_{\mathrm{rank}(B)=k} \nnz{M_{:,T}-B}$,
	and $\BS\Def\arg\min_{\mathrm{rank}(B)=k} \nnz{M-B}$. 
	Since $T$ is a uniformly random subset of $2k+1$ columns of $M$, we have
	\begin{eqnarray*}
		\Ex_{T}\eta
		& \leq &
		\Ex_{T} \nnz{M_{:,T} - \Bs{:,T}}
		= \sum_{T\in{|Q| \choose 2k+1}}
		\sum_{i\in 	T}\nnz{M_{:,i} - \Bs{:,i}} Pr\left[T\right] \\
		& = & \sum_{i\in Q}\frac{{|Q|-1 \choose |T-1|}}{{|Q| \choose |T|}}
		\nnz{M_{:,i} - \Bs{:,i}}
		= \frac{(2k+1)\optk_M}{|Q|} \leq \frac{(2k+1)\optk}{|Q|}.
	\end{eqnarray*}
	Then, by a Markov bound, we have $\Pr[\eta \leq \tfrac{10(2k+1)\optk}{|Q|}]\geq9/10$. 
	Let $\mathcal{E}_1$ denotes this event.
	
	Fix a configuration $T=R\cup\{i\}$ and let $L(T)\subset T$ be the subset guaranteed by
	Lemma~\ref{lem_SR} such that $|L(T)|=k$ and 
	\[
	\min_{X}\nnz{M_{:,L(T)}X-M_{:,T}}\leq\left(k+1\right)\min_{\mathrm{rank}(B)=k}\nnz{M_{:,T}-B}.
	\]
	Notice that
	\[
	\Ex_{i}\left[\min_{x}\nnz{M_{:,L(T)}x-M_{:,i}}\ |\ 	T\right]=\frac{1}{2k+1}\min_{X}\nnz{M_{:,L(T)}X-M_{:,T}},
	\]
	and thus by the law of total probability we have
	\[
	\Ex_{T}\left[\min_x \|M_{:,L(T)}x-M_{:,i}\|_0\right] \leq \frac{(k+1) \eta}{2k+1}.
	\]
	Let $\mathcal{E}_2$ denote the event that 
	$\min_x \|M_{:,L}x-M_{:,i}\|_0 \leq \tfrac{10(k+1)\eta}{2k+1}$.
	By a Markov bound, $\Pr[\mathcal{E}_2] \geq 9/10$.
	
	Further, as in \cite{cgklpw17}, let $\mathcal{E}_3$ be the event that 
	$i \notin L$.
	Observe that there are ${k+1\choose k}$ ways to choose a subset $R'\subset T$ 
	such that $|R'|=2k$ and $L\subset R'$. 
	Since there are ${2k+1 \choose 2k}$ ways to choose $R'$, it follows that
	$\Pr[L\subset R\ | \ T]=\tfrac{k+1}{2k+1}>1/2$.
	Hence, by the law of total probability, we have $\Pr[\mathcal{E}_3]>1/2$.
	
	As in \cite{cgklpw17}, 
	$\Pr[\mathcal{E}_1 \wedge \mathcal{E}_2 \wedge \mathcal{E}_3] > 2/5$,
	and conditioned on $\mathcal{E}_1 \wedge \mathcal{E}_2 \wedge \mathcal{E}_3$,
	\begin{equation}
	\min_x \|M_{:,R}x-M_{:,i}\|_0 \leq \min_x \|M_{:,L}x-M_{:,i}\|_0 \leq \frac{10(k+1) \eta}{2k+1} \leq \frac{100(k+1)\optk}{|Q|},
	\end{equation}
	where the first inequality uses that $L$ is a subset of $R$ given $\mathcal{E}_3$, and so the regression cost cannot decrease, while the
	second inequality uses the occurrence of $\mathcal{E}_2$ and the final inequality uses the occurrence of $\mathcal{E}_1$.
	
	As in \cite{cgklpw17}, if $Z_i$ is an indicator random variable indicating 
	whether $i$ is approximately covered by $R$, and $\tZ = \sum_{i\in Q} Z_i$, then
	$\Ex_R [\tZ] \geq \tfrac{2|Q|}{5}$ and $\Ex_R [|Q|-\tZ] \leq \tfrac{3|Q|}{5}$. 
	By a Markov bound, it follows that $\Pr[|Q|-\tZ \geq \tfrac{9|Q|}{10}] \leq \tfrac{2}{3}$.
	Thus, probability at least $1/3$, at least a $1/10$ fraction of the columns of $M$ are $(R,Q)$-approximately covered.
\end{proof}

Given Lemma~\ref{lem:cover}, we are ready to prove Theorem~\ref{thm_manuscript}.
As noted above, a key difference with the corresponding \cite[Algorithm 3]{cgklpw17} 
for $\ell_p$ and $p \geq 1$, is that we cannot efficiently test
if the $i$-th column is approximately covered by the set $R$.
We will instead again make use of Theorem~\ref{thm_kAppxAlg}.

\begin{proof}[Proof of Theorem~\ref{thm_manuscript}]
	The computation of matrix $Z$ force us to relax the notion of $(R,Q)$-approximately covered 
	to the notion of $(R,Q)$-nearly-approximately covered as follows: 
	we say that a column $M_{:,i}$
	is $(R,Q)$-\emph{nearly-approximately covered} if, 
	the algorithm in Theorem~\ref{thm_kAppxAlg} returns a vector $z$ such that 
	\begin{equation}\label{eq:NAC}
	\|M_{:,R}z-M_{:,i}\|_0 \leq \frac{100(k+1)^2\optk}{|Q|}.
	\end{equation}
	By the guarantee of Theorem~\ref{thm_kAppxAlg}, if $M_{:,i}$ is $(R,Q)$-approximately
	covered then it is also with probability at least $1-1/\poly(mn)$ 
	$(R,Q)$-nearly-approximately covered.
	
	Suppose Algorithm~\ref{alg:randomcolumns} makes $t$ iterations and
	let $A_{:,\cup_{i=1}^{t}R_{i}}$ and $Z$ be the resulting solution.
	We bound now its cost. 
	Let $B_{0}=[n]$, and consider the $i$-th iteration of Algorithm~\ref{alg:randomcolumns}. 
	We denote by $R_i$ a set of $2k$ uniformly random columns 
	of $B_{i-1}$, by $G_{i}$ a set of columns that is $(R_{i},B_{i-1})$-nearly-approximately covered,
	and by $B_{i}=B_{i-1}\backslash \{G_{i}\cup R_{i}\}$ a set of 
	the remaining columns. By construction, $|G_{i}|\geq|B_{i-1}|/10$ and 
	$$|B_{i}|\leq\frac{9}{10}|B_{i-1}|-2k<\frac{9}{10}|B_{i-1}|.$$
	Since Algorithm~\ref{alg:randomcolumns} terminates when $B_{t+1}\leq 2k$, we have
	$$2k<|B_{t}|<\left(1-\frac{1}{10}\right)^{t}|B_{0}|=\left(1-\frac{1}{10}\right)^{t}n,$$
	and thus the number of iterations $t\leq 10\log(n/2k)$.
	By construction, $|G_{i}|=(1-\alpha_{i})|B_{i-1}|$ for some $\alpha_{i}\leq9/10$,
	and hence
	\begin{equation}\label{eq:sum}
	\sum_{i=1}^{t}\frac{|G_{i}|}{|B_{i-1}|}\leq t\leq 10\log\frac{n}{2k}.
	\end{equation}
	Therefore, the solution cost is bounded by
	\begin{eqnarray*}
		&&\nnz{A_{:,\cup_{i=1}^{t}R_{i}}Z-A}=\sum_{i=1}^{t}\sum_{j\in G_{i}}\nnz{A_{:,R_{i}}z^{(j)}-A_{:,j}}\\&\overset{\text{Lem.}\ref{lem:cover}}{\leq}&\sum_{i=1}^{t}\sum_{j\in G_{i}}k\cdot\min_{x^{(j)}}\nnz{A_{:,R_{i}}x^{(j)}-A_{:,j}}\overset{\eqref{eq:NAC}}{\leq}\sum_{i=1}^{t}\sum_{j\in G_{i}}\frac{100\left(k+1\right)^{2}\optk}{|B_{i-1}|}\\&=&100\left(k+1\right)^{2}\optk\cdot\sum_{i=1}^{t}\frac{|G_{i}|}{|B_{i-1}|}\overset{\eqref{eq:sum}}{\leq} O\left(k^{2}\cdot\log\frac{n}{2k}\right)\cdot\optk.
	\end{eqnarray*}
	
	By Lemma \ref{lem:cover}, the expected number of iterations of selecting a set $R_{i}$ such that
	$|G_{i}|\geq1/10|B_{i-1}|$ is $O(1)$.
	Since the number of recursive calls $t$ is bounded by $O(\log(n/k))$, it follows by
	a Markov bound that Algorithm~\ref{alg:randomcolumns} chooses $O(k \log(n/k))$ columns in total.
	Since the approximation algorithm of Theorem~\ref{thm_kAppxAlg} runs in polynomial time, 
	our entire algorithm has expected polynomial time.
\end{proof}

\section{Algorithm for Reals $\ell_{0}$-Rank-$1$}

Given a matrix $A\in\mathbb{R}^{m\times n}$, the \realsRankOne problem 
asks to find a matrix $A^{\prime}\in\mathbb{R}^{m\times n}$  with rank $1$,
i.e. $A^{\prime}=uv^T$ for some vectors $u\in\R^{m}$ and $v\in\R^{n}$,
such that the difference between $A$ and $A^{\prime}$ measured 
in $\ell_{0}$-distance is minimized. We denote the optimum value by
\begin{equation}\label{eq:l0-Rank-1}
	\optone\Def 
	\min_{\mathrm{rank}(A^{\prime})=1}\left\Vert A - A^{\prime}\right\Vert_{0}=
	\min_{u\in\mathbb{R}^{m},\:v\in\mathbb{R}^{n}} \nnzs{A-uv^{T}}.
\end{equation}
In the trivial case when $\optone=0$, there is an optimal algorithm that runs 
in time $O(\nnz{A})$ and finds the exact rank-1 decomposition $uv^T$ of a matrix $A$.
In this work, we focus on the case when $\optone\geq1$. We show that Algorithm~\ref{alg_Rand2}
yields a $(2+\eps)$-approximation factor and runs in nearly linear time in $\nnz{A}$, 
for any constant $\eps > 0$.
Furthermore, a variant of our algorithm even runs in sublinear time, if $\nnz{A}$ is large and 
\begin{equation}\label{eq:psi}
	\psi \Def \optone / \nnz{A}
\end{equation}
is not too small. In particular, we obtain time $o(\nnz{A})$ when 
$\optone \ge (\epsilon^{-1}\log (mn))^{4}$ and $\nnz{A} \ge n (\epsilon^{-1}\log (mn))^{4}$.

\begin{thm}\label{thm_ImprRank1}
There is an algorithm that, given $A\in\mathbb{R}^{m\times n}$ 
with column adjacency arrays and $\optone \geq1$, 
and given $\epsilon\in(0,0.1]$,	runs w.h.p.\ in time
\[
	O\left(\Big(
	\frac{n\log m}{\eps^{2}}+
	\min\Big\{\nnz{A},\ n+\psi^{-1}\frac{\log n}{\eps^{2}}\Big\}\Big)
	\frac{\log^{2}n}{\eps^{2}}\right)
\]
and outputs a column $A_{:,j}$ and a vector $z\in\R^n$ such that w.h.p.\
$\lVert A-A_{:,j}\cdot z^{T}\rVert{}_{0}\leq(2+\epsilon)\optone$.
The algorithm also computes an estimate $Y$ satisfying w.h.p.\ 
$(1-\eps)\optone \le Y \le (2+2\eps) \optone$. 	
\end{thm}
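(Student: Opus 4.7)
The plan is to build a sublinear-time algorithm that generates a short list of candidate columns of $A$, estimates the cost of using each as the vector $u$ in a decomposition $uv^T$, and returns the best candidate together with its corresponding $v$. The plan rests on a structural refinement of the classical $2$-approximation: fixing an optimal factorization $A \approx u^{*}(v^*)^{T}$ and setting $S^* = \{j : v^*_j \neq 0\}$, $\delta_j = \|A_{:,j} - v^*_j u^*\|_0$, we have $\sum_{j \in S^*}\delta_j \leq \optone$. A Markov argument shows that an $\Omega(\epsilon)$-fraction of $j \in S^*$ satisfy $\delta_j \leq (1+\epsilon)\optone/|S^*|$; for any such $j$, plugging $v_i := v^*_i / v^*_j$ gives
\[
\|A - A_{:,j} v^T\|_0 \leq \optone + |S^*|\delta_j \leq (2+\epsilon)\optone,
\]
and the optimal $v$ for $u = A_{:,j}$ can only be better. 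The algorithm's job is therefore to find any such column and its accompanying $v$.

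\textbf{Candidate generation.} Since $\|A_{:,j}\|_0 \geq \|u^*\|_0 - \delta_j$ whenever $j \in S^*$, columns with small $\delta_j$ are necessarily heavy, so sampling columns proportionally to $\|A_{:,j}\|_0$ is biased toward $S^*$. This is realized by sampling uniformly random non-zero entries of $A$ via Lemma~\ref{lem:samplingnonzeroentries}; drawing $O(\log^2 n/\epsilon^2)$ such samples yields w.h.p.\ a candidate set containing at least one column matching the refined criterion above. The cost of this phase is either $O(\nnz{A})$ if we scan the matrix once to set up the sampler, or only $O(n + \psi^{-1}\log n/\epsilon^2)$ if we instead rejection-sample from the non-zero pool after $O(n)$ preprocessing, which explains the $\min$ in the stated runtime.

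\textbf{Cost estimation and construction of $z$.} For a fixed candidate $u = A_{:,j}$ the cost decomposes as $\sum_i \min_{v_i}\|A_{:,i} - v_i u\|_0$, and the per-column optimum equals $\min\bigl(\|A_{:,i}\|_0,\ Z_{u,i} + \|u\|_0 - \max_c M_{u,i}(c)\bigr)$, with $Z_{u,i} = |\{k: u_k = 0,\ A_{k,i}\neq 0\}|$ and $M_{u,i}(c)=|\{k: u_k\neq 0,\ A_{k,i} = c u_k\}|$. I would estimate these quantities by drawing $O(\log m/\epsilon^2)$ rows uniformly from the support of $u$ (via the column adjacency array), reading the entire row $A_{k,:}$ in $O(n)$ time each, and maintaining, for every column $i$, an empirical histogram of the observed ratios $A_{k,i}/u_k$. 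Chernoff plus a union bound over the $n$ columns gives $(1\pm\epsilon)$-accurate mode counts whenever the mode carries non-negligible mass; setting $z_i$ to the empirical mode (or to $0$ when the zero alternative is cheaper) yields both the output vector $z$ and the certificate $Y$. One candidate is processed in $O(n\log m/\epsilon^2)$ time, and we return the candidate of smallest $Y$.

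\textbf{Main obstacle.} The delicate step is the cost estimator: obtaining a uniform $(1\pm\epsilon)$ guarantee across all $n$ per-column costs from only $\tilde O(n)$ entry reads, and, crucially, guaranteeing that the explicit $z$ assembled from empirical modes inherits the $(2+\epsilon)\optone$ bound rather than just the scalar estimator $Y$. Columns whose true mode is rare make per-column concentration fail and force a Markov-type aggregate control, which is what drives both the extra $\log^2 n/\epsilon^2$ factor in the running time and the asymmetric window $(1-\epsilon,\ 2+2\epsilon)$ in the estimate $Y$; balancing the sampling budget between candidate generation ($Y$-term) and per-candidate estimation ($X$-term) to land exactly at the claimed runtime is the other point where care is required.
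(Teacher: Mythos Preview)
Your structural refinement (the Markov argument on $\delta_j$) and your construction of $z$ via sampling $O(\eps^{-2}\log m)$ rows from the support of $u$ are both essentially what the paper does (the latter is exactly the paper's Lemma~4.3). The candidate-generation step, however, is genuinely different: the paper partitions columns into $O(\log n)$ weight classes and samples $O(\eps^{-2}\log n)$ columns \emph{uniformly} from each class (Algorithm~2), whereas you sample proportionally to $\|A_{:,j}\|_0$. Your scheme can be made to work, but the one-line justification ``heavy columns are biased toward $S^*$'' is not enough: when $\psi$ is close to $\tfrac12$ the good columns can carry only an $\Theta(\eps^2)$ fraction of the total weight, and you must separately argue that in the regime $\psi \ge 1/(2+\eps)$ the trivial output $z=0$ already gives a $(2+\eps)$-approximation. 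The paper's weight-class decomposition sidesteps this by guaranteeing an $\Omega(\eps)$ density of good columns inside \emph{some} class, independent of $\psi$.

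The real gap is in the cost-estimation step and, correspondingly, in your runtime accounting. The rows you sample lie entirely in $\mathrm{supp}(u)$, so your empirical histograms yield information about $M_{u,i}(c)$ but give you \emph{nothing} about $Z_{u,i}=|\{k:u_k=0,\,A_{k,i}\ne 0\}|$; hence they cannot produce a $(1\pm\eps)$-estimate of $\|A-A_{:,j}z^T\|_0$, and your certificate $Y$ is not well-defined. In the paper this is a completely separate procedure (Theorem~4.4 / Lemma~4.6): one samples uniformly from the nonzero entries of $A$ and of $B=A_{:,j}z^T$ and checks whether $A_{i,\ell}\ne B_{i,\ell}$ at the sampled position, yielding an unbiased estimator of $\|A-B\|_0/(\|A\|_0+\|B\|_0)$; feeding this into the Dagum--Karp--Luby--Ross stopping rule gives the $(1\pm\eps)$ estimate in time $O(n+\eps^{-2}\psi^{-1}\log n)$, or alternatively an exact count in $O(\nnz{A})$. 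This is precisely where the term $\min\{\nnz{A},\,n+\eps^{-2}\psi^{-1}\log n\}$ in the theorem comes from---you misattribute it to candidate generation, which in fact costs only $O(n)$ for preprocessing the alias sampler plus $O(1)$ per draw.
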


In fact, our analysis of Theorem~\ref{thm_ImprRank1} directly applies to the \probOne problem,
for the definition see Section~\ref{sec:BinL0R1withSmallOpt}, 
and yields as a special case the following result
(which we use to prove Theorem~\ref{thm_paramApproxAlgCombined} in Section~\ref{sec:BinL0R1withSmallOpt}).

\begin{thm}\label{thm:BinaryRankOneThereeApprox}
	Let $\opt = \min_{u\in\{ 0,1\}^m,\,v\in\{ 0,1\}^n} \nnzs{A-u\cdot v^{T}}$.
	Given a binary matrix $A\in\{0,1\}^{m\times n}$ with column adjacency arrays and 
	$\opt \geq 1$, and given $\epsilon\in(0,0.1]$,	we can compute w.h.p.\ in time
	\[
		O\left(\Big(
		\frac{n\log m}{\eps^{2}}+\min\Big\{\nnz{A},\ n+\psi^{-1}\frac{\log n}{\eps^{2}}\Big\}
		\Big)\frac{\log^{2}n}{\eps^{2}}\right)
	\]
	a column $A_{:,j}$ and a binary vector $z\in\{0,1\}^{n}$ such that w.h.p.\
	$\lVert A-A_{:,j}\cdot z^{T}\rVert{}_{0}\leq(2+\epsilon)\opt$.
	Further, we can compute an estimate $Y$ such that w.h.p.\ 
	$(1-\eps)\opt \le Y \le (2+2\eps) \opt$. 
\end{thm}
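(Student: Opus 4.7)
The plan is to argue that the algorithm proving Theorem~\ref{thm_ImprRank1} applies verbatim to a binary input and automatically produces a binary output vector $z$ while still giving a $(2+\eps)$-approximation, now measured against the binary optimum $\opt \ge \optone$ rather than $\optone$. Three observations are needed, which I would set up in order.

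First, I would establish a binary analog of the structural factor-$2$ hook behind Theorem~\ref{thm_ImprRank1}: some column of $A$, paired with an appropriate binary coefficient vector, already achieves cost at most $2\opt$. Let $(u^*,v^*)$ be binary optimizers and let $S=\{i:v^*_i=1\}$. Averaging $\sum_{i\in S}\nnzs{A_{:,i}-u^*}\le\opt$ produces some $j^*\in S$ with $\nnzs{A_{:,j^*}-u^*}\le\opt/|S|$; substituting $A_{:,j^*}$ for $u^*$ alters the reconstruction only on rows where the two columns disagree and only across the $|S|$ indices of $S$, giving $\nnzs{A-A_{:,j^*}(v^*)^T}\le 2\opt$.

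Next, I would observe that once a binary column $u=A_{:,j}$ is fixed, the inner minimization $\min_{v_i\in\mathbb{R}}\nnzs{A_{:,i}-u\cdot v_i}$ is always attained at some $v_i\in\{0,1\}$: for $v_i\notin\{0,1\}$ every row $k$ with $u_k=1$ contributes an error (since $A_{k,i}\in\{0,1\}\ne v_i$), whereas $v_i\in\{0,1\}$ is never worse on $\mathrm{supp}(u)$ and identical off it. So the real-valued column-by-column optimization performed inside the algorithm coincides with the binary one.

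Finally, I would inspect the per-coordinate estimator of Theorem~\ref{thm_ImprRank1}, which is essentially a sampled mode over the ratios $A_{k,i}/A_{k,j}$; on binary $A$ these ratios already lie in $\{0,1\}$, so the estimator returns binary values. As a safety net one may also replace each output $z_i$ by $\arg\min_{b\in\{0,1\}}\nnzs{A_{:,i}-A_{:,j}b}$, which by the previous step cannot increase the cost. Combining the three points, the output satisfies $\nnzs{A-A_{:,j}z^T}\le(2+\eps)\opt$, and the estimator $Y$, which approximates the output cost to a $(1\pm\eps)$ factor, sits between $(1-\eps)\opt$ and $(2+2\eps)\opt$ since $\opt\le\nnzs{A-A_{:,j}z^T}\le(2+\eps)\opt$ (the lower bound is immediate, as a binary factorization is admissible for $\opt$). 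The running time is inherited unchanged because the binary restriction only constrains the output space, not any subroutine. The main obstacle I foresee is this last step: it requires opening up the internal estimators of the algorithm behind Theorem~\ref{thm_ImprRank1}, which is not displayed in the excerpt, and verifying that none of them relies on numerical features beyond what $\{0,1\}$ entries provide.
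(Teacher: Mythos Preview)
Your proposal is correct and matches the paper's approach, which is literally the one-line remark that ``our analysis of Theorem~\ref{thm_ImprRank1} directly applies to the \probOne problem''; you have simply fleshed out why. Two small remarks: your point~1 is not strictly needed, since $\optone\le\opt$ already gives $(2+\eps)\optone\le(2+\eps)\opt$ once points~2 and~3 establish that the output $z$ is binary; and your claim that the running time is ``inherited unchanged'' glosses over the fact that the two theorems use different $\psi$'s---but this is easily repaired by noting that in Theorem~\ref{thm:samplingPQ} the factor $\psi^{-1}$ comes from $\nnz{A-B}^{-1}$ with $B=A_{:,j}z^T$, and since $z$ is binary we have $\nnz{A-B}\ge\opt$, so the bound holds with the binary~$\psi$.
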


The rest of this section is devoted to proving Theorem~\ref{thm_ImprRank1}.
We present the pseudocode of this result in Algorithm~\ref{alg_Rand2}.

\begin{algorithm}[H]
	\textbf{Input:} $A\in\mathbb{R}^{m\times n}$ and $\epsilon\in\left(0,0.1\right)$.\\
	
	1. Partition the columns of $A$ into weight-classes 
	$\mathcal{S} = \{S^{(0)},\ldots,S^{(1+\log n)}\}$ such that
	
	$\qquad i)$ $S^{(0)}$ contains all columns $j$ with $\nnzs{A_{:,j}} = 0$, and 
	
	$\qquad ii)$ $S^{(i)}$ contains all columns $j$ with $2^{i-1} \leq\lVert A_{:,j}\rVert_{0}<2^{i}$.
	
	2. \textbf{For each} weight-class $S^{(i)}$ do:
	
	$\quad\quad2.1$ Sample a set $C^{(i)}$ of $\Theta(\epsilon^{-2}\log n)$ elements uniformly at random from $S^{(i)}$.
	
	$\quad\quad2.2$ Find a $(1+\tfrac \epsilon {15})$-approximate solution $z^{(j)}\in\mathbb{R}^{n}$
	for each column $A_{:,j}\in C^{(i)}$, i.e.
	\begin{equation}\label{eq:target}
		\big\Vert A-A_{:,j}\cdot [z^{(j)}	]^{T}\big\Vert _{0}\leq
		\left(1+\frac \epsilon {15}\right)\min_{v}\big\Vert A-A_{:,j}\cdot v^{T}\big\Vert _{0}.
	\end{equation}

	3. Compute a $(1+\tfrac \epsilon {15})$-approximation $Y_{j}$ of 
	$\lVert A-A_{:,j}\cdot [z^{(j)}]^{T}\rVert_{0}$ 
	for every $j\in \bigcup_{i\in [|\mathcal{S}|]} C^{(i)}$.
	
	4. \textbf{Return} the pair $(A_{:,j}, z^{(j)})$ corresponding to
	the minimal value $Y_{j}$.
	
	\caption{Reals $\ell_{0}$-Rank-$1$: Approximation Scheme}
	\label{alg_Rand2}
\end{algorithm}

The only steps for which the implementation details are not immediate are Steps 2.2 and 3. We will discuss them in Sections~\ref{sec:steptwowo} and \ref{sec:stepthree}, respectively.

Note that the algorithm from Theorem~\ref{thm_ImprRank1} selects a column $A_{:,j}$ and then finds a good vector $z$ such that the product $A_{:,j} z^T$ approximates $A$. We show that the approximation guarantee $2+\eps$ is essentially tight for algorithms following this pattern.

\begin{lem}\label{lem_LB_l0-Rank-1} 
	There exist a matrix $A\in\mathbb{R}^{n\times n}$
	such that $\min_{z\in\R^{n}}\lVert A-A_{:,j}\cdot z^{T}\rVert_{0}\geq2(1-1/n)\mathrm{OPT}^{(1)}$,
	for every column $A_{:,j}$.
\end{lem}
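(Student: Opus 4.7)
My plan is to let $A = J_n - I_n \in \mathbb{R}^{n\times n}$, the all-ones matrix with zeros on the diagonal. The intuition is that $J_n$ itself is a rank-$1$ matrix extremely close to $A$ in Hamming distance, differing only on the $n$ diagonal entries, whereas no single column of $A$ resembles the all-ones vector. Hence any algorithm restricted to use $A_{:,j}$ as the left factor is forced to pay roughly twice.

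I would first show $\optone = n$. The upper bound is given by the rank-$1$ matrix $J_n$, which disagrees with $A$ only on the diagonal. For the matching lower bound I would argue that every rank-$1$ matrix $B = uv^T$ satisfies $\|A - B\|_0 \ge n$ by a dichotomy. Either no row of $B$ equals the corresponding row of $A$ exactly, in which case every row of $A - B$ contributes at least one disagreement and the total is already $\ge n$. Or some row $i$ of $B$ matches row $i$ of $A$, which means $u_i v^T = \mathbf{1}^T - e_i^T$; this forces $v_i = 0$, so the entire $i$-th column of $B$ vanishes while $A_{i',i} = 1$ for all $i' \neq i$, producing $n-1$ disagreements in column $i$ alone. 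A further inspection of $B_{i',i'} = u_{i'}/u_i$ versus $A_{i',i'} = 0$ shows a second disagreement in every other row, so the total is $\ge 2(n-1) \ge n$ for $n \ge 2$.

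For the column-based cost I would compute $\min_z \|A - A_{:,j}z^T\|_0$ directly, since the objective decomposes column-by-column as $\sum_k \min_{z_k} \|(\mathbf{1} - e_k) - z_k(\mathbf{1} - e_j)\|_0$. For $k = j$ the choice $z_j = 1$ is exact. For $k \neq j$, evaluating $(1 - z_k)\mathbf{1} + z_k e_j - e_k$ at the row indices $i = j$, $i = k$, and $i \notin \{j,k\}$ and performing a three-way case analysis on $z_k \in \{0, 1, \text{other}\}$ shows the minimum is exactly $2$, attained at $z_k = 1$ (with disagreements only at rows $j$ and $k$). Summing over the $n-1$ off-columns yields $\min_z \|A - A_{:,j}z^T\|_0 = 2(n-1) = 2(1 - 1/n)\cdot \optone$, which is the claim. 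The main obstacle is the lower bound $\optone \ge n$: one must rule out creative choices of $u, v$ with many zero entries or with entries tuned to a nontrivial common ratio, and the key structural fact exploited is that each row of $A$ has a unique zero, which rigidly determines $v$ as soon as any full row of $B$ matches and in turn forces an entire column of $B$ to vanish.
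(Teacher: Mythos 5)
Your proposal is correct and takes essentially the same approach as the paper, which uses $A = I + J$ (diagonal $2$, off-diagonal $1$) instead of your $A = J - I$ (diagonal $0$, off-diagonal $1$); the structure and argument are identical, with the diagonal forcing two disagreements per off-column. Note that the lemma only needs $\optone \le n$, so your extra lower-bound argument for $\optone \ge n$ is dispensable---which is just as well, since that argument has a small slip at "$B_{i',i'} = u_{i'}/u_i$ shows a second disagreement" (this fails when $u_{i'} = 0$, though then row $i'$ is all zeros and contributes $n-2$ other fresh disagreements, so the conclusion survives).
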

\begin{proof}[Proof of Lemma~\ref{lem_LB_l0-Rank-1}]
	Let $A=I+J\in\mathbb{R}^{n\times n}$, where $I$ is an identity matrix
	and $J=\mathbf{1}\mathbf{1}^{T}$ is an all-ones matrix. Note that
	$\mathrm{OPT}^{(1)} \le n$ is achieved by approximating $A$ with
	the rank-1 matrix~$J$. On the other hand, when we choose $u=A_{:,i}$
	for any $i\in[n]$, the incurred cost on any column $A_{:,j}$, $j \ne i$,
	is $\min_{x}\lVert A_{:,j} - x A_{:,i}\rVert_{0}=2$, since there are two entries where $A_{:,i}$ and $A_{:,j}$ disagree. Hence,
	the total cost is at least $2n-2\geq(2-2/n)\mathrm{OPT}^{(1)}$.
\end{proof}

\subsection{Correctness}

We first prove the following structural result, capturing Steps 1-2.2 of Algorithm~\ref{alg_Rand2}. 

\begin{lem}\label{lem:existCj}
	Let $C^{(0)},\ldots,C^{(\log n+1)}$ be the sets constructed in Step 2.1 of Algorithm~\ref{alg_Rand2}, and let $C = C^{(0)} \cup \ldots \cup C^{(\log n+1)}$. 
	Then, w.h.p.\ $C$ contains an index $j\in[n]$ such that 
	\[
	\min_{z\in\R^{n}} \nnzs{A - A_{:,j}\cdot z^T} \le (2+\eps/2)\cdot \optone.
	\]
\end{lem}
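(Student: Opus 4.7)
The plan is to identify a set $G$ of ``cheap'' columns whose use as the rank-1 factor already gives a $(2+\eps/2)$-approximation, show that $G$ is concentrated in only $O(1)$ weight classes with density $\Omega(\eps^2)$ in at least one of them, and then invoke a standard Chernoff-type bound on the per-class samples of size $\tau = \Theta(\eps^{-2}\log n)$.

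Fix an optimal factorization $A = u^*(v^*)^T + E$ with $\|E\|_0 = \optone$, and set $r = \|u^*\|_0$, $s = \|v^*\|_0$, $\gamma_j = \|E_{:,j}\|_0$, so that $\sum_j \gamma_j = \optone$. For any $j$ with $v^*_j \neq 0$, choosing $z_i = v^*_i/v^*_j$ when $v^*_i \neq 0$ and $z_i = 0$ otherwise yields $\min_z \|A - A_{:,j} z^T\|_0 \le \optone + s\gamma_j$. A Markov estimate on $\sum_{j\in\mathrm{supp}(v^*)}\gamma_j \le \optone$ shows that the set $G := \{j \in \mathrm{supp}(v^*) : \gamma_j \le (1+\eps/3)\optone/s\}$ has size $|G|\ge s\eps/4$, and every $j \in G$ satisfies $\min_z\|A-A_{:,j}z^T\|_0 \le (2+\eps/3)\optone \le (2+\eps/2)\optone$.

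By the $\ell_0$-triangle inequality, every $j \in G$ has $\|A_{:,j}\|_0 \in [r-\rho,r+\rho]$ where $\rho := (1+\eps/3)\optone/s$. Split into two regimes. If $\nnz{A} \le (2+\eps/2)\optone$, then setting $z=0$ turns any column into a $(2+\eps/2)$-approximation and any sample works. Otherwise $rs = \nnz{u^*(v^*)^T} \ge \nnz{A}-\optone > (1+\eps/2)\optone$, hence $\rho/r < (1+\eps/3)/(1+\eps/2)$ and $r-\rho = \Omega(\eps r)$. Thus $[r-\rho,r+\rho]$ meets only $O(1)$ consecutive weight classes, and any such class $S^{(i)}$ has size at most $s + O(\optone/(\eps r)) = O(s/\eps)$: the first term counts columns in $\mathrm{supp}(v^*)$, while the second counts pure-error columns, each of weight at least $(r-\rho)/2 = \Omega(\eps r)$ and with total weight at most $\optone$. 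Consequently $\sum_i |G \cap S^{(i)}|/|S^{(i)}| \ge |G|/\max_i |S^{(i)}| = \Omega(\eps^2)$.

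The probability that $C = \bigcup_i C^{(i)}$ misses $G$ is then at most
\[
\prod_i \bigl(1 - |G \cap S^{(i)}|/|S^{(i)}|\bigr)^{\tau} \le \exp\!\bigl(-\tau \textstyle\sum_i |G \cap S^{(i)}|/|S^{(i)}|\bigr) \le \exp(-\Omega(\log n)) = n^{-\Omega(1)},
\]
proving the claim. The main obstacle is the second step: ensuring $G$ stays concentrated in $O(1)$ classes even when $r$ is comparable to $\rho$. The multiplicative gap between $(1+\eps/3)$ in the definition of $G$ and $(1+\eps/2)$ in the case split supplies exactly the slack $r-\rho = \Omega(\eps r)$ that bounds both the number of relevant classes and the density of pure-error columns within them; weaker choices of these constants would let $[r-\rho,r+\rho]$ extend down near $0$ and disperse $G$ across $\Theta(\log n)$ classes, breaking the density lower bound.
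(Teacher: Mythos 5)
Your proof is correct, but it takes a genuinely different route to the key density bound than the paper's. The paper's proof classifies each weight class $S^{(i)}$ by whether at least a $1/3$-fraction of its columns lie in $\mathrm{supp}(v)$; the ``sparse'' classes ($\mathcal{W}$) are handled separately by arguing that zeroing them out is already a $2$-approximation on those columns (because the optimum pays at least a constant fraction of their mass), and the remaining classes are where the Markov/averaging step is applied. This gives a single class $S^{(i)}$ in which $G$ has density at least $\eps/9$. You instead bypass the per-class density condition entirely: you define $G$ directly via Markov on $\gamma_j$ over $\mathrm{supp}(v^*)$, use the $\ell_0$-triangle inequality to localize all of $G$ to weights in $[r-\rho,\,r+\rho]$, and then bound the size of each weight class in that window by $O(s/\eps)$ (columns in $\mathrm{supp}(v^*)$ plus pure-error columns of weight $\Omega(\eps r)$ and total budget $\optone$). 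This yields an aggregated density $\sum_i |G\cap S^{(i)}|/|S^{(i)}| \ge \Omega(\eps^2)$, which is weaker than the paper's $\Omega(\eps)$ single-class bound, but still suffices given the sample size $\Theta(\eps^{-2}\log n)$. One small inaccuracy worth noting: $[r-\rho,\,r+\rho]$ spans $O(\log(1/\eps))$ weight classes, not $O(1)$, since $(r+\rho)/(r-\rho)$ can be $\Theta(1/\eps)$; this is harmless because your final bound uses only $\max_i |S^{(i)}| = O(s/\eps)$, not the number of classes. Your approach buys a cleaner, more direct construction of $G$ and avoids the paper's case split into $\mathcal{N},\mathcal{Z},\mathcal{W}$; the paper's approach buys a tighter density $\eps/9$, which would permit a smaller sample size $\Theta(\eps^{-1}\log n)$ per class.
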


\begin{proof}
	Let $u,v$ be an optimum solution of \eqref{eq:l0-Rank-1}. 
	For the weight class $S^{(0)}$ containing all columns without nonzero entries, setting $z_c = 0$ for any $c \in S^{(0)}$ gives zero cost on these columns, no matter what column $A_{:,j}$ we picked.
	Hence, without loss of generality in the following we assume that $S^{(0)} = \emptyset$.
	
	For any $i \ge 1$, we partition the weight class $S^{(i)}$ into
	$N^{(i)},Z^{(i)}$ such that $v_{i}=0$ for every $i\in Z^{(i)}$
	and $v_{i}\neq0$ for $i\in N^{(i)}$. 
	We denote by $\mathcal{S}^{+}$ the set of weight-classes $S^{(i)}$
	with $|N^{(i)}|\geq \tfrac 13 |S^{(i)}|$. 
	Let $\mathcal{R}=\bigcup_{i\in\mathcal{S}^{+}}S^{(i)}$
	and $\mathcal{W}=[n]\backslash\mathcal{R}$. 
	We partition $\mathcal{R}=\mathcal{N}\cup\mathcal{Z}$
	such that $v_{i}=0$ for every $i\in\mathcal{Z}$ and $v_{i}\neq0$
	for $i\in\mathcal{N}$. 
	Further, using the three sets $\mathcal{N},\mathcal{Z}$ and $\mathcal{W}$
	we decompose $\optone$ into
	\begin{eqnarray*}
		\optone 
		& = & \mathrm{OPT}_{\mathcal{N}}+\mathrm{OPT}_{\mathcal{Z}}+\mathrm{OPT}_{\mathcal{W}}\\
		& = & 
		\lVert A_{:,\mathcal{N}}-u\cdot v_{\mathcal{N}}^{T}\rVert_{0}+
		\lVert A_{:,\mathcal{Z}}\rVert_{0}+\lVert A_{:,\mathcal{W}}-u
		\cdot v_{\mathcal{W}}^{T}\rVert_{0}.
	\end{eqnarray*}
	The proof proceeds by case distinction:		
	\paragraph{The set $\mathcal{Z}$:}
	For any column $A_{:,j}$ of $A$, we have 
	\begin{equation}\label{eq:optZ}
		\min_{z_{\mathcal{Z}}}\lVert A_{:,\mathcal{Z}}-A_{:,j}\cdot z_{\mathcal{Z}}^{T}\rVert_{0}\leq \lVert A_{:,\mathcal{Z}}-A_{:,j}\cdot 0\rVert_{0} =
		\lVert A_{:,\mathcal{Z}}\rVert_{0}=\mathrm{OPT}_{\mathcal{Z}}.
	\end{equation}
	
	\paragraph{The set $\mathcal{W}$:}
	Note that $\mathcal{W}$ consists of all weight classes $S^{(i)}$ with $|Z^{(i)}| > \tfrac 23 |S^{(i)}|$. For any such weight class $S^{(i)}$, the optimum cost satisfies 
	\[
	\lVert A_{:,S^{(i)}}-uv_{S^{(i)}}^{T}\rVert_{0}\geq\lVert A_{:,Z^{(i)}}\rVert_{0}\geq \frac 23 |S^{(i)}|2^{i-1} = \frac 13 |S^{(i)}| 2^i.
	\]
	Further, for any column $A_{:,j}$ of $A$, we have
	\begin{align*}
		\min_{z} \nnzs{A_{:,S^{(i)}} - A_{:,j} z^T} 
		& \le \nnzs{A_{:,S^{(i)}}} \le \nnzs{A_{:,Z^{(i)}}} + \frac 13 |S^{(i)}| 2^i \\
		& \le 2 \nnzs{A_{:,Z^{(i)}}} \le 2 \lVert A_{:,S^{(i)}}-uv_{S^{(i)}}^{T}\rVert_{0},
	\end{align*}
	and thus the total cost in $\mathcal{W}$ is bounded by
	\begin{equation}\label{eq:optW}
		\min_{z_{\mathcal{W}}}\lVert A_{:,\mathcal{W}}-A_{:,j}\cdot z_{\mathcal{W}}^{T}\rVert_{0}
		=\sum_{i\in\mathcal{W}}\min_{z_{S^{(i)}}}\lVert A_{:,S^{(i)}}-A_{:,j}\cdot z_{S^{(i)}}^{T}\rVert_{0}
		\leq2\lVert A_{:,\mathcal{W}}-uv_{\mathcal{W}}^{T}\rVert_{0}
		=2\,\mathrm{OPT}_{\mathcal{W}}.
	\end{equation}
	
	\paragraph{The set $\mathcal{N}$:} 
	By an averaging argument there is a subset $G\subseteq \mathcal{N}$ of
	size $|G|\geq \tfrac \epsilon 3 |\mathcal{N}|$ such that for every $j\in G$ 
	we have
	\[
		\lVert A_{:,j}-v_{j}\cdot u\rVert_{0}\leq \frac 1{1-\epsilon/3}\cdot \frac{\mathrm{OPT}_{\mathcal{N}}}{|\mathcal{N}|} 
		\le \left(1+\frac \epsilon 2\right)\cdot \frac{\mathrm{OPT}_{\mathcal{N}}}{|\mathcal{N}|}.
	\]	
	Let $j\in G$ be arbitrary. Furthermore, let $P^{(j)}$ be the set of all rows $i$ with $A_{i,j} = v_j \cdot u_i$, and let $Q^{(j)} = [m] \setminus P^{(j)}$. 
	By construction, we have $|Q^{(j)}|\leq(1+\tfrac \epsilon 2)\mathrm{OPT}_{\mathcal{N}}/|\mathcal{N}|$.
	Moreover, since $j \in \mathcal{N}$ we have $v_j \ne 0$, and thus we may choose $z_{\mathcal{N}} = \tfrac 1{v_j} v_{\mathcal{N}}$. This yields
	\begin{align}
		\min_{z_{\mathcal{N}}}\lVert A_{:,\mathcal{N}}-A_{:,j}\cdot z_{\mathcal{N}}^{T}\rVert_{0}
		&\leq\lVert A_{:,\mathcal{N}}-A_{:,j}\cdot\frac{1}{v_{j}}v_{\mathcal{N}}^{T}\rVert_{0}\nonumber\\
		&=\lVert A_{P^{(j)},\mathcal{N}}-u_{P^{(j)}}\cdot v_{\mathcal{N}}^{T}\rVert_{0}+\lVert A_{Q^{(j)},\mathcal{N}}-A_{:,j}\cdot\frac{1}{v_{j}}v_{\mathcal{N}}^{T}\rVert_{0}\nonumber\\
		&\le\mathrm{OPT}_{\mathcal{N}}+\left(1+\frac{\epsilon}{3}\right)
		\mathrm{OPT}_{\mathcal{N}}=\left(2+\frac{\epsilon}{3}\right)\mathrm{OPT}_{\mathcal{N}}.
		\label{eq:optN}
	\end{align}
	
	Hence, by combining \eqref{eq:optZ}, \eqref{eq:optW} and \eqref{eq:optN}, 
	it follows for any index $j\in G$ that
	\begin{align*}
		&\quad\min_{z}\lVert A-A_{:,j}\cdot z^{T}\rVert_{0}\nonumber\\
		&=
		\min_{z_{\mathcal{W}}}\lVert A_{:,\mathcal{W}}-A_{:,j}
		\cdot z_{\mathcal{W}}^{T}\rVert_{0}+
		\min_{z_{\mathcal{Z}}}\lVert A_{:,\mathcal{Z}}-A_{:,j}
		\cdot z_{\mathcal{Z}}^{T}\rVert_{0} + 
		\min_{z_{\mathcal{N}}}\lVert A_{:,\mathcal{N}}-A_{:,j}
		\cdot z_{\mathcal{N}}^{T}\rVert_{0}  \\
		&\le 2\,\mathrm{OPT}_{\mathcal{W}}+
		\mathrm{OPT}_{\mathcal{Z}}+
		\left(2+\frac{\epsilon}{2}\right)\mathrm{OPT}_{\mathcal{N}}
		\le\left(2+\frac{\epsilon}{2}\right)\optone.
	\end{align*}
	This yields the desired approximation guarantee, 
	provided that we sampled a column $A_{:,j}$ from $G$.
	We show next that whenever $\mathcal{N} \ne \emptyset$,
	our algorithm samples with high probability at least one column from $G$.
	
	Before that let us consider the case when $\mathcal{N} = \emptyset$.
	Then, since the bounds \eqref{eq:optZ} and \eqref{eq:optW} hold 
	for any column $A_{:,j}$ of $A$, the set $C$ contains only good columns.
	Thus, we may assume that $\mathcal{N} \ne \emptyset$.
	
	We now analyze the probability of sampling a column $A_{:,j}$ from $G$. 
	By construction, the set $\mathcal{N}$ is the union of all $N^{(i)}$ such that 
	$|N^{(i)}| \ge \tfrac 13 |S^{(i)}|$. As shown above, we have 
	$|G|\geq \tfrac \epsilon 3 |\mathcal{N}|$, and thus there is an index $i$ satisfying
	$|G \cap S^{(i)}| \ge \tfrac \epsilon 3 |N^{(i)}| \ge \tfrac \epsilon 9 |S^{(i)}|$. 
	Hence, when sampling a uniformly random element from $S^{(i)}$ we hit $G$ with probability 
	at least $\tfrac \epsilon 9$. 
	Since we sample $\Theta(\eps^{-2} \log n)$ elements from $S^{(i)}$, 
	we hit $G$ with high probability. This finishes the proof.
\end{proof}

\paragraph{Correctness Proof of Algorithm~\ref{alg_Rand2}: }
It remains to show that the pair $(A_{:,j},z^{j})$ with minimum estimate $Y_{j}$ 
yields a $(2+\epsilon)$-approximation to $\optone$.
By Step 3, for every column $j$ we have
\begin{equation}\label{eq:Ybounds}
	\left(1+\frac{\eps}{15}\right)^{-1}\cdot\lVert A-A_{:,j}[z^{(j)}]^{T}\rVert_{0}\le Y_{j}\le\left(1+\frac{\eps}{15}\right)\cdot\lVert A-A_{:,j}[z^{(j)}]^{T}\rVert_{0}.
\end{equation}
Since $Y_j \le Y_{j'}$ for any other column $j'$, (\ref{eq:Ybounds}) and the approximation 
guarantee of Steps 2.2 yield
\[
	\left(1+\frac{\epsilon}{15}\right)^{-1}\lVert A-A_{:,j}[z^{(j)}]^{T}\rVert_{0}\le\left(1+\frac{\epsilon}{15}\right)\lVert A-A_{:,j'}[z^{(j')}]^{T}\rVert_{0}\le\left(1+\frac{\epsilon}{15}\right)^{2}\min_{z}\big\Vert A-A_{:,j'}z^{T}\big\Vert_{0}.
\]
By Lemma~\ref{lem:existCj}, w.h.p.\ there exists a column $j' \in C$ with 
$\min_z \nnzs{A - A_{:,j'} z^T} \le (2+ \tfrac \eps 2) \optone$. 
We obtain a total approximation ratio of 
$(1+\tfrac \epsilon {15})^3 (2+ \tfrac \eps 2) \le 2+\eps$ for any error $0 < \eps \le 0.1$, 
i.e. we have $\lVert A-A_{:,j}[z^{(j)}]^{T}\rVert_{0} \le (2+\eps) \optone$.
Therefore, it holds that
\[
	(1-\eps)\optone\le\left(1+\frac{\epsilon}{15}\right)^{-1}\optone\le Y_{j}\le\left(1+\frac{\epsilon}{15}\right)(2+\eps)\optone\le(2+2\eps)\optone.
\]
This finishes the correctness proof.

\subsection{Implementing Step 2.2} \label{sec:steptwowo}

Step 2.2 of Algorithm~\ref{alg_Rand2} uses the following sublinear procedure.

\begin{algorithm}[H]
	\textbf{Input:} $A\in\mathbb{R}^{m\times n}$, $u\in\mathbb{R}^{m}$
	and $\epsilon\in\left(0,1\right)$.
	
	$\quad$ let $t\Def \Theta(\epsilon^{-2}\log m)$, $N\Def \mathrm{supp}(u)$,
	and $p\Def t/|N|$.
	\medskip
	
	1. Select each index $i\in N$ with probability $p$ and let $S$
	be the resulting set.
	
	2. Compute a vector $z\in\mathbb{R}^{n}$ such that $z_{j}=\arg\min_{r\in\mathbb{R}}\left\Vert A_{S,j}-r\cdot u_{S}\right\Vert _{0}$
	for all $j\in[n]$.
	
	3. \textbf{Return} the vector $z$.
	
	\caption{Reals $\ell_0$-Rank-1: Objective Value Estimation}
	\label{alg_Rand}
\end{algorithm}

We prove now the correctness of Algorithm~\ref{alg_Rand} and we analyze its runtime.

\begin{lem}\label{lem_RArank1} 
	Given $A\in\mathbb{R}^{m\times n}$ with $m\geq n$, $u\in\mathbb{R}^{m}$ and 
	$\epsilon\in\left(0,1\right)$ we can compute in time
	$O\left(\epsilon^{-2}n\log m\right)$ a vector 
	$z\in\mathbb{R}^{n}$ such that w.h.p. for every index $i\in[n]$ it holds that
	\[
		\lVert A_{:,i}-z_{i}\cdot u\rVert_{0}\leq
		(1+\epsilon)\min_{v_{i}\in\R}\lVert A_{:,i}-v_{i}\cdot u\rVert_{0}.
	\]
\end{lem}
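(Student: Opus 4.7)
The plan is to analyze Algorithm~\ref{alg_Rand} one column at a time, using decomposability of the $\ell_{0}$-objective. For a fixed column $i \in [n]$, writing $N := \mathrm{supp}(u)$ we have
\[
\|A_{:,i} - v u\|_0 \;=\; \|A_{N^c,i}\|_0 + |\{j \in N : A_{j,i} \ne v u_j\}|,
\]
and the first term is independent of $v$. Hence the optimal $v$ is the \emph{mode} of the ratios $\{A_{j,i}/u_j\}_{j \in N}$ (counted with multiplicity). Define $g(r) := |\{j \in N : A_{j,i} = r u_j\}|$, $f(r) := g(r)/|N|$, the true mode $r^{*}$, and $\eta_i := 1 - f(r^{*})$. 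Since $\min_{v}\|A_{:,i}-vu\|_0 \ge \eta_i |N|$, the target guarantee $\|A_{:,i} - z_i u\|_0 \le (1+\epsilon)\min_{v}\|A_{:,i} - v u\|_0$ reduces to the bound
\[
f(r^{*}) - f(z_i) \;\le\; \epsilon\, \eta_i.
\]

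I would apply concentration to the sample statistics $\hat{g}(r) := |\{l \in S : A_{l,i} = r u_l\}|$, whose per-sample expectation equals $f(r)$, and recall that the algorithm picks $z_i = \arg\max_r \hat{g}(r)$ (the sample mode of the ratios). Using Chernoff/Hoeffding with a union bound over the at most $|N| \le m$ distinct relevant values of $r$ and over the $n$ columns, and choosing a sufficiently large constant in $t = \Theta(\epsilon^{-2} \log m)$, I obtain that with high probability, for every column $i$ and every $r$,
\[
\Big|\hat{g}(r)/|S| - f(r)\Big| \;\le\; \lambda \;\le\; \epsilon/100.
\]

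The main obstacle is that this additive bound alone is too weak when $\eta_i$ is small. I handle this by case analysis on $\eta_i$. If $\eta_i \ge 1/10$: the optimality $\hat{g}(z_i) \ge \hat{g}(r^{*})$ combined with the two-sided additive bound yields $f(z_i) \ge f(r^{*}) - 2\lambda \ge f(r^{*}) - \epsilon/50$, which is at least $f(r^{*}) - \epsilon\eta_i$, as needed. If $\eta_i < 1/10$: then $f(r^{*}) > 9/10$ while every other ratio satisfies $f(r) \le \eta_i < 1/10$ (because $\sum_{r \ne r^{*}} f(r) = \eta_i$); a multiplicative Chernoff bound together with a union bound over the at most $|N|$ alternative ratios then yields $\hat{g}(r^{*}) \ge 3t/4$ and $\hat{g}(r) \le t/4$ for every $r \ne r^{*}$, so the sample mode coincides with the true mode and $z_i = r^{*}$ exactly, giving $f(z_i) = f(r^{*})$. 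Either way the required inequality holds.

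For the runtime, Chernoff on $|S|$ gives $|S| = \Theta(t) = \Theta(\epsilon^{-2} \log m)$ with high probability. For each column $i$, computing $z_i$ reduces to hashing the $|S|$ values $\{A_{l,i}/u_l : l \in S\}$ and returning the most frequent one, costing $O(|S|)$ time per column. Summing over $i \in [n]$ yields the claimed total $O(\epsilon^{-2} n \log m)$.
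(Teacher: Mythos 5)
Your proof is correct and follows essentially the same route as the paper: reduce to the support $N=\mathrm{supp}(u)$, characterize the optimum $v_i$ as the mode $r^{\star}$ of the ratios $A_{j,i}/u_j$, observe that Algorithm 4 picks the sample mode since $\nnzs{A_{S,j}-r u_S}=|S|-\hat g(r)$, and then use Chernoff concentration with a union bound over the $O(nm)$ pairs (column, ratio value) together with a case analysis on how concentrated the ratios are. The only genuine difference is the case split. The paper uses three cases on $\delta^{\star}=f(r^{\star})$: $\delta^{\star}\le\epsilon/4$ (any choice is a $(1+\epsilon)$-approximation), $\delta^{\star}\ge 1/2+\epsilon$ (sample majority recovers $r^{\star}$ exactly), and a middle regime where an additive slip $\delta\ge\delta^{\star}-\epsilon/4$ still gives $(1-\delta)\le(1+\epsilon)(1-\delta^{\star})$. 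You instead split on $\eta_i=1-f(r^{\star})$ into two cases, $\eta_i\ge 1/10$ and $\eta_i<1/10$, which is marginally cleaner and, as a small bonus, sidesteps an implicit restriction in the paper's middle case: the step $1-\delta^{\star}+\epsilon/4\le(1+\epsilon)(1-\delta^{\star})$ there requires $\delta^{\star}\le 3/4$, which is only guaranteed by the case bound $\delta^{\star}<1/2+\epsilon$ when $\epsilon<1/4$; your two-case split has no such constraint. (This gap is harmless in context since Theorem 4.1 only ever invokes the lemma with $\epsilon/15\le 0.1/15$, but your version covers the full stated range $\epsilon\in(0,1)$.) One minor technicality you should note if you write this up: $|S|$ is itself random, so the concentration of $\hat g(r)/|S|$ should be derived from separate Chernoff bounds on $\hat g(r)$ and on $|S|$ (both centered at $t\cdot(\cdot)$), as the paper does by working with $X_{G(r)}$ directly and using $\arg\max_r X_{G(r)}$ rather than a ratio.
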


\begin{proof}
	Let $N,Z$ be a partitioning of $[m]$ such that $u_{i}=0$ for
	$i\in Z$ and $u_i \neq 0$ for $i \in N$. Since $\lVert A-u\cdot z^{T}\rVert_{0}=\lVert A_{N,:}-u_{N}\cdot z^{T}\rVert_{0}+\lVert A_{Z,:}\rVert_{0}$,
	it suffices to find a vector $z$ such that for every $j\in[n]$ we have
	\begin{equation}
		\left\Vert A_{N,j}-z_{j}\cdot u_{N}\right\Vert _{0}\leq\left(1+\epsilon\right)\cdot\min_{v_{j}}\left\Vert A_{N,j}-v_{j}\cdot u_{N}\right\Vert _{0}.\label{eq:showMe}
	\end{equation}
	Let $j\in [n]$ be arbitrary. 
	For $r \in \mathbb{R}$ let $G(r) \Def \{ i \in N \,:\, A_{i,j} / u_i = r \}$ be the set of entries of $A_{N,j}$ that we correctly recover by setting $z_j = r$. 
	Note that $\nnzs{A_{N,j} - z_j \cdot u_N} = |N| - |G(z_j)|$ holds for any $z_j \in \mathbb{R}$.
	Hence, the optimal solution sets $z_j = r^{\star} \Def  \arg\max_{r \in \mathbb{R}} |G(r)|$. 
	
	Let $X_{G(r)}$ be a random variable
	indicating the number of elements selected from group $G(r)$ in Step 1 of Algorithm~\ref{alg_Rand}.
	Notice that $\mathbb{E}[X_{G(r)}]=t \cdot |G(r)|/|N|$,
	and by Chernoff bound w.h.p.\ we have 
	\begin{equation}\label{eq:ExpXGr}
		|X_{G(r)}-\mathbb{E}[X_{G(r)}]|\leq(\epsilon/8)\cdot t.
	\end{equation}
	Let $S\subseteq N$ be the set of selected indices. 
	Further, since $|S|=\sum_{\ell}X_{G(r)}$ and $\mathbb{E}[|S|]=t$, 
	by Chernoff bound we have w.h.p.\ $|S|\leq(1+\eps)|t|$.
	Observe that Step 2 of Algorithm~\ref{alg_Rand} selects 
	$z_j = \arg\max_{r\in\mathbb{R}}X_{G(r)}$, since
	$\nnzs{A_{S,j}-r\cdot u_{S}} = |S| - X_{G(r)}$. 
	We now relate $z_j$ to $r^{\star}$. The proof proceeds by case distinction 
	on $\delta^{\star} \Def  |G(r^{\star})| / |N|$.
	\medskip
	
	\textbf{Case 1:} Suppose $\delta^{\star} \leq \eps/4$. 
	Then $\nnzs{A_{N,j} - r \cdot u_N} \ge (1-\eps/4) |N|$ 
	for every $r\in\mathbb{R}$, and thus no matter which $z_j$ is selected 
	we obtain a $(1+\eps)$-approximation, since
	\[
		\nnzs{A_{N,j}-z_{j}\cdot u_{N}}\le|N|\le(1+\eps)\min_{r}\nnzs{A_{N,j}-r\cdot u_{N}}.
	\]
	
	\textbf{Case 2:} Suppose $\delta^{\star} \ge 1/2 + \eps$. 
	Then, by \eqref{eq:ExpXGr} w.h.p. we have
	\[
		X_{G(r^{\star})} \ge \Ex[X_{G(r^{\star})}] - (\eps/8) t = 
		(\delta^{\star} - \eps/8) t > (1+\eps) t/2 \ge |S| / 2,
	\]
	and thus $X_{r^{\star}}$ is maximal among all $X_r$. 
	Hence, we select the optimal $z_j = r^{\star}$.
	\medskip
	
	\textbf{Case 3:} Suppose $\eps/4 < \delta^{\star} < 1/2 + \eps$. 
	Let $z_j = r$ be the value chosen by Algorithm~\ref{alg_Rand}.
	By \eqref{eq:ExpXGr}, the event of making a mistake,
	given by $X_{G(r)}\geq X_{G(r^{\star})}$, happens when
	\begin{equation}\label{eq:GrGrstar}
		\Ex[X_{G(r)}] + (\eps/8) t \ge \Ex[X_{G(r^{\star})}] - (\eps/8) t.
	\end{equation}
	Let $\delta \Def  |G(r)| / |N|$ and note that (\ref{eq:GrGrstar}) implies 
	$\delta \ge \delta^{\star} - \eps/4$. 
	Hence, for the selected $r\neq r^{\star}$ we have
	\begin{align*}
		\nnzs{A_{N,j} - r \cdot u_N} &= 
		(1-\delta) |N| \le (1-\delta^{\star} + \eps/4) |N| \\
		& \le (1+\eps) (1-\delta^{\star}) |N| = (1+\eps) \nnzs{A_{N,j} - r^{\star} \cdot u_N}.
	\end{align*}
	
	Therefore, in each of the preceding three cases, we obtain w.h.p.\ 
	a $(1+\eps)$-approximate solution. The statement follows by the union bound.	
\end{proof}

\subsection{Implementing Step 3} \label{sec:stepthree}

In Step 3 of Algorithm~\ref{alg_Rand2}, our goal is to compute 
a $(1+\tfrac \epsilon {15})$-approximation $Y_{j}$ of $\lVert A-A_{:,j}[z^{(j)}]^{T}\rVert_{0}$,
for every $j\in \bigcup_{i\in [|\mathcal{S}|]} C^{(i)}$.

In this subsection, our main algorithmic result is the following.

\begin{thm}\label{thm:samplingPQ}
	There is an algorithm that, given $A \in \mathbb{R}^{m \times n}$ 
	with column adjacency arrays and $\optone\geq1$, 
	and given $j \in [n]$, $v \in \mathbb{R}^m$ and $\epsilon \in (0,1)$,
	outputs an estimator $Y$ that satisfies w.h.p.\
	\[
	(1-\epsilon) \nnz{A-A_{:,j}\cdot v^{T}} \le Y 
	\le (1+\epsilon)\nnz{ A-A_{:,j}\cdot v^{T}}.
	\] 
	The algorithm runs w.h.p.\ in time 
	$O(\min\{\nnz{A}, n+\epsilon^{-2} \psi^{-1}\log n\})$, where $\psi = \optone / \nnz{A}$.
\end{thm}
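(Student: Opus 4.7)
Write $u := A_{:,j}$, $V := \mathrm{supp}(v)$, $N_u := \ell_j = |\mathrm{supp}(u)|$, and let $|S| := \nnz{A - uv^T}$. A case analysis on which of $u_i, v_k, A_{i,k}$ is zero yields the set-theoretic identity
\[
|S| \;=\; \nnz{A} + N_u |V| - G - B,
\]
where $G := |\mathrm{supp}(A) \cap (\mathrm{supp}(u) \times V)|$ and $B := |\{(i,k) \in \mathrm{supp}(A) \cap (\mathrm{supp}(u) \times V) : A_{i,k} = u_i v_k\}|$. Since $\nnz{A} = \sum_k \ell_k$, $N_u$, and $|V|$ are all obtainable in $O(n)$ time from the column adjacency arrays and from $v$, the task reduces to approximating $G + B$. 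I will realize both an exact linear-time algorithm and a sublinear sampling algorithm, and return the output of whichever is faster.

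For the $O(\nnz{A})$ branch, I compute $G$ and $B$ exactly by iterating over each $k \in V$ and scanning the column adjacency array $B_k$: every nonzero entry $(i,k)$ of $A$ is tested in $O(1)$ for $u_i \neq 0$ (contributes to $G$) and additionally for $A_{i,k} = u_i v_k$ (contributes to $B$). This runs in $O(\nnz{A} + n)$ time and returns $|S|$ exactly.

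For the sublinear branch, I decompose $S = S_1 \cupdot S_2$, where $S_1 := S \cap \mathrm{supp}(A)$ satisfies $|S_1| = \nnz{A} - B$, and $S_2 := S \setminus \mathrm{supp}(A) \subseteq \mathrm{supp}(u) \times V$ satisfies $|S_2| = N_u|V| - G$. To form $\hat{|S_1|}$, I use Lemma~\ref{lem:samplingnonzeroentries} to sample $T_1$ uniformly random nonzeros of $A$ in $O(n+T_1)$ time, test each in $O(1)$ for membership in $S_1$ via random access to the three scalars $A_{i,k}, u_i, v_k$, and scale by $\nnz{A}/T_1$. To form $\hat{|S_2|}$, after $O(n)$ preprocessing to enumerate $V$ into an array, I sample $T_2$ uniform pairs from $\mathrm{supp}(u) \times V$ by drawing one index from $B_j$ and one from $V$ in $O(1)$, test $A_{i,k} = 0$ in $O(1)$, and scale by $N_u|V|/T_2$. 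The output estimator is $Y := \hat{|S_1|} + \hat{|S_2|}$.

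Applying the multiplicative Chernoff bound with error budget $\epsilon|S|/2$ on each estimator and $\log(1/\delta) = \Theta(\log n)$ yields
\[
T_1 = O\!\left(\tfrac{\nnz{A}\cdot |S_1|}{\epsilon^2 |S|^2} \log n\right), \qquad T_2 = O\!\left(\tfrac{N_u|V|\cdot |S_2|}{\epsilon^2 |S|^2} \log n\right).
\]
The bound on $T_1$ simplifies via $|S_1| \le |S|$ and $|S| \ge \optone$ to $T_1 = O(\epsilon^{-2}\psi^{-1}\log n)$. The main obstacle is the bound on $T_2$: a direct use of $|S_2| \le |S|$ only gives $T_2 = O(N_u|V|/(\epsilon^2 \optone)\log n)$, which can be as bad as $mn/(\epsilon^2\optone)$ when $N_u|V|$ is large. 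I resolve this by a case split. If $N_u|V| \le 2\nnz{A}$, then directly $T_2 = O(\epsilon^{-2}\psi^{-1}\log n)$. Otherwise, since $G \le \nnz{A}$ we have $|S_2| = N_u|V| - G \ge N_u|V|/2$, hence $|S| \ge |S_2| \ge N_u|V|/2$, which makes $N_u|V|/|S| \le 2$ and yields $T_2 = O(\epsilon^{-2}\log n)$. Either way $T_1 + T_2 = O(\epsilon^{-2}\psi^{-1}\log n)$, so the sublinear branch runs in $O(n + \epsilon^{-2}\psi^{-1}\log n)$ total. Taking the faster of the two branches gives the claimed $O(\min\{\nnz{A}, n + \epsilon^{-2}\psi^{-1}\log n\})$ bound, and the $\epsilon|S|/2 + \epsilon|S|/2 = \epsilon|S|$ additive error on $Y$ is exactly the required multiplicative $(1\pm\epsilon)$ approximation.
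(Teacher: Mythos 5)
Your decomposition is correct and your case split mirrors the paper's, but you take a genuinely different route and there is one real gap. The paper sets $B := A_{:,j}v^T$, notes that $B$ implicitly has column adjacency arrays, and reduces to a general subroutine (Lemma~\ref{lem:samplingAB_0}) that samples a \emph{single} random variable $X\in\{0,1/2,1\}$ over a weighted union of $\mathrm{supp}(A)$ and $\mathrm{supp}(B)$, with $\Ex[X]=\nnzs{A-B}/(\nnzs A+\nnzs B)$. It then invokes the stopping-rule estimator of Dagum et al.\ (Theorem~\ref{thm:AAsampleAlg}) to get a relative $(1\pm\eps)$ estimate of $\Ex[X]$ using, w.h.p., $O(\eps^{-2}\Ex[X]^{-1}\log n)$ samples, and finally bounds $(\nnzs A + \nnzs B)/\nnzs{A-B}<4/\psi$ by a case analysis on $\nnzs B$ versus $\nnzs A$ --- essentially your case split, phrased differently. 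You instead split the disagreement set $S$ into $S_1=S\cap\mathrm{supp}(A)$ and $S_2=S\setminus\mathrm{supp}(A)=\mathrm{supp}(B)\setminus\mathrm{supp}(A)$, estimate each from its own sampler, and add. That is a valid alternative decomposition and your exact branch, your identity $|S|=\nnz A + N_u|V| - G - B$, and your Chernoff sample-size bounds are all fine.

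The gap is adaptivity. You write ``Applying the multiplicative Chernoff bound with error budget $\eps|S|/2$ $\ldots$ yields $T_1 = O(\cdot), T_2 = O(\cdot)$'' and then treat $T_1,T_2$ as if the algorithm can simply draw that many samples. But $T_1$ and $T_2$ depend on the unknown quantities $|S_1|,|S_2|,|S|$ (equivalently on $\psi$), so the algorithm cannot set them up front. The paper sidesteps this by plugging its single random variable $X$ into the Dagum--Karp--Luby--Ross stopping rule, which both (i) achieves the relative $(1\pm\eps)$ guarantee and (ii) certifies w.h.p.\ that the number of samples drawn is $O(\eps^{-2}\Ex[X]^{-1}\log n)$ --- i.e.\ the adaptive sample complexity you need. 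Your two-sampler, additive-error-budget formulation does not plug directly into that theorem: DKLR gives relative accuracy on a single mean, not additive accuracy measured against a larger unknown sum, and applying it naively to $S_1$ and $S_2$ separately would cost $\Theta(\eps^{-2}\nnzs A/|S_1|\log n)$ and $\Theta(\eps^{-2}\nnzs B/|S_2|\log n)$, which can vastly exceed your target when $|S_1|\ll|S|$ or $|S_2|\ll|S|$. To close the gap you should either (a) combine your two samplers into one random variable over $\mathrm{supp}(A)\cupdot\mathrm{supp}(B)$ with the appropriate mixing probabilities and apply DKLR to that (which recovers essentially the paper's Lemma~\ref{lem:samplingAB_0}), or (b) give an explicit doubling/stopping scheme and prove its sample count is $O(\eps^{-2}\psi^{-1}\log n)$ w.h.p.\ while also preserving the additive guarantee.

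Two small secondary remarks. First, your stated $T_1 = O(\nnz A\,|S_1|\,\eps^{-2}|S|^{-2}\log n)$ is only the Bernstein-regime term; when the additive budget $\eps|S|/2$ exceeds $\Theta(|S_1|)$ you also need the linear term $O(\nnz A\,\eps^{-1}|S|^{-1}\log n)$, which luckily is still $O(\eps^{-2}\psi^{-1}\log n)$ but should be stated. Second, the interleave-and-stop-early combination of the exact $O(\nnz A+n)$ branch with the sampling branch (to realize the $\min$) should be stated explicitly, as the paper does; otherwise the $\min$ in the running-time bound is not justified.
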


We prove Theorem~\ref{thm:samplingPQ}, by designing: 
i) an exact deterministic algorithm, see Lemma~\ref{lem:exactAAizT}; and
ii) a randomized approximation algorithm running in sublinear-time, see Lemma~\ref{lem:samplingAB_0}.

\begin{lem}	\label{lem:exactAAizT}
Suppose $A,B\in\mathbb{R}^{m\times n}$ are represented by column
adjacency arrays. Then, we can compute in $O(\lVert A\rVert_{0}+n)$
time the measure $\lVert A-B\rVert_{0}$.
\end{lem}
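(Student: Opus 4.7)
The plan is to handle each column $j \in [n]$ independently and sum the per-column disagreement counts $d_j = |\{i : A_{i,j} \neq B_{i,j}\}|$ to get $\|A-B\|_0 = \sum_{j} d_j$. The crucial observation that makes the asymmetric bound $O(\|A\|_0+n)$ possible is that the quantity $\|B_{:,j}\|_0 = \ell_j^B$ is stored as the length header of the column adjacency array of $B$ and can be read in $O(1)$ time, so we never have to traverse $B$'s nonzero entries.

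For a fixed column $j$, let $N_A = \mathrm{supp}(A_{:,j})$, $N_B = \mathrm{supp}(B_{:,j})$, and set $c_j = |N_A \cap N_B|$ together with $\alpha_j = |\{i \in N_A \cap N_B : A_{i,j} \neq B_{i,j}\}|$. Splitting the disagreements according to whether $i$ lies in $N_A \cap N_B$, $N_A \setminus N_B$, or $N_B \setminus N_A$ (all entries of the latter two classes are automatic disagreements since one side is zero and the other is nonzero) gives
$$d_j \;=\; \alpha_j + (\|A_{:,j}\|_0 - c_j) + (\|B_{:,j}\|_0 - c_j) \;=\; \alpha_j + \|A_{:,j}\|_0 + \|B_{:,j}\|_0 - 2 c_j.$$

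To compute $\alpha_j$ and $c_j$, I would make a single pass over the column adjacency array of $A_{:,j}$: for each stored pair $(i, A_{i,j})$, look up $B_{i,j}$ in $O(1)$ time using random access; if $B_{i,j} \neq 0$ then increment $c_j$, and if additionally $A_{i,j} \neq B_{i,j}$ then increment $\alpha_j$ as well. The quantities $\|A_{:,j}\|_0$ and $\|B_{:,j}\|_0$ are fetched from the adjacency-array headers in $O(1)$ time, so the per-column cost is $O(\|A_{:,j}\|_0 + 1)$; summing over $j$ yields the claimed $O(\|A\|_0 + n)$ total.

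There is no substantial obstacle here; the only subtle point is resisting the natural merge-based implementation, which would cost $O(\|A\|_0 + \|B\|_0 + n)$ and is strictly weaker than the stated bound. Using the length header of $B$'s adjacency array together with random access into $B$'s entries sidesteps any traversal of $B$'s nonzeros and gives the asymmetric running time.
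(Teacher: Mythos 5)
Your proposal is correct and takes essentially the same approach as the paper: enumerate the nonzero entries of $A$, perform an $O(1)$ random-access lookup of the corresponding $B$ entry for each, and recover $\|B\|_0$ from the stored column lengths in $O(n)$ time. Your per-column identity $d_j = \alpha_j + \|A_{:,j}\|_0 + \|B_{:,j}\|_0 - 2c_j$ is algebraically the same as the paper's global identity $\|A-B\|_0 = \|B\|_0 + |T_2| - |T_4|$ (with $|T_2| = \sum_j(\|A_{:,j}\|_0 - c_j)$ and $|T_4| = \sum_j(c_j - \alpha_j)$), so the two arguments are equivalent up to presentation.
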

\begin{proof}
We partition the entries of $A$ into five sets:
\begin{align*}
&T_{1}=\left\{ (i,j)\,:\,A_{ij}=0\text{ and }B_{ij}\neq0\right\},
&&T_{4}=\left\{ (i,j)\,:\,0\neq A_{ij}=B_{ij}\neq0\right\},\\
&T_{2}=\left\{ (i,j)\,:\,A_{ij}\neq0\text{ and }B_{ij}=0\right\},
&&T_{5}=\left\{ (i,j)\,:\,A_{ij}=B_{ij}=0\right\},\\
&T_{3}=\left\{ (i,j)\,:\,0\neq A_{ij}\neq B_{ij}\neq0\right\}.
\end{align*}

Observe that $\lVert A-B\rVert_{0}=|T_{1}|+|T_{2}|+|T_{3}|$ and $\lVert B\rVert_{0}=|T_{1}|+|T_{3}|+|T_{4}|$.
Since
$\lVert A-B\rVert_{0}=\lVert B\rVert_{0}+|T_{2}|-|T_{4}|$,
it suffices to compute the numbers $\lVert B\rVert_{0}$, $|T_{2}|$
and $|T_{4}|$. We compute $|T_{2}|$ and $|T_{4}|$ in
$O(\lVert A\rVert_{0})$ time, by enumerating all non-zero entries 
of $A$ and performing $O(1)$ checks for each.
For $\lVert B\rVert_{0}$, we sum the column lengths 
of $B$ in $O(n)$ time.
\end{proof}

For our second, sampling-based implementation of Step 3, we make use of an algorithm 
by Dagum et al.~\cite{DKLR00} for estimating the expected value of a random variable.
We note that the runtime of their algorithm is a random variable, the magnitude of 
which is bounded w.h.p.\ within a certain range.

\begin{thm}\cite{DKLR00}\label{thm:AAsampleAlg}
	Let $X$ be a random variable taking values in $[0,1]$ with $\mu \Def \mathbb{E}[X]>0$.
	Let $0 < \epsilon,\delta < 1$
	and $\rho_{X}=\max\{\mathrm{Var}[X],\epsilon\mu\}$.
	There is
	an algorithm with sample access to $X$ that computes an estimator $\tilde \mu$ in time $t$ such
	that for a universal constant $c$ we have
	\smallskip\smallskip
	
	i) $\Pr[(1-\epsilon)\mu\leq \tilde \mu\leq(1+\epsilon)\mu]\geq1-\delta$
	$\quad\text{ and }\quad$
	ii) $\Pr[t\geq c\cdot\epsilon^{-2}\mu^{-2}\rho_{X}\log(1/\delta)]\leq\delta$.
\end{thm}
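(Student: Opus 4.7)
The plan is to prove the claim by reproducing the \emph{Approximate-Approximation} (AA) algorithm of Dagum, Karp, Luby and Ross. The central idea is a two-stage sampler: a sample-efficient stopping-rule phase that crudely estimates $\mu$, followed by a refinement phase whose sample budget is calibrated from the crude estimate so that the total number of draws scales as $\epsilon^{-2}\mu^{-2}\rho_X\log(1/\delta)$ with high probability. Throughout, I would target failure probability $\delta/3$ in each of three subroutines and union-bound at the end.

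In the first stage I would run the \emph{Stopping Rule Algorithm}: draw independent copies $X_1,X_2,\ldots$ of $X$ and stop at the first time $N_1$ when the running sum $S_{N_1}=\sum_{i\le N_1}X_i$ exceeds a threshold $\Upsilon=c_1\,\epsilon^{-2}\log(1/\delta)$ for a suitable universal constant $c_1$, outputting $\hat\mu_1:=\Upsilon/N_1$. Since the centered increments $X_i-\mu$ are bounded in $[-1,1]$, an Azuma-type concentration for the martingale $S_n-n\mu$ combined with an optional-stopping argument yields that with probability at least $1-\delta/3$, $N_1\in[(1-\epsilon/10)\Upsilon/\mu,\,(1+\epsilon/10)\Upsilon/\mu]$, so $\hat\mu_1$ is a $(1\pm O(\epsilon))$-approximation to $\mu$ and the stage uses $O(\epsilon^{-2}\mu^{-1}\log(1/\delta))$ samples.

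In the second stage I would use $\hat\mu_1$ to adaptively approximate $\rho_X$: draw a fresh batch of $O(\epsilon^{-2}\hat\mu_1^{-1}\log(1/\delta))$ samples and compute the empirical second moment, from which together with $\hat\mu_1$ I extract a constant-factor estimate $\hat\rho$ of $\rho_X=\max\{\mathrm{Var}[X],\epsilon\mu\}$, correct with probability at least $1-\delta/3$ by a Chebyshev-style argument that uses $X\in[0,1]$ to control the variance of $X^2$. In the final stage I would then draw $N_3=\Theta(\hat\rho\,\hat\mu_1^{-2}\epsilon^{-2}\log(1/\delta))$ fresh samples and output the empirical mean $\tilde\mu$. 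Applying Bennett's/Bernstein's inequality to the bounded variable $X$, with variance proxy $\hat\rho\asymp\rho_X$, gives $|\tilde\mu-\mu|\le\epsilon\mu$ with probability at least $1-\delta/3$. A union bound over the three stages yields (i), and substituting $\hat\mu_1=\Theta(\mu)$, $\hat\rho=\Theta(\rho_X)$ back into the sample counts makes $N_3$ dominate, producing the runtime tail bound (ii).

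The main obstacle is the two-sided control of the stopping time $N_1$ without \emph{a priori} knowledge of $\mathrm{Var}[X]$: its upper tail requires concentration of $S_n$ from below, and its lower tail the opposite. The clean resolution exploits $X\in[0,1]$ and the choice $\Upsilon=\Theta(\log(1/\delta)/\epsilon^2)$, large enough that sub-Gaussian fluctuations of the partial sums are $\epsilon$-small relative to $\Upsilon$ regardless of the variance. A secondary subtlety is that the final sample size $N_3$ is itself a random variable depending on $\hat\mu_1$ and $\hat\rho$, so establishing the runtime tail bound (ii) requires conditioning on the high-probability event that the stage-one and stage-two estimates are constant-factor correct, and handling the unlikely complementary event by imposing a hard worst-case cap on the number of samples drawn (so that on that event one still satisfies the $\delta$-failure budget in the time bound).
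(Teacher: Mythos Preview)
The paper does not prove this theorem: it is quoted verbatim from Dagum, Karp, Luby and Ross~\cite{DKLR00} and used as a black box, so there is no ``paper's own proof'' to compare against. Your three-stage outline does follow the architecture of the AA algorithm in~\cite{DKLR00}, so in that sense you are reproducing the intended argument.

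There is, however, a genuine calibration error in your first stage that breaks the runtime guarantee~(ii). You set the stopping threshold to $\Upsilon=c_1\epsilon^{-2}\log(1/\delta)$, which forces $N_1=\Theta(\epsilon^{-2}\mu^{-1}\log(1/\delta))$ samples. But the target total budget is $\Theta(\epsilon^{-2}\mu^{-2}\rho_X\log(1/\delta))$, and since $X\in[0,1]$ implies $\mathrm{Var}[X]\le\mu$ and $\epsilon\mu\le\mu$, we always have $\rho_X\le\mu$. Hence $N_3=\Theta(\epsilon^{-2}\mu^{-2}\rho_X\log(1/\delta))\le\Theta(\epsilon^{-2}\mu^{-1}\log(1/\delta))=N_1$, contradicting your claim that ``$N_3$ dominates.'' Concretely, when $\mathrm{Var}[X]\le\epsilon\mu$ (so $\rho_X=\epsilon\mu$), the allowed budget is $\Theta(\epsilon^{-1}\mu^{-1}\log(1/\delta))$, a factor $\epsilon^{-1}$ smaller than your $N_1$.

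The fix, as in~\cite{DKLR00}, is to run the stopping rule in stage one with accuracy parameter $\min\{1/2,\sqrt{\epsilon}\}$ rather than $\epsilon$: you only need a constant-factor (or $\sqrt{\epsilon}$-factor) estimate $\hat\mu_1$ to size the later stages, and this reduces the stage-one threshold to $\Theta(\epsilon^{-1}\log(1/\delta))$ and hence $N_1=O(\epsilon^{-1}\mu^{-1}\log(1/\delta))$, which is within the stated budget since $\rho_X\ge\epsilon\mu$. With that correction the rest of your plan goes through.
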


We state now our key technical insight, on which we build upon our sublinear algorithm.

\begin{lem}\label{lem:samplingAB_0}
	There is an algorithm that, given $A,B\in\mathbb{R}^{m\times n}$ 
	with column adjacency arrays and $\nnzs{A-B}\geq1$, and given $\epsilon>0$,
	computes an estimator $Z$ that satisfies w.h.p.\
	$$(1-\eps) \nnzs{A-B} \le Z \le (1+\eps) \nnzs{A-B}.$$
	The algorithm runs w.h.p. in time 
	$O(n+\epsilon^{-2}\tfrac{\lVert A\rVert_{0}+
		\lVert B\rVert_{0}}{\lVert A-B\rVert_{0}}\log n\})$.
\end{lem}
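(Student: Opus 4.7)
The plan is to write $\|A-B\|_0$ as a sum of Bernoulli indicators over a small universe and then approximate the underlying mean via Theorem~\ref{thm:AAsampleAlg}. The natural universe is the multiset $V$ obtained as the disjoint union $\mathrm{supp}(A) \cupdot \mathrm{supp}(B)$, which has size $|V| = \|A\|_0+\|B\|_0$ and contains every $(i,j)$ on which $A$ and $B$ can possibly disagree.

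First, I would read $\|A\|_0$ and $\|B\|_0$ in $O(n)$ time by summing the column lengths stored with the adjacency arrays. A uniform draw from $V$ can then be performed in $O(1)$ time (after $O(n)$ preprocessing) by tossing a biased coin that returns side $A$ with probability $\|A\|_0/|V|$ and side $B$ otherwise, followed by the constant-time sampler of Lemma~\ref{lem:samplingnonzeroentries} applied to the selected matrix.

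Next, I attach to each sample $(i,j,\mathrm{src})\in V$ an asymmetric Bernoulli indicator
\[
X \;:=\; \begin{cases} \mathbf{1}[A_{i,j}\neq B_{i,j}] & \text{if } \mathrm{src}=A,\\ \mathbf{1}[A_{i,j}=0] & \text{if } \mathrm{src}=B. \end{cases}
\]
Using the partition $T_1,\dots,T_5$ from the proof of Lemma~\ref{lem:exactAAizT}, the $A$-side draws contribute exactly $|T_2|+|T_3|$ to $\sum_V X$ (they hit $T_2\cup T_3$), while the $B$-side draws contribute exactly $|T_1|$, so $\sum_V X = |T_1|+|T_2|+|T_3| = \|A-B\|_0$. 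Hence $\mu := \mathbb{E}[X] = \|A-B\|_0/|V|$, and each $X$ is computable in $O(1)$ time by one random access to the other matrix.

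Finally, I invoke Theorem~\ref{thm:AAsampleAlg} with parameters $\epsilon$ and $\delta = 1/\mathrm{poly}(n)$ to obtain $\tilde\mu \in [(1-\epsilon)\mu,(1+\epsilon)\mu]$ w.h.p., and return $Z := |V|\cdot\tilde\mu$. Because $X$ is Bernoulli, $\mathrm{Var}[X]\le\mu$ and hence $\rho_X\le\mu$; part (ii) of the theorem then bounds the sample count w.h.p.\ by $O(\epsilon^{-2}\mu^{-1}\log n) = O\bigl(\epsilon^{-2}\tfrac{\|A\|_0+\|B\|_0}{\|A-B\|_0}\log n\bigr)$, each sample processed in $O(1)$ time, which matches the claimed runtime after the $O(n)$ preprocessing. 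The one subtle design point is the asymmetric choice of $X$: a naive symmetric version (flagging disagreement on both sides of $V$) double-counts $T_3$ and yields $\|A-B\|_0+|T_3|$ instead of $\|A-B\|_0$, so arranging the per-sample test to correct for this while remaining evaluable in $O(1)$ is the crux of the argument; the rest is a direct application of the Dagum--Karp--Luby--Ross estimator.
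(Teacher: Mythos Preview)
Your proof is correct and follows the same overall strategy as the paper: sample uniformly from the disjoint union $\mathrm{supp}(A)\cupdot\mathrm{supp}(B)$, design a $[0,1]$-valued random variable $X$ with $\mathbb{E}[X]=\|A-B\|_0/(\|A\|_0+\|B\|_0)$, and feed it to the Dagum--Karp--Luby--Ross estimator (Theorem~\ref{thm:AAsampleAlg}).

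The only genuine difference is how the double-counting of $T_3$ is handled. You resolve it with an \emph{asymmetric} Bernoulli indicator: on the $A$-side you flag $A_{i,j}\neq B_{i,j}$, while on the $B$-side you flag only $A_{i,j}=0$, so that each $(i,j)\in T_3$ is counted exactly once (from the $A$-copy). The paper instead keeps $X$ symmetric in the two sides but allows the value $\tfrac12$ whenever $0\neq A_{i,j}\neq B_{i,j}\neq 0$, so each $T_3$ entry contributes $\tfrac12+\tfrac12=1$. Both choices yield the same expectation and the same bound $\mathrm{Var}[X]\le\mathbb{E}[X]$ (since $X\in[0,1]$ in either case), so the runtime analysis is identical. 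Your asymmetric variant is a perfectly valid alternative and arguably slightly slicker, since it keeps $X$ Bernoulli.
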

\begin{proof}
	By Lemma~\ref{lem:samplingnonzeroentries}, after $O(n)$ preprocessing time 
	we can sample a uniformly random non-zero entry from $A$ or $B$ in $O(1)$ time.
	
	We consider the following random process:
	
	1. Sample $C\in\{A,B\}$ such that $\Pr[C=A]=\tfrac{\lVert A\rVert_{0}}{\lVert A\rVert_{0}+\lVert B\rVert_{0}}$
	and $\Pr[C=B]=\tfrac{\lVert B\rVert_{0}}{\lVert A\rVert_{0}+\lVert B\rVert_{0}}$.
	
	2. Sample $(i,j)$ uniformly at random from the non-zero entries of $C$
	
	3. Return:
	\[
	X=\begin{cases}
	0, & \text{if }A_{ij}=B_{ij};\\
	1/2, & \text{if }0\neq A_{ij}\neq B_{ij}\neq0;\\
	1, & \text{if }A_{ij}\neq B_{ij}\text{ and either }A_{ij}\text{ or }B_{ij}\text{ equals }0.
	\end{cases}
	\]
	Observe that
	\begin{eqnarray*}
		\mathbb{E}[X] & = & \sum_{(i,j)\colon A_{ij}\neq B_{ij} = 0}\frac{\lVert A\rVert_{0}}{\lVert A\rVert_{0}+\lVert B\rVert_{0}}\cdot\frac{1}{\lVert A\rVert_{0}}+\sum_{(i,j)\colon 0 = A_{ij}\neq B_{ij}}\frac{\lVert B\rVert_{0}}{\lVert A\rVert_{0}+\lVert B\rVert_{0}}\cdot\frac{1}{\lVert B\rVert_{0}}\\
		&  & +\sum_{(i,j)\colon 0\neq A_{ij}\neq B_{ij}\neq0}\left(\frac{1}{2}\cdot\frac{\lVert A\rVert_{0}}{\lVert A\rVert_{0}+\lVert B\rVert_{0}}\cdot\frac{1}{\lVert A\rVert_{0}}+\frac{1}{2}\cdot\frac{\lVert B\rVert_{0}}{\lVert A\rVert_{0}+\lVert B\rVert_{0}}\cdot\frac{1}{\lVert B\rVert_{0}}\right)\\
		& = & \frac{\lVert A-B\rVert_{0}}{\lVert A\rVert_{0}+\lVert B\rVert_{0}}.
	\end{eqnarray*}
	Straightforward checking shows that $X\in\left[0,1\right]$ implies 
	$\mathrm{Var}[X]\leq\mathbb{E}[X]$, and thus 
	\[
	\rho_X = \max\{\mathrm{Var}[X],\epsilon\cdot \mathbb{E}[X]\} \le \mathbb{E}[X].
	\]
	Setting $\delta = 1/\mathrm{poly}(n)$ in Theorem~\ref{thm:AAsampleAlg}, we can compute w.h.p.\ in time
	$O(\epsilon^{-2} \Ex[X]^{-1} \log n) = O(\epsilon^{-2}\tfrac{\lVert A\rVert_{0}+
		\lVert B\rVert_{0}}{\lVert A-B\rVert_{0}}\log n)$
	an estimator 
	$(1-\eps) \mathbb{E}[X] \le \tilde \mu \le (1+\eps) \mathbb{E}[X]$. 
	Then, w.h.p.\ the estimator $Z \Def (\nnzs{A} + \nnzs{B}) \tilde\mu$ satisfies the statement.
\end{proof}

We are now ready to prove Theorem~\ref{thm:samplingPQ}.

\begin{proof}[Proof of Theorem~\ref{thm:samplingPQ}]
	Let $B \Def A_{:,j} v^T$ and observe that $\nnz{A-B}\geq\optone\geq1$.
	Note that we implicitly have column adjacency arrays for $B$, 
	since for any column $c$ with $v_c = 0$ there are no non-zero entries in $B_{:,c}$, 
	and for any column $c$ with $v_c = 1$ the non-zero entries of $B_{:,c}$ are 
	the same as for $A_{:,j}$. 
	Hence, Lemma~\ref{lem:exactAAizT} and Lemma~\ref{lem:samplingAB_0} are applicable.
	
	We analyze the running time of Lemma~\ref{lem:samplingAB_0}. 
	Note that if $\lVert B\rVert _{0}\leq(1+\psi)\lVert A\rVert _{0}$ then
	$\tfrac{\lVert A\rVert_{0}+\lVert B\rVert_{0}}{\lVert A-B\rVert_{0}}\leq(2+\psi)/\psi$,
	and otherwise, i.e. $(1+\psi)\lVert A\rVert _{0}\leq\lVert B\rVert _{0}$, we have
	$$\nnz{A-B} \ge \nnz{B} - \nnz{A} \ge \tfrac{\psi}{1+\psi} \nnz{B}$$ and thus 
	$\tfrac{\lVert A\rVert_{0}+\lVert B\rVert_{0}}{\lVert A-B\rVert_{0}} \leq 2(1+\psi)/\psi$.
	Hence, $\tfrac{\lVert A\rVert_{0}+\lVert B\rVert_{0}}{\lVert A-B\rVert_{0}} < 4/\psi$, 
	which yields with high probability time $O(n+\epsilon^{-2} \psi^{-1}\log n)$.
	
	We execute in parallel the algorithms from Lemma~\ref{lem:exactAAizT} 
	and Lemma~\ref{lem:samplingAB_0}. Once the faster algorithm outputs a solution,
	we terminate the execution of the slower one. 
	Note that this procedure runs w.h.p\ in time $O(\min\{\nnz{A}, n+\epsilon^{-2} \psi^{-1}\log n\})$,
	and returns w.h.p.\ the desired estimator $Y$.
\end{proof}

To implement Step 3 of Algorithm~\ref{alg_Rand2}, we simply apply Theorem~\ref{thm:samplingPQ} 
with $A$,  $\epsilon$ and $v = z^{(j)}$ to each sampled column $j \in \bigcup_{0 \le i \le \log n +1} C^{(i)}$.

\subsection{Analyzing the Runtime of Algorithm~\ref{alg_Rand2}}

Consider again Algorithm~\ref{alg_Rand2}.
In Steps 1, 2 and 2.1, from each of the $O(\log n)$ weight classes we sample $O(\epsilon^{-2}\log n)$ columns. 
In Step 2.2, for each sampled column we use Lemma~\ref{lem_RArank1}, which takes time 
$O(\eps^{-2} n \log m)$ per column, or $O(\eps^{-4} n \log m \log^2 n)$ in total. 
Finally in Step 3, for each sampled column we use Theorem~\ref{thm:samplingPQ}, which w.h.p.\ takes time $O(\min\{\nnz{A}, n+\epsilon^{-2} \psi^{-1}\log n\})$ per column, or 
$O(\min\{\nnz{A},n+\epsilon^{-2} \psi^{-1}\log n\} \cdot \eps^{-2} \log^2 n)$ in total.
Then, the total runtime is bounded by
$O(\eps^{-4} n \log m \log^2 n + \min\{ \nnz{A} \eps^{-2} \log^2 n,\, \epsilon^{-4} \psi^{-1}\log^3 n\})$.

\section{Algorithms for Boolean $\ell_{0}$-Rank-$1$}\label{sec:BinL0R1withSmallOpt}

Given a matrix $A\in\{0,1\}^{m\times n}$, the \probOne problem 
asks to find a matrix $A^{\prime}\in\{0,1\}^{m\times n}$  with rank $1$,
i.e. $A^{\prime}=uv^T$ for some vectors $u\in\{0,1\}^{m}$ and $v\in\{0,1\}^{n}$,
such that the difference between $A$ and $A^{\prime}$ measured
in $\ell_{0}$-distance is minimized. We denote the optimum value by

\begin{equation}\label{eq:goal}
\opt = \opt_A \;\Def\; \min_{u\in\{ 0,1\}^m,\,v\in\{ 0,1\}^n} \nnz{A-uv^{T}}.
\end{equation}

In practice, approximating a matrix $A$ by a rank-1 matrix $u v^T$ makes most sense if $A$ is close to being rank-1. Hence, the above optimization problem is most relevant in the case $\opt \ll \nnz{A}$. 
For this reason, in this section we focus on the case $\opt / \nnz{A} \le \phi$ for sufficiently 
small $\phi > 0$. We prove the following.

\begin{thm} \label{thm_paramApproxAlg}
Given $A\in\{ 0,1\}^{m\times n}$ with row and column sums, 
and given $\phi \in (0,\tfrac{1}{80}]$ with $\opt / \nnz{A} \le \phi$, 
we can compute in time
$O(\min\{\nnz{A} + m+n, \phi^{-1} (m+n) \log(mn)\})$
vectors $\tu\in\{0,1\}^{m}$ and $\tv\in\{0,1\}^{n}$ 
such that w.h.p.\
$\nnzs{A - \tu\cdot \tv^T} \le (1+5\phi) \opt + 37 \phi^2 \nnz{A}$.
\end{thm}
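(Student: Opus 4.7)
The plan is to use the row and column sums to cheaply produce candidate binary factors by thresholding, enumerate a logarithmic number of scale guesses for $|v^*|$, and select the best via cost evaluation. Having $N := \nnz{A}$ from the given sums, for each $t \in \{1,2,4,\ldots,n\}$ I would set $\bar s := N/t$ as an estimate of $|u^*|$, form $\tu_i := \mathbf{1}[\sigma_i \ge t/2]$ (where $\sigma_i$ is the $i$-th row sum) and $\tv_j := \mathbf{1}[\tau_j \ge \bar s/2]$ (where $\tau_j$ is the $j$-th column sum), estimate $\nnz{A-\tu\tv^T}$, and return the pair with smallest estimate. Since the true $|v^*|$ lies within a factor two of one of the $O(\log n)$ guesses, it suffices to analyze that particular run; the starting observation is $\sum_i |\sigma_i - t u^*_i| \le \opt$ (and its column analogue), which combined with the thresholds yields $a := |\mathrm{supp}(\tu) \triangle \mathrm{supp}(u^*)| \le 2\opt/t$ and $b := |\mathrm{supp}(\tv) \triangle \mathrm{supp}(v^*)| \le 2\opt/\bar s$, while $st = N \pm \opt$ makes $\bar s$ approximate $s := |u^*|$ up to a $(1 \pm 2\phi)$ factor.

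For the cost, let $R, R', C, C'$ denote the supports of $u^*, \tu, v^*, \tv$, and $D := (R \times C) \triangle (R' \times C')$. A direct calculation gives the exact identity $\nnz{A - \tu\tv^T} = \opt + D_{\mathrm{bad}} - D_{\mathrm{good}}$, where $D_{\mathrm{good}}$ counts cells in $D$ where $A_{ij} = \tu_i \tv_j$ (so an $\opt$-error has been corrected) and $D_{\mathrm{bad}} = D - D_{\mathrm{good}}$ counts the rest. I would partition $D$ into its six natural geometric pieces (the row lying in one of $R \cap R'$, $R \setminus R'$, $R' \setminus R$, crossed with a corresponding column option). The two ``both-wrong'' pieces $(R \setminus R') \times (C \setminus C')$ and $(R' \setminus R) \times (C' \setminus C)$ have total size at most $ab \le 4\opt^2/(st) \le 4\phi\opt \le 4\phi^2 \nnz{A}$, absorbed into the $37\phi^2 \nnz{A}$ slack. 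For each of the four ``one-wrong'' pieces, say $(R \cap R') \times (C \setminus C')$, the threshold condition $\tau_j < \bar s/2$ for $j \in C \setminus C'$ forces at least $\bar s/2$ entries of column $j$ restricted to $R$ to be zero, and since $|R \setminus R'| = a_- \le a$ is small, almost all of these $\beta$-errors land in $R \cap R'$ and contribute to $D_{\mathrm{good}}$, canceling most of $D_{\mathrm{bad}}$ on that piece up to a lower-order slack of order $ab$. Summing the four pieces gives $D_{\mathrm{bad}} - D_{\mathrm{good}} \le 5\phi\opt + 37\phi^2 \nnz{A}$.

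The main obstacle is exactly this cancellation on the one-wrong pieces: each such piece has size of order $\opt$, so a crude bound would only yield a constant-factor approximation instead of $(1+O(\phi))$. Tracking how the forced $\alpha$- and $\beta$-errors split between $R \cap R'$ and $R \setminus R'$ (and symmetrically for columns) is delicate, and proving that each piece pays only a multiplicative $(1+O(\phi))$ factor on the $\opt$-mass it consumes is where the $5\phi\opt$ linear term in the final bound originates.

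For the running time, in the linear regime I would compute $\nnz{A - \tu\tv^T}$ exactly via $\nnz{A} + |\tu||\tv| - 2|\{(i,j) \in R' \times C' : A_{ij} = 1\}|$, enumerating the $|\tu||\tv| \le (2\nnz{A}/t)(2t) = 4\nnz{A}$ cells of the rectangle using the assumed random entry access, for $O(\nnz{A} + m + n)$ total. In the sublinear regime, for each guess I would estimate the same count by uniformly sampling cells of $R' \times C'$ and invoking the estimator of Dagum et~al.\ (Theorem~\ref{thm:AAsampleAlg}): since the fraction of $1$-entries in $R' \times C'$ is $1 - O(\phi)$, the estimator needs only $O(\phi^{-1} \log(mn))$ samples to resolve $\opt$ to within the $\phi^2 \nnz{A}$ slack, and combined with the $O(\log n)$ guesses and $O(m+n)$ threshold preprocessing, the total is $O(\phi^{-1}(m+n)\log(mn))$ as claimed.
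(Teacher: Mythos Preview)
Your approach has a genuine gap: the candidate vectors $(\tu,\tv)$ you produce are \emph{threshold functions of the global row and column sums}, and such functions cannot in general achieve a $(1+O(\phi))$-approximation. Consider $n=2s$ and the matrix with $A_{[1,s],[1,s]}$ all ones and $A_{[s+1,s+k],[s+1,2s]}$ all ones, where $k=\phi s$; all other entries are zero. Here $\opt=ks$ (take $u^*=1_{[1,s]}$, $v^*=1_{[1,s]}$) and $\opt/\nnz{A}\approx\phi$. But rows $1,\ldots,s$ and rows $s+1,\ldots,s+k$ all have the \emph{same} row sum $s$, so every threshold $\tu$ either picks all of $[1,s+k]$ or none of $[1,s]$. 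One checks directly that for every guess $t$ your pair $(\tu,\tv)$ has cost at least $2ks=2\opt$, while $(1+5\phi)\opt+37\phi^2\nnz{A}\le(1+42\phi)\opt<2\opt$ for $\phi\le 1/80$. So the claimed bound fails. (Relatedly, the ``starting observation'' $\sum_i|\sigma_i-tu^*_i|\le\opt$ is not correct; the valid inequality is $\sum_i|\sigma_i-\beta u^*_i|\le\opt$, and replacing $\beta$ by a factor-$2$ guess $t$ breaks the subsequent cancellation argument.)

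The paper avoids this obstacle in two ways you are missing. First, it does not guess $\beta$ geometrically but computes $\tilde\beta$ to within a $(1\pm O(\phi))$ factor via Lemma~\ref{lem_approxalphabeta}, by minimizing $\sum_i\min\{\sigma_i,|\sigma_i-\beta'|\}$ over $\beta'$; factor-$2$ accuracy is not enough. Second, and more importantly, after the initial thresholding (your Step~2 analogue) the paper looks at row sums \emph{restricted to the surviving columns $C^R$} to confidently select rows $R^S\subseteq R$ (Claim~\ref{clm_Step3_Correct}), and then decides each remaining row $i$ by its sum restricted to $C^S$ (Steps~5--6). In the counterexample above, rows $[1,s]$ and $[s+1,s+k]$ have identical global sums but completely different sums restricted to $C^R=[1,s]$, which is exactly what lets the paper's algorithm separate them. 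Your single-pass global thresholding cannot see this distinction; the multi-stage refinement using restricted sums is the essential idea.
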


Using Theorem~\ref{thm:BinaryRankOneThereeApprox}, we can compute a $(2+\eps)$-approximation of $\opt$, and thus a $(2+\eps)$-approximation of the ratio $\opt / \nnz{A}$. 
Hence, combining Theorem~\ref{thm_paramApproxAlg} and Theorem~\ref{thm:BinaryRankOneThereeApprox}, 
yields an algorithm that does not need $\phi$ as an input
and computes a $(1+500\psi)$-approximate solution of the \probOne problem.

\begin{thm} \label{thm_paramApproxAlgCombined}
	Given $A\in\{ 0,1\}^{m\times n}$ with column adjacency arrays and with row and column sums, 
	for $\psi = \opt / \nnz{A}$ we can compute w.h.p.\ in time 
	$O(\min\{ \nnz{A} + m+n, \psi^{-1} (m+n) \} \cdot \log^3 (mn))$
	vectors $\tu\in\{0,1\}^{m}$ and $\tv\in\{0,1\}^{n}$ 
	such that w.h.p.\
	$\nnzs{A - \tu\cdot \tv^T} \le (1+500\psi) \opt$. 
\end{thm}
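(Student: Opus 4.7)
The plan is a two-stage bootstrap: first use Theorem~\ref{thm:BinaryRankOneThereeApprox} as a subroutine to obtain a constant-factor estimate of $\opt$, which then tells us a valid value of $\phi$ to feed into Theorem~\ref{thm_paramApproxAlg}. We need to pick $\phi$ of order $\Theta(\psi)$ — neither too large (otherwise the additive error $37\phi^2 \nnz{A}$ in Theorem~\ref{thm_paramApproxAlg} blows up relative to $\opt$) nor too small (otherwise the required hypothesis $\opt/\nnz{A} \le \phi$ may fail).

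Concretely, I would first run the algorithm of Theorem~\ref{thm:BinaryRankOneThereeApprox} with a fixed small constant, say $\eps_0 = 0.1$, producing a candidate $(A_{:,j}, z^{(j)})$ of cost at most $(2+\eps_0)\opt$ together with an estimator $Y$ satisfying $(1-\eps_0)\opt \le Y \le (2+2\eps_0)\opt$. Using the row sums I can read $\nnz{A}$ in $O(m)$ time and set $\phi \Def Y/((1-\eps_0)\nnz{A})$, so that $\psi \le \phi \le \tfrac{2+2\eps_0}{1-\eps_0}\,\psi < 2.45\,\psi$. If $\phi > 1/80$, then $\psi > 1/(80\cdot 2.45) > 1/200$ so $1+500\psi > 2+\eps_0$, and the first-stage candidate is already a $(1+500\psi)$-approximation, which I return. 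Otherwise $\phi \in (0,1/80]$ and $\phi \ge \psi$, so the hypothesis of Theorem~\ref{thm_paramApproxAlg} is met, and I invoke it to produce $(\tu,\tv)$ with cost at most $(1+5\phi)\opt + 37\phi^2 \nnz{A}$.

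The key calculation is to rewrite the additive slack as $37\phi^2 \nnz{A} = 37\,(\phi/\psi)\cdot\phi\cdot\opt$ and then use $\phi/\psi < 2.45$ and $\phi < 2.45\,\psi$ to bound the total cost by $(1 + C\psi)\opt$ for an explicit constant $C \approx 5\cdot 2.45 + 37\cdot 2.45^2 < 240$, comfortably below $500$. For the runtime, the first stage with constant $\eps_0$ runs in $O(\min\{\nnz{A}+m+n,\ \psi^{-1}(m+n)\}\log^3(mn))$, while Theorem~\ref{thm_paramApproxAlg} with $\phi = \Theta(\psi)$ runs in $O(\min\{\nnz{A}+m+n,\ \psi^{-1}(m+n)\log(mn)\})$; these sum to the advertised bound. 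The only delicate point, which I expect to be the main thing to write carefully, is keeping track of the $O(1)$-factor slack $\phi/\psi$ so that the additive $37\phi^2\nnz{A}$ gets absorbed into a multiplicative $O(\psi)\opt$ overhead — this is exactly what the two-sided control $(1-\eps_0)\opt \le Y \le (2+2\eps_0)\opt$ from Theorem~\ref{thm:BinaryRankOneThereeApprox} is designed to give us.
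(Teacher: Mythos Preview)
Your proposal is correct and follows essentially the same approach as the paper's proof: both use Theorem~\ref{thm:BinaryRankOneThereeApprox} with $\eps=0.1$ to obtain $\phi$ with $\psi\le\phi\le O(\psi)$, return the first-stage solution if $\phi>1/80$, and otherwise invoke Theorem~\ref{thm_paramApproxAlg} and absorb $37\phi^2\nnz{A}=O(\psi)\opt$. The paper is slightly looser, only using $\phi\le 3\psi$, whereas you track the tighter constant $2.45$; both land safely below $500$.
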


\begin{proof}[Proof of Theorem~\ref{thm_paramApproxAlgCombined}]
  We compute a 3-approximation of $\opt$ 
  by applying Theorem~\ref{thm:BinaryRankOneThereeApprox} with $\eps = 0.1$.
  This yields a value $\phi$ satisfying $\psi \le \phi \le 3 \psi$. 
  If $\phi > 1/80$, then the 3-approximation is already good enough, 
  since $\psi > 1/240$ and $1+500\psi > 3$.
  Otherwise, we run Theorem~\ref{thm_paramApproxAlg} with $\phi$.
  Further, using $\phi^2 \nnz{A} \le 9 \psi^2 \nnz{A} = 9 \psi \opt$, the total error is at most 
  \[
  (1+5\phi) \opt + 37 \phi^2 \nnz{A} \le (1+15 \psi) \opt + 37 \cdot 9 \psi \opt \le (1 + 500 \psi) \opt.
  \]
  A rough upper bound on the running time is 
  $O(\min\{ \nnz{A} + m+n, \psi^{-1} (m+n) \} \cdot \log^3 (mn))$.
\end{proof}

A variant of the algorithm from Theorem~\ref{thm_paramApproxAlg} can also be used to solve the \probOne problem exactly. This yields the following theorem, which in particular shows that the problem is in polynomial time whenever $\opt \le O\big(\sqrt{\nnzs{A}} \log(mn)\big)$.

\begin{thm}\label{thm:FPTAlg}
	Given a matrix $A\in\{0,1\}^{m\times n}$, if $\opt_A / \nnz{A} \le 1/240$ then
	we can exactly solve the \probOne problem in time
	$2^{O(\opt / \sqrt{\nnz{A}})} \cdot \mathrm{poly}(mn)$.
\end{thm}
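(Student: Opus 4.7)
Let $d := \opt/\sqrt{\nnz{A}}$ and $\psi := \opt/\nnz{A} \le 1/240$. The plan is to run the polynomial-time approximation of Theorem~\ref{thm_paramApproxAlgCombined} as an anchor, prove that some optimum $(u^*,v^*)$ lies within Hamming distance $O(d)$ of this anchor, and then perform a restricted enumeration of size $2^{O(d)}$ to locate the exact optimum.

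First, I would invoke Theorem~\ref{thm_paramApproxAlgCombined} to obtain in $\poly(mn)$ time a pair $(\tu,\tv)$ satisfying $\nnzs{A - \tu\tv^T} \le (1+500\psi)\opt \le 4\,\opt$. The key structural claim is that there exists an optimum $(u^*,v^*)$ with $\nnzs{\tu-u^*} + \nnzs{\tv-v^*} = O(d)$. By the triangle inequality, $\nnzs{\tu\tv^T - u^* v^{*T}} \le 5\opt$. Writing $\tilde I,\tilde J,I^*,J^*$ for the supports of $\tu,\tv,u^*,v^*$, a direct count of the symmetric difference of the rectangles $\tilde I\times\tilde J$ and $I^*\times J^*$ yields
\[
\nnzs{\tu\tv^T - u^* v^{*T}} \;\ge\; |\tilde I \cap I^*| \cdot |\tilde J \bigtriangleup J^*| \;+\; |\tilde J \cap J^*| \cdot |\tilde I \bigtriangleup I^*|.
\]
An inclusion-exclusion on the non-zero entries of $A$ covered by $\tu\tv^T$ and by $u^*v^{*T}$ also gives $|\tilde I \cap I^*| \cdot |\tilde J \cap J^*| \ge \nnz{A} - O(\opt) \ge (1-O(\psi))\nnz{A}$. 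In the balanced regime where both intersections are $\Omega(\sqrt{\nnz{A}})$, combining the two bounds yields $|\tilde I \bigtriangleup I^*| + |\tilde J \bigtriangleup J^*| = O(d)$. The degenerate regime in which one intersection is $o(\sqrt{\nnz{A}})$ is isolated by the integrality of the symmetric differences together with $\psi \le 1/240$, which forces the Hamming distance on the larger side to vanish and leaves only a small-side enumeration.

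Next I would carry out the enumeration. A naive exhaustive search over Hamming-distance-$d$ perturbations of $(\tu,\tv)$ costs $(m+n)^{O(d)} = 2^{O(d\log(mn))}$, which overshoots the target $2^{O(d)}\poly(mn)$. To shave the $\log(mn)$ factor the idea is to restrict the search to candidate sets $R_u\subseteq[m]$ and $R_v\subseteq[n]$, defined as those coordinates whose best-response margin $\bigl|\nnzs{A_{i,:}-\tv^T}-\nnzs{A_{i,:}}\bigr|$ (respectively $\bigl|\nnzs{A_{:,j}-\tu}-\nnzs{A_{:,j}}\bigr|$) is at most a constant times $d$. Since any coordinate where $u^*$ deviates from the best response of $\tu$ against $\tv$ must have margin below $\nnzs{\tv-v^*} \le O(d)$, the $O(d)$ flipped coordinates guaranteed by the structural lemma lie inside $R_u\cup R_v$ after a preliminary best-response normalisation of $(\tu,\tv)$. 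I would then enumerate all $2^{|R_u|+|R_v|} = 2^{O(d)}$ sign patterns on $R_u\cup R_v$, apply the indicated flips, run alternating best-response updates to convergence in $\poly(mn)$ time, and return the minimum-cost candidate.

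The principal obstacle is the cardinality bound $|R_u|+|R_v|=O(d)$: a priori the set of coordinates with small margin can be as large as $\Theta(m+n)$. Overcoming this will likely require an iterative-compression strategy that greedily resolves unambiguous coordinates (each resolution decreasing the residual Hamming budget by one) together with a potential argument capping the number of remaining ambiguous coordinates by the current budget. The structural closeness proof is also delicate, since the bounds $|\tilde I \cap I^*|, |\tilde J \cap J^*| = \Omega(\sqrt{\nnz{A}})$ do not follow directly from $|I^*|\cdot|J^*|\ge(1-\psi)\nnz{A}$ when the rectangles are unbalanced; the degenerate regime may need its own polynomial-time subroutine based on explicit enumeration over the smaller support side.
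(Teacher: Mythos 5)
Your plan anchors on a polynomial-time approximation and then argues that an optimum lies within Hamming distance $O(d)$, where $d=\opt/\sqrt{\nnz{A}}$. The paper's actual proof is structurally different and sidesteps the two difficulties you flag as open. Rather than treating the approximation output as a black-box anchor, the paper runs only Steps 1--4 of Algorithm~\ref{alg_pL0rank1}, whose commitments are \emph{exactly} consistent with every optimum (Claims~\ref{clm_Step2_Correct} and~\ref{clm_Step3_Correct}); the only freedom left is the assignment to the two undecided sets $R'=R^R\setminus R^S$ and $C'=C^R\setminus C^S$. Your structural claim --- bounding the \emph{sum} $\nnzs{\tu-u^*}+\nnzs{\tv-v^*}$ by $O(d)$ --- is exactly the thing the paper avoids having to prove: Claim~\ref{clm_Step3_Cnt} only gives $|R'|\le 6\phi\alpha$ and $|C'|\le 6\phi\beta$, and $\phi(\alpha+\beta)$ can be far larger than $d=O(\phi\sqrt{\alpha\beta})$ when the rectangle is unbalanced. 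What \emph{is} true, via AM--GM, is that the \emph{smaller} of $|R'|,|C'|$ is $\le 6\phi\sqrt{\alpha\beta}=O(d)$, and the paper enumerates $2^{O(d)}$ Boolean assignments over that smaller side only. Your derivation of the sum bound explicitly requires both intersections $|\tilde I\cap I^*|,|\tilde J\cap J^*|=\Omega(\sqrt{\nnz{A}})$ and you concede the degenerate regime is unresolved; that gap is genuine, and the paper's route removes the need to close it.

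The second gap you flag --- getting the enumeration down to $2^{O(d)}$ rather than $2^{O(d\log(mn))}$ --- also does not arise in the paper's proof, because after fixing all of $\tilde u$ (committed entries plus one of the $2^{O(d)}$ choices on $R'$), the optimal $\tilde v$ is a \emph{one-shot closed-form best response}: set $\tilde v_j=1$ iff column $A_{:,j}$ has more 1's than 0's in $\mathrm{supp}(\tilde u)$. No alternating minimization, no convergence issue, no set of ``small-margin'' coordinates to bound. This is where your ``principal obstacle'' ($|R_u|+|R_v|=O(d)$) evaporates: the paper never needs to locate a small set of positions in which the approximation might be wrong, because after the committed Steps 1--4 the positions that could still be wrong are precisely $R'$ (or $C'$), whose size is already under control. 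The proposed iterative-compression / potential-function machinery is not needed and is not what the paper does; as written, your proof has two acknowledged unproved steps and would not compile into a complete argument without them.
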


The remainder of this section is devoted to the proof of Theorem~\ref{thm_paramApproxAlg} (and Theorem~\ref{thm:FPTAlg}).

\subsection{Preparations}

Let $A\in\{ 0,1\}^{m\times n}$ and fix an optimal solution $u, v$ to the problem, realizing $\opt = \nnzs{A - u v^{T}}$. Moreover, set $\alpha \Def  \nnz{u}$ and $\beta \Def  \nnz{v}$. 
We start with the following technical preparations.

\begin{lem} \label{lem_technical}
  For any row $i \in [m]$ let $x_i$ be the number of 0's in columns selected by $v$, i.e., $x_i \Def  \{ j \in [n] \mid A_{i,j} = 0, \, v_j = 1\}$, and let $y_i$ be the number of 1's in columns not selected by~$v$, i.e., $y_i \Def  \{ j \in [n] \mid A_{i,j} = 1, \, v_j = 0\}$. 
  Let $R = \{i \in [m] \mid u_i = 1\}$ be the rows selected by $u$, and let $\bar R \Def  [m] \setminus R$. Symmetrically, let $C$ be the columns selected by $v$.
  Then we have
  \begin{enumerate}
    \item $\nnz{A_{i,:}} = \beta - x_i + y_i$ for any $i \in [m]$, \label{line_xyAbeta}
    \item $\opt = \sum_{i \in R} (x_i + y_i) + \sum_{i \in \bar R} (\beta - x_i + y_i)$, \label{line_optxyeq}
    \item $\opt \ge \sum_{i \in R} |x_i - y_i|$, \label{line_optxyineq}
    \item $x_i \le \beta / 2$ for any $i \in R$, and $x_i \ge \beta/2$ for any $i \in \bar R$, \label{line_xbeta}
    \item $\opt \ge \sum_{i=1}^m \min\{ \nnz{A_{i,:}}, |\nnz{A_{i,:}} - \beta | \}$, \label{line_optnnz}
    \item $\left| \nnz{A}-\alpha \beta \right| \le \opt$, \label{line_nnzalphabetaopt}
    \item If $\opt \le \phi \nnz{A}$ then
 $(1-\phi) \nnz{A} \le \alpha \beta \le (1+\phi) \nnz{A}$. \label{line_nnzaphabeta}
  \end{enumerate}
\end{lem}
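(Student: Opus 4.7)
The plan is to verify each of the seven claims in order, with later claims building on the earlier counting identities.

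For claim 1, I would partition the entries of row $i$ by whether they lie in the support $C$ of $v$. Among the $\beta$ positions in $C$, by definition $x_i$ are zero and hence $\beta - x_i$ are one; among the $n-\beta$ positions outside $C$, by definition $y_i$ are one. Adding these gives $\nnz{A_{i,:}} = \beta - x_i + y_i$. For claim 2, I would write $\opt = \sum_i \nnzs{(A - uv^T)_{i,:}}$: for $i \in R$ the $i$-th row of $uv^T$ equals $v^T$, so the disagreements are precisely the $x_i$ zeros of $A_{i,:}$ inside $C$ plus the $y_i$ ones outside $C$; for $i \in \bar R$ the $i$-th row of $uv^T$ is zero, so the disagreements equal $\nnz{A_{i,:}}$, which is $\beta - x_i + y_i$ by claim 1.

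Claim 3 follows by dropping the non-negative $\bar R$-terms in the formula from claim 2 and using $x_i + y_i \ge |x_i - y_i|$ for each $i \in R$. Claim 4 is a one-step exchange argument on the per-row costs from claim 2: if some $i \in R$ had $x_i > \beta/2$, then flipping $u_i$ from $1$ to $0$ replaces the row-$i$ contribution $x_i + y_i$ by $\beta - x_i + y_i$, strictly reducing $\opt$ and contradicting optimality; the symmetric argument gives $x_i \ge \beta/2$ for $i \in \bar R$ (with weak inequality, since ties may be broken either way).

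For claim 5, I would bound the per-row cost from claim 2 by $\min\{\nnz{A_{i,:}}, |\nnz{A_{i,:}} - \beta|\}$: for $i \in R$, claim 1 gives $|\nnz{A_{i,:}} - \beta| = |y_i - x_i| \le x_i + y_i$, while for $i \in \bar R$ the contribution equals $\nnz{A_{i,:}}$ directly, so in both cases the minimum is dominated. Claim 6 is the general inequality $|\nnz{A} - \nnz{B}| \le \nnzs{A - B}$ for $\{0,1\}$-matrices $A, B$ (each disagreeing coordinate contributes $\pm 1$ to the count difference, each agreeing coordinate contributes $0$), applied to $B = uv^T$ with $\nnz{B} = \nnz{u}\cdot\nnz{v} = \alpha\beta$. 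Claim 7 is then immediate from claim 6 together with the hypothesis $\opt \le \phi \nnz{A}$. No single step looks like a real obstacle; the lemma is a preparatory bookkeeping result whose content is fixing the notation $x_i, y_i, R, C$ and chaining together a row-wise count, a one-move exchange argument, and the elementary inequality between $\nnz{\cdot}$-distance and $\nnz{\cdot}$-difference.
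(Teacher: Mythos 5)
Your proof is correct and follows the paper's proof essentially step for step on claims (1)--(5) and (7): the same row-wise partition by membership in $C$ for (1), the same per-row decomposition of $\opt$ for (2), the same dropping-nonnegative-terms and exchange arguments for (3)--(4), and the same combination for (5) and (7).

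The one place you genuinely diverge is claim (6). The paper reuses the row-wise machinery it has just built: starting from the expression in (2), it bounds $\opt \ge \sum_{i \in R} |\nnzs{A_{i,:}} - \beta| + \sum_{i \in \bar R} \nnzs{A_{i,:}}$ and then drops the absolute values in two opposite ways to obtain both $\opt \ge \nnz{A} - \alpha\beta$ and $\opt \ge \alpha\beta - \nnz{A}$. You instead invoke the clean entrywise inequality $|\nnz{A} - \nnz{B}| \le \nnzs{A - B}$ for binary matrices $A,B$ (valid because on binary entries $\nnzs{A-B} = \sum_{i,j}|A_{i,j}-B_{i,j}|$ while $\nnz{A}-\nnz{B} = \sum_{i,j}(A_{i,j}-B_{i,j})$, so it is the triangle inequality), applied to $B = uv^T$ with $\nnz{uv^T} = \alpha\beta$ since $u,v$ are binary. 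Your argument is a bit more self-contained and transparent, and it avoids re-expanding the row sums; the paper's version has the small advantage of passing through intermediate quantities it has already introduced and so emphasizes that (6) is a corollary of (5). Both are correct, and the difference is purely stylistic.
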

\begin{proof}
  For (\ref{line_xyAbeta}), note that in the $\beta$ columns $C$ selected by $v$, row $i$ has $\beta - x_i$ 1's, and in the remaining $n-|C|$ columns row $i$ has $y_i$ 1's. Hence, the total number of 1's in row $i$ is $\nnz{A_{i,:}} = \beta - x_i  +y_i$.  

  (\ref{line_optxyeq}) We split $\opt = \nnzs{A - u v^{T}}$ into a sum over all rows, so that $\opt = \sum_{i=1}^m \nnzs{ A_{i,:} - u_i v^T }$. For $i \in \bar R$, the $i$-th term of this sum is simply $\nnz{A_{i,:}} = \beta - x_i + y_i$. For $i \in R$, the $i$-th term is $\nnzs{ A_{i,:} - v^T } = x_i + y_i$.
  
  (\ref{line_optxyineq}) follows immediately from (\ref{line_optxyeq}). 
  
  (\ref{line_xbeta}) follows from (\ref{line_optxyeq}), since for $x_i > \beta/2$ and $i \in R$ we can change $u_i$ from 1 to 0, reducing the contribution of row $i$ from $x_i + y_i$ to $\beta - x_i + y_i$, which contradicts optimality of $\opt$. 
  
  For (\ref{line_optnnz}), we use that 
  $x_i + y_i \ge |x_i - y_i| = |\nnz{A_{i,:}} - \beta|$ by (\ref{line_xyAbeta}),
  and
  \[
  \opt = \sum_{i \in R} (x_i + y_i) + \sum_{i \in \bar R} (\beta - x_i + y_i) = 
  \sum_{i \in R} (x_i + y_i) + \sum_{i \in \bar R} \nnz{A_{i,:}}.
  \] 
  
  (\ref{line_nnzalphabetaopt}) is shown similarly to (\ref{line_optnnz}) by noting that
  \begin{align*}
    \opt &= \sum_{i \in R} (x_i + y_i) + \sum_{i \in \bar R} (\beta - x_i + y_i) \ge \sum_{i \in R} |\nnz{A_{i,:}} - \beta| + \sum_{i \in \bar R} \nnz{A_{i,:}} \\
    &\ge \sum_{i \in R} (\nnz{A_{i,:}} - \beta) + \sum_{i \in \bar R} \nnz{A_{i,:}} = \nnz{A} - \alpha \beta, 
  \end{align*}
  and similarly 
  \[ \opt \ge \sum_{i \in R} |\nnz{A_{i,:}} - \beta| + \sum_{i \in \bar R} \nnz{A_{i,:}} \ge \sum_{i \in R} (\beta - \nnz{A_{i,:}}) - \sum_{i \in \bar R} \nnz{A_{i,:}} = \alpha \beta - \nnz{A}. \]
  
  Finally, (\ref{line_nnzaphabeta}) follows immediately from (\ref{line_nnzalphabetaopt}) by plugging in the upper bound $\opt \le \phi \nnz{A}$.
\end{proof}

\subsection{Approximating $\alpha$ and $\beta$}

We now show how to approximate $\alpha = \nnz{u}$ and $\beta = \nnz{v}$, where $u,v$ is an optimal solution.

\begin{lem} \label{lem_approxalphabeta}
	Given $A\in\{ 0,1\}^{m\times n}$ and $\phi \in (0,1/30]$ with $\opt / \nnz{A} \le \phi$, 
	we can compute in time $O(\nnz{A} + m+n)$ an integer $\tilde \beta \in [m]$ with 
	\[
		\frac{1-3\phi}{1-\phi} \beta \le \tilde \beta \le \frac{1+\phi}{1-\phi} \beta.
	\] 
	Symmetrically, we can approximate $\alpha$ by $\tilde \alpha$. If we are additionally given the number of 1's in each row and column, then the running time becomes $O(m+n)$.
\end{lem}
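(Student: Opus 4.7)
The plan is to first reduce the task of approximating $\beta$ to that of approximating $\alpha = \nnz{u}$, and then to extract $\alpha$ from the row sums $r_i := \nnzs{A_{i,:}}$. By Lemma~\ref{lem_technical}(7), under the assumption $\opt \le \phi\nnz{A}$ we have $(1-\phi)\nnz{A} \le \alpha\beta \le (1+\phi)\nnz{A}$, so given any estimate $\tilde\alpha$ of $\alpha$ the value $\tilde\beta := \lfloor \nnz{A}/\tilde\alpha \rfloor$ approximates $\beta$ with a multiplicative error controlled jointly by $\tilde\alpha/\alpha$ and the $(1\pm\phi)$ gap between $\alpha\beta$ and $\nnz{A}$.

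The combinatorial handle on $\alpha$ is provided by parts (4) and (2) of Lemma~\ref{lem_technical}. Part (4) says that every one of the $\alpha$ rows in $R$ satisfies $r_i \ge \beta/2$. Part (2) gives $\sum_{i \in \bar R} r_i \le \opt \le \phi\nnz{A}$, so at most $2\phi\nnz{A}/\beta \le \tfrac{2\phi}{1-\phi}\alpha$ rows outside $R$ can cross the threshold $\beta/2$. Hence the natural estimator $\tilde\alpha^\star := \big|\{i \in [m] : r_i \ge \beta/2\}\big|$ satisfies $\alpha \le \tilde\alpha^\star \le \tfrac{1+\phi}{1-\phi}\alpha$; plugging this into $\tilde\beta = \lfloor \nnz{A}/\tilde\alpha^\star \rfloor$ and expanding with the $\alpha\beta \in [(1-\phi)\nnz{A},(1+\phi)\nnz{A}]$ sandwich delivers the claimed bounds after elementary algebra (the constants $\phi \le 1/30$ provide enough slack to absorb the cross-terms).

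The main obstacle is that $\beta$ is not known in advance, so the threshold $\beta/2$ cannot be applied directly. My resolution is to enumerate candidate thresholds. After computing the row sums, build a histogram $h_r := |\{i : r_i = r\}|$ for $r \in \{0,\ldots,n\}$ and its suffix sums $H(t) := \sum_{r \ge t} h_r$ in a single $O(m+n)$ pass; then $H(t) = |\{i : r_i \ge t\}|$ is available in $O(1)$ per query. For each candidate $\tilde\alpha = j$, the corresponding $\tilde\beta = \lfloor \nnz{A}/j \rfloor$ is self-consistent exactly when $H(\lceil \tilde\beta/2 \rceil) = j$ (up to $O(\phi)$ slack). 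Scan $j$ from small to large and return the first value satisfying this self-consistency check. The delicate step is to verify that the chosen $j$ lies in the interval $[\alpha,\tfrac{1+\phi}{1-\phi}\alpha]$: at $j = \tilde\alpha^\star$ the check is passed by construction, and for $j$ substantially smaller or larger than $\tilde\alpha^\star$ the quantities $H(\lceil \tilde\beta/2 \rceil)$ and $j$ are forced apart because $r_{(j)}$ on either side of $j=\tilde\alpha^\star$ moves away from $\beta/2$ by more than the $(1\pm O(\phi))$ tolerance.

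For the running time: computing all row sums costs $O(\nnz{A})$ by a single sweep over the nonzeros, or $O(m)$ if the sums are supplied; the histogram and suffix-sum construction is $O(m+n)$; and the scan over $j$ is $O(m+n)$. This totals $O(\nnz{A}+m+n)$, and $O(m+n)$ in the variant where row and column sums are given as input. The estimator $\tilde\alpha$ for $\alpha$ is entirely symmetric, working from column sums instead of row sums.
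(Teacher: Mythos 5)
Your approach is genuinely different from the paper's, and the gap lies precisely in the step you flag as ``delicate.'' The paper does not go through $\alpha$ at all: it sets
\[
\tilde\beta \;=\; \arg\min_{\beta'\in[m]}\; \sum_{i=1}^m \min\Bigl\{\,\nnz{A_{i,:}},\;\bigl|\nnz{A_{i,:}}-\beta'\bigr|\,\Bigr\},
\]
and then uses Lemma~\ref{lem_technical}.\ref{line_optnnz} to argue that the optimal value of this objective is at most $\opt$, while Lemma~\ref{lem_technical}.\ref{line_optxyineq} plus \ref{line_nnzaphabeta} show that any $\beta'$ with $|\beta'-\beta|>\tfrac{2\phi}{1-\phi}\beta$ forces the objective above $\opt$, yielding a two-sided bound $|\tilde\beta-\beta|\le\tfrac{2\phi}{1-\phi}\beta$ directly. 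The objective is evaluable in $O(1)$ per $\beta'$ after sorting and prefix sums, giving the stated running time. Your route (threshold rows at $\beta/2$ to estimate $\alpha$, then set $\tilde\beta=\lfloor\nnz{A}/\tilde\alpha\rfloor$) relies on Lemma~\ref{lem_technical}.\ref{line_optxyeq}, \ref{line_xbeta}, \ref{line_nnzaphabeta} instead of \ref{line_optnnz}, which is a legitimately different combinatorial use of the structure.

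The unresolved part of your argument is the self-consistency scan. You correctly observe that, \emph{given} $\beta$, the estimator $\tilde\alpha^\star=|\{i:r_i\ge\beta/2\}|$ satisfies $\alpha\le\tilde\alpha^\star\le\tfrac{1+\phi}{1-\phi}\alpha$. But once $\beta$ is replaced by a candidate $\lfloor\nnz{A}/j\rfloor$, the claim that the test ``$H(\lceil\tilde\beta(j)/2\rceil)\approx j$'' passes only near $j=\alpha$ is not established, and it is genuinely not obvious: both $j$ and $H(\lceil\tilde\beta(j)/2\rceil)$ are non-decreasing in $j$, so the ``crossing'' intuition is right, but proving it requires quantitative control on how flat $H$ is in a $\Theta(\phi\beta)$-window around $\beta/2$. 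That control does follow from Lemma~\ref{lem_technical}.\ref{line_optxyeq} (rows $i\in R$ with $r_i\le(1/2+c\phi)\beta$, and rows $i\in\bar R$ with $r_i\ge(1/2-c\phi)\beta$, both number $O(\phi\alpha)$), but you neither state the slack in the test nor carry out this estimate, so the claim ``for $j$ substantially smaller or larger than $\tilde\alpha^\star$ the quantities are forced apart'' is asserted, not proved. You also do not rule out an early spurious pass of the test at a small $j$ for which $\lceil\tilde\beta(j)/2\rceil$ lands far above $\beta$ and $H$ there happens to equal $j$ within the slack; ruling this out again needs the $\opt\le\phi\nnz{A}$ budget argument, which is not made.

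Separately, even if the scan returns a $j$ with $j\in[\alpha,\tfrac{1+\phi}{1-\phi}\alpha]$, the derived $\tilde\beta=\lfloor\nnz{A}/j\rfloor$ satisfies only $\tilde\beta\ge\tfrac{(1-\phi)}{(1+\phi)^2}\beta-1$, which is strictly weaker than the lemma's lower bound $\tfrac{1-3\phi}{1-\phi}\beta$ (compare $1-3\phi+O(\phi^2)$ versus $1-2\phi+O(\phi^2)$), and the floor introduces an additive $-1$ that the stated bound does not tolerate for small $\beta$. The constants could likely be retuned, but as written the proposal does not prove the lemma as stated. The algorithmic skeleton (row-sum histogram, suffix sums, $O(m+n)$ scan) is fine and matches the claimed running time.
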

\begin{proof}
  Let 
  \[ \Lambda \Def  \min_{\beta' \in [m]} \sum_{i=1}^m \min\Big\{ \nnz{A_{i,:}}, \big|\nnz{A_{i,:}} - \beta' \big| \Big\}, \]
  and let $\tilde \beta$ be the value of $\beta'$ realizing $\Lambda$.
  
  We first verify the approximation guarantee. Consider the set of rows $R$ selected by $u$. Let $x_i,y_i$ for $i \in R$ be as in Lemma~\ref{lem_technical}. Then we have
  \[ \Lambda \ge \sum_{i \in R} \min\Big\{ \nnz{A_{i,:}}, \big|\nnz{A_{i,:}} - \tilde \beta \big| \Big\}
  =  \sum_{i \in R} \min\Big\{ \beta + y_i - x_i, |\beta - \tilde \beta + y_i  - x_i | \Big\}, \]
  where we used Lemma~\ref{lem_technical}.\ref{line_xyAbeta}. 
  Assume for the sake of contradiction that 
  $|\beta - \tilde{\beta}| > \tfrac{2\phi}{1-\phi} \beta$.
  Since $|x-y|\geq|x|-|y|$ for any numbers $x,y$, we obtain 
  \[
  |\beta - \tilde \beta + y_i  - x_i | \ge |\beta - \tilde{\beta}| - |x_i - y_i| > \frac{2\phi}{1-\phi} \beta - |x_i - y_i|.
  \]
   Similarly, we have $\beta + y_i - x_i > \tfrac{2\phi}{1-\phi} \beta - |x_i - y_i|$. Hence,
  \[
  	\Lambda > \sum_{i \in R} \left( \frac{2\phi \beta}{1-\phi} - |x_i - y_i| \right) 
  	\ge \frac{2\phi \beta}{1-\phi} |R| - \opt, 
  \]
  where we used Lemma~\ref{lem_technical}.\ref{line_optxyineq}.
  Since $R$ is the set of rows selected by $u$, we have $|R| = \alpha$. By Lemma~\ref{lem_technical}.\ref{line_nnzaphabeta}, we have $\opt \le \phi \nnz{A} \le \tfrac{\phi}{1-\phi} \alpha \beta$. Together, this yields $\Lambda > \opt$, contradicting $$\Lambda \le \sum_{i=1}^m \min\{ \nnz{A_{i,:}}, |\nnz{A_{i,:}} - \beta | \} \le \opt$$ by Lemma~\ref{lem_technical}.\ref{line_optnnz}. Hence, $|\beta - \tilde \beta| \le \tfrac{2\phi}{1-\phi} \beta$.
  
  It remains to design a fast algorithm. We first compute all numbers $\nnz{A_{i,:}}$ in time $O(\nnz{A})$ (this step can be skipped if we are given these numbers as input). We sort these numbers, obtaining a sorted order $\nnzs{A_{\pi(1),:}} \le \ldots \le \nnzs{A_{\pi(m),:}}$. Using counting sort, this takes time $O(m+n)$. We precompute prefix sums 
  $P(k) \Def  \sum_{\ell = 1}^k \nnzs{A_{\pi(\ell),:}}$,
  which allows us to evaluate in constant time any sum 
  \[
  \sum_{\ell=x}^{y} \nnzs{A_{\pi(\ell),:}} = P(y) - P(x-1).
  \]
  
  Finally, we precompute the inverse 
  \[
  \ell(\beta') \Def  \max\{ \ell \mid \nnzs{A_{\pi(\ell),:}} \le \beta' \},
  \]
  or $\ell(\beta')=0$ if there is no $\ell$ with $\nnzs{A_{\pi(\ell),:}} \le \beta'$. By a simple sweep, all values $\ell(\beta')$ can be computed in total time $O(m+n)$.
  
  Note that for any fixed $\beta'$ and row $i$, the term realizing 
  $\min\{ \nnz{A_{i,:}}, |\nnz{A_{i,:}} - \beta' | \}$ is equal to:\\
  (a) $\nnz{A_{i,:}}$ if $\nnz{A_{i,:}} \le \beta'/2$;$\quad$
  (b) $\beta' - \nnz{A_{i,:}}$, if $\beta'/2 < \nnz{A_{i,:}} \le \beta'$;$\quad$ and $\quad$
  (c) $\nnz{A_{i,:}} - \beta'$, if $\nnz{A_{i,:}} > \beta'$.
  Hence, we obtain
  \begin{align*}
   &\sum_{i=1}^n \min\Big\{ \nnz{A_{i,:}}, \big|\nnz{A_{i,:}} - \beta' \big| \Big\}  \\
    &= \left(\sum_{i=1}^{\ell(\beta'/2)}\nnz{A_{\pi(i),:}}\right) + 
    \left(\sum_{i=\ell(\beta'/2)+1}^{\ell(\beta')}\beta'-\nnz{A_{\pi(i),:}}\right)
     + \left(\sum_{i=\ell(\beta')+1}^{n}\nnz{A_{\pi(i),:}}-\beta'\right) \\
	& = P(n) - 2\left[P(\ell(\beta')) - P(\ell(\beta'/2))\right] -
    \left[n+\ell(\beta'/2)-2\ell(\beta')\right]\beta'.
  \end{align*}
  This shows that after the above precomputation the sum $\sum_{j=1}^n \min\{ \nnz{A_{i,:}}, |\nnz{A_{:j}} - \beta' | \}$ can be evaluated in time $O(1)$ for any $\beta'$. Minimizing over all $\beta' \in [m]$ yields $\tilde \beta$. This finishes our algorithm, which runs in total time $O(\nnz{A} + m + n)$, or $O(m+n)$ if we are given the number of 1's in each row and column.
\end{proof}

\subsection{The Algorithm}

We now design an approximation algorithm for the \probOne problem that will yield Theorem~\ref{thm_paramApproxAlg}. 
We present the pseudocode of this Algorithm \ref{alg_pL0rank1} below.

\begin{algorithm}[H]
	\textbf{Input:} $A\in\left\{ 0,1\right\} ^{m\times n}$ and $\phi \in (0,1/80]$ such that
	$\opt / \nnz{A} \le \phi$.\\
	\textbf{Output:} Vectors $\tu \in \{0,1\}^{m}, \, \tv \in \{0,1\}^{n}$ 
	such that $\nnzs{A - \tu \cdot \tv^{T}} \le (1+\eps) \opt$.\\
	
	1. Compute approximations $\frac{1-3\phi}{1-\phi} \alpha \le \tilde \alpha \le \frac{1+\phi}{1-\phi} \alpha$ and $\frac{1-3\phi}{1-\phi} \beta \le \tilde \beta \le \frac{1+\phi}{1-\phi} \beta$ using Lemma~\ref{lem_approxalphabeta}.
	
	Initialize $R^R \Def  [m]$,  $C^R \Def  [n]$, $R^S \Def  \emptyset$, $C^S \Def  \emptyset$.
	\smallskip\smallskip
	
	2. For any row $i$, if $\nnz{A_{i,:}} <\frac{1-\phi}{1+\phi} \cdot \tfrac{\tilde \beta}2$ then set $\tu_{i}=0$ and remove $i$ from $R^R$.
	
	\textcolor{white}{5.} For any column $j$, if $\nnz{A_{:,j}} <\frac{1-\phi}{1+\phi} \cdot \tfrac{\tilde \alpha}2$ then set $\tv_{j}=0$ and remove $j$ from $C^R$.
	\smallskip\smallskip
	
	3. For any $i \in R^R$ compute an estimate $X_i$ with $\big| X_i - \nnzs{A_{i,C^R}} \big| \le \tfrac 19 |C^R|$.
	
	\textcolor{white}{5.} For any $j \in C^R$ compute an estimate $Y_j$ with $\big| Y_j - \nnzs{A_{R^R,j}} \big| \le \tfrac 19 |R^R|$.
	\smallskip\smallskip	
	
	4. For any $i \in R^R$, if $X_i > \tfrac 23 \tilde \beta$ then set $\tu_{i}=1$ and add $i$ to $R^S$.
	
	\textcolor{white}{5.} For any $j \in C^R$, if $Y_j > \tfrac 23 \tilde \alpha$ then set $\tv_{j}=1$ and add $j$ to $C^S$.
	\smallskip\smallskip	
	
	5. For any $i \in R^R \setminus R^S$, compute an estimate $X'_i$ with $| X'_i - \nnzs{A_{i,C^S}} | \le \phi |C^S|$,
	
	\textcolor{white}{5.} For any $j \in C^R \setminus C^S$, compute an estimate $Y'_j$ with $| Y'_j - \nnzs{A_{R^S,j}} | \le \phi |R^S|$.
	\smallskip\smallskip	
	
	6. For any $i \in R^R \setminus R^S$, set $\tu_i = 1$ if $X'_i \ge |C^S|/2$ and 0 otherwise,
	
	\textcolor{white}{5.} For any $j \in C^R \setminus C^S$, set $\tv_j = 1$ if $Y'_j \ge |R^S|/2$ and 0 otherwise.
	\smallskip\smallskip	
	
	7. \textbf{Return} $(\tu,\tv)$.
	
	\caption{Boolean $\ell_0$-Rank-$1$ With Small Optimal Value}
	\label{alg_pL0rank1}
\end{algorithm}

\paragraph{Running Time}
By Lemma~\ref{lem_approxalphabeta}, Step 1 runs in time $O(\nnz{A} + m+n)$, or in time $O(m+n)$ if we are given the number of 1's in each row and column. Steps 2, 4, and 6 clearly run in time $O(m+n)$. For steps 3 and 5, there are two ways to implement them. 

Variant (1) is an exact algorithm. We enumerate all nonzero entries of $A$ and count how many contribute to the required numbers $\nnzs{A_{i,C^R}}, \nnzs{A_{R^R,j}}$ etc. This takes total time $O(\nnz{A})$, and hence the total running time of the algorithm is $O(\nnz{A} + m+n)$. 

Variant (2) uses random sampling. In order to estimate $\nnzs{A_{i,C^R}}$, consider a random variable $Z$ that draws a uniformly random column $j \in C^R$ and returns 1 if $A_{i,j} \ne 0$ and 0 otherwise. Then $\mathbb{E}[Z] = \nnzs{A_{i,C^R}} / |C^R|$. Taking independent copies $Z_1,\ldots,Z_\ell$ of $Z$, where $\ell = \Theta( \log(mn) / \delta^2)$ with sufficiently large hidden constant, a standard Chernoff bound argument shows that w.h.p.\ 
\[
	\left| (Z_1+\ldots+Z_\ell) \cdot \frac{|C^R|}{\ell} - \nnzs{A_{i,C^R}}\right| \le \delta\cdot |C^R|,
\]
which yields the required approximation. For Step 3 we use this procedure with $\delta = \tfrac 19$ and obtain running time $O(\log (mn))$ per row and column, or $O((m+n) \log (mn))$ in total. For Step 5 we use $\delta = \phi$, resulting in time $O(\phi^{-2}\log(mn))$ for computing one estimate $X'_i$ or $Y'_j$. By Claim~\ref{clm_Step3_Cnt} below there are only $O(\phi(m+n))$ rows and columns in $R^R \setminus R^S$ and $C^R \setminus C^S$, and hence the total running time for Step 5 is $O(\phi^{-1}(m+n)\log(mn))$. This dominates the total running time.

Combining both variants, we obtain the claimed running time of
\[
	O(\min\{\nnz{A} + m+n, \phi^{-1} (m+n) \log (mn)\}).
\]
 
\newpage
\paragraph{Correctness}
In the following we analyze the correctness of the above algorithm.

\begin{clm}\label{clm_Step2_Correct} 
	For any row $i$ deleted in Step 2 we have $\tu_i = u_i$. Symmetrically, for any column $j$ deleted in Step 2 we have $\tv_j = v_j$.
\end{clm}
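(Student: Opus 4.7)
\textbf{Plan for proving Claim~\ref{clm_Step2_Correct}.} The aim is to verify that the deletion threshold used in Step 2 is conservative enough so that every row it deletes was already assigned 0 by the (unknown) optimal $u$. Concretely, Step 2 sets $\tu_i = 0$ precisely when $\nnz{A_{i,:}} < \tfrac{1-\phi}{1+\phi} \cdot \tilde\beta/2$, so it suffices to show that any $i$ with $u_i = 1$ satisfies $\nnz{A_{i,:}} \ge \tfrac{1-\phi}{1+\phi} \cdot \tilde\beta/2$.

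First, I will invoke Lemma~\ref{lem_technical}.\ref{line_xbeta}, which states that $x_i \le \beta/2$ for every $i \in R$ (i.e., every row with $u_i = 1$). Combined with the identity $\nnz{A_{i,:}} = \beta - x_i + y_i$ from Lemma~\ref{lem_technical}.\ref{line_xyAbeta} and the nonnegativity of $y_i$, this immediately gives $\nnz{A_{i,:}} \ge \beta - x_i \ge \beta/2$ whenever $u_i = 1$.

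Next, I will use the approximation guarantee $\tilde\beta \le \tfrac{1+\phi}{1-\phi}\beta$ from Lemma~\ref{lem_approxalphabeta}, which rearranges to $\tfrac{1-\phi}{1+\phi}\tilde\beta/2 \le \beta/2$. The contrapositive then follows: if the algorithm deletes row $i$, i.e. $\nnz{A_{i,:}} < \tfrac{1-\phi}{1+\phi}\tilde\beta/2 \le \beta/2$, then we cannot have $u_i = 1$, so $u_i = 0 = \tu_i$. The argument for columns is entirely symmetric, using the corresponding statements of Lemma~\ref{lem_technical} applied to the transpose together with the approximation guarantee for $\tilde\alpha$.

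The proof is essentially a one-liner after the right lemmas are cited; the only subtlety is that the threshold in Step 2 is scaled by $\tfrac{1-\phi}{1+\phi}$ exactly to absorb the multiplicative error in $\tilde\beta$, so no case analysis beyond the direct contrapositive is required.
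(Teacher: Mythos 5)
Your proof is correct and follows essentially the same route as the paper's: the paper also bounds $\tilde\beta \le \tfrac{1+\phi}{1-\phi}\beta$ to conclude a deleted row has $\nnz{A_{i,:}} < \beta/2$, then combines this with the definition of $x_i$ (via $x_i \ge \beta - \nnz{A_{i,:}}$, equivalent to your use of $\nnz{A_{i,:}} = \beta - x_i + y_i$ and $y_i \ge 0$) and Lemma~\ref{lem_technical}.\ref{line_xbeta} to rule out $u_i = 1$. Your presentation is simply the contrapositive phrasing of the same argument.
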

\begin{proof}
	If row $i$ is deleted, then by the approximation guarantee of $\tilde \beta$ we have
	\[
	\nnz{A_{i,:}} <\frac{1-\phi}{1+\phi} \cdot \frac{\tilde\beta}{2} \le \frac{\beta}{2}
	\] 
	Note that for $x_i$ (the number of 0's in row $i$ in columns selected by $v$) we have $x_i \ge \beta - \nnz{A_{i,:}}$. Together, we obtain $x_i > \beta/2$, and thus row $i$ cannot be selected by $u$, by Lemma~\ref{lem_technical}.\ref{line_xbeta}. Hence, we have $u_i = 0 = \tu_i$.
	The statement for the columns is symmetric.
\end{proof}

\begin{clm}\label{clm_Step2_Cnt} 
	After Step 2, it holds for the remaining rows $R^{R}$ and columns $C^{R}$ that
	\[
	|R^{R}|\leq \left(1 + \frac{1+\phi}{1-3\phi} \cdot \frac{2\phi}{1-\phi} \right) \alpha
	\quad\text{ and }\quad
	|C^{R}|\leq \left(1 + \frac{1+\phi}{1-3\phi} \cdot \frac{2\phi}{1-\phi} \right) \beta.
	\]
\end{clm}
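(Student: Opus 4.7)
The plan is to bound the size of $R^R$ by splitting it according to whether a row is selected by the optimal vector $u$; the bound on $C^R$ is entirely symmetric. Let $\tau \Def \frac{1-\phi}{1+\phi} \cdot \frac{\tilde\beta}{2}$ denote the threshold used in Step~2, so that after Step~2 every row $i \in R^R$ satisfies $\nnz{A_{i,:}} \ge \tau$. The first step is to translate the estimate $\tilde\beta$ into a lower bound on $\tau$ in terms of $\beta$: combining $\tilde\beta \ge \frac{1-3\phi}{1-\phi}\beta$ from Lemma~\ref{lem_approxalphabeta} with the definition of $\tau$ yields $\tau \ge \frac{(1-3\phi)\beta}{2(1+\phi)}$.

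Next, I would write $|R^R| = |R^R \cap R| + |R^R \cap \bar R|$, where $R$ and $\bar R$ are as in Lemma~\ref{lem_technical}. Since $|R| = \alpha$, the first term contributes at most $\alpha$. For the second term, observe that every row $i \in \bar R$ contributes exactly $\nnz{A_{i,:}}$ to $\opt$ by Lemma~\ref{lem_technical}.\ref{line_optxyeq} (since $u_i = 0$ means the $i$-th term in $\nnzs{A - uv^T}$ is just $\nnz{A_{i,:}}$), so
\[
\opt \ge \sum_{i \in R^R \cap \bar R} \nnz{A_{i,:}} \ge \tau \cdot |R^R \cap \bar R|.
\]

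From here the proof becomes a short calculation. Using the hypothesis $\opt \le \phi \nnz{A}$ together with Lemma~\ref{lem_technical}.\ref{line_nnzaphabeta}, which gives $\nnz{A} \le \alpha\beta/(1-\phi)$, one obtains
\[
|R^R \cap \bar R| \;\le\; \frac{\opt}{\tau} \;\le\; \frac{\phi\,\alpha\beta}{1-\phi} \cdot \frac{2(1+\phi)}{(1-3\phi)\beta} \;=\; \frac{1+\phi}{1-3\phi} \cdot \frac{2\phi}{1-\phi}\,\alpha.
\]
Adding the $\alpha$ contribution from $R^R \cap R$ yields precisely the claimed bound on $|R^R|$, and the analogous argument with the roles of rows/columns and $\alpha/\beta$ swapped (noting Lemma~\ref{lem_technical} is symmetric in the two factors, as is the output of Lemma~\ref{lem_approxalphabeta}) gives the bound on $|C^R|$.

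The proof is essentially a bookkeeping exercise given the tools already established, so I do not anticipate a major obstacle. The one subtle point to be careful about is the direction of the approximation guarantee on $\tilde\beta$: we need the lower bound on $\tilde\beta$ (hence on $\tau$) in order to upper bound $|R^R \cap \bar R|$, and we must check that the constants $\frac{1-\phi}{1+\phi}$ in the definition of $\tau$ and $\frac{1-3\phi}{1-\phi}$ from Lemma~\ref{lem_approxalphabeta} compose cleanly to produce the factor $\frac{1-3\phi}{1+\phi}$ that appears (inverted) in the final bound.
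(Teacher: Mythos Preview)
Your proposal is correct and follows essentially the same approach as the paper: both split $R^R$ into rows selected by $u$ (trivially at most $\alpha$) and rows not selected by $u$, observe that each of the latter contributes at least $\tau \ge \frac{1-3\phi}{1+\phi}\cdot\frac{\beta}{2}$ to $\opt$, and then combine with $\opt \le \frac{\phi}{1-\phi}\alpha\beta$ from Lemma~\ref{lem_technical}.\ref{line_nnzaphabeta}. The only cosmetic difference is that the paper first invokes Claim~\ref{clm_Step2_Correct} to assert $R \subseteq R^R$ and hence writes $R^R = R \cup R'$, whereas you use the trivial bound $|R^R \cap R| \le |R| = \alpha$, which suffices for the upper bound and in fact avoids that extra dependency.
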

\begin{proof}
	By Claim \ref{clm_Step2_Correct} the $\alpha$ rows $R$ selected by $u$
	remain. We split the rows $R^R$ remaining after Step 2 into $R \cup R'$, 
	and bound $|R'|$ from above. Since any $i \in R'$ is not selected by $u$, 
	it contributes $\nnz{A_{i,:}}$ to $\opt$. Note that 
	\[
	\nnz{A_{i,:}} \ge \frac{1-\phi}{1+\phi} \cdot \frac {\tilde \beta}2 \ge \frac{1-\phi}{1+\phi} \cdot \frac{1-3\phi}{1-\phi} \cdot \frac \beta 2 = \frac{1-3\phi}{1+\phi} \cdot \frac \beta 2,
	\]
	and thus $|R'| \le \opt \cdot \frac{1+\phi}{1-3\phi} \cdot \frac 2 \beta$.
	Since 
	\[
	\opt \le \phi \nnz{A} \le \frac{\phi}{1-\phi}\cdot \alpha \beta
	\]
	by Lemma~\ref{lem_technical}.\ref{line_nnzaphabeta}, we obtain $|R'| \le \frac{1+\phi}{1-3\phi} \cdot \tfrac{2\phi}{1-\phi} \cdot \alpha$. Thus, we have in total
	\[
	|R^R| = |R| + |R'| \le 
	\left(1 + \frac{1+\phi}{1-3\phi} \cdot \frac{2\phi}{1-\phi} \right) \alpha.	
	\]
	The statement for the columns is symmetric.
\end{proof}

\begin{clm}
	\label{clm_Step3_Correct} The rows and columns selected in Step 4
	are also selected by the optimal solution $u,v$, i.e., for any $i \in R^S$ 
	we have $u_i = 1$ and for any $j \in C^S$ we have $v_j = 1$.
\end{clm}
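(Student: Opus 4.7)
The plan is to prove the row case ($i \in R^S \Rightarrow u_i = 1$) by contrapositive; the column case then follows by the completely symmetric argument exchanging the roles of rows and columns, $u$ and $v$, $\alpha$ and $\beta$, and $R$ and $C$. So suppose $u_i = 0$, equivalently $i \in \bar R$ in the notation of Lemma~\ref{lem_technical}; the goal is to deduce $X_i \le \tfrac{2}{3}\tilde\beta$, which prevents $i$ from entering $R^S$ in Step~4.

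First, I would convert the algorithmic threshold into a combinatorial inequality. The estimate from Step~3 satisfies $|X_i - \nnzs{A_{i,C^R}}| \le \tfrac{1}{9}|C^R|$, so if $X_i > \tfrac{2}{3}\tilde\beta$ then
\[
\nnzs{A_{i,C^R}} > \tfrac{2}{3}\tilde\beta - \tfrac{1}{9}|C^R|.
\]
Next, I would bring in the optimal support $C = \{j : v_j = 1\}$ of size $\beta$. By Claim~\ref{clm_Step2_Correct}, no column of $C$ is removed in Step~2, so $C \subseteq C^R$ and $|C^R \setminus C| = |C^R| - \beta$. Splitting the column set and using Lemma~\ref{lem_technical}.\ref{line_xyAbeta},
\[
\nnzs{A_{i,C^R}} \le \nnzs{A_{i,C}} + |C^R\setminus C| = (\beta - x_i) + (|C^R| - \beta),
\]
where $x_i$ denotes the number of zeros in row $i$ among columns of $C$. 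Chaining the two displays and rearranging yields
\[
x_i < \tfrac{10}{9}|C^R| - \tfrac{2}{3}\tilde\beta.
\]

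Finally, I would substitute the approximation bounds $\tilde\beta \ge \tfrac{1-3\phi}{1-\phi}\beta$ from Lemma~\ref{lem_approxalphabeta} and $|C^R| \le \bigl(1 + \tfrac{1+\phi}{1-3\phi}\cdot\tfrac{2\phi}{1-\phi}\bigr)\beta$ from Claim~\ref{clm_Step2_Cnt}, and check that the resulting upper bound on $x_i$ is strictly less than $\beta/2$ for every $\phi \in (0, 1/80]$. This contradicts Lemma~\ref{lem_technical}.\ref{line_xbeta}, which asserts $x_i \ge \beta/2$ whenever $i \in \bar R$, completing the argument.

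The entire proof is bookkeeping with previously established inequalities, and the one place where real care is needed is the closing numerical verification, which is quite tight: at $\phi = 1/80$ the bound on $x_i$ evaluates to roughly $0.491\,\beta$, leaving only a small slack compared with the required $\tfrac{1}{2}\beta$. This is the main obstacle, in that one has to track the constants exactly through Lemma~\ref{lem_approxalphabeta} and Claim~\ref{clm_Step2_Cnt} rather than absorb them into $O(\phi)$ notation, and it is presumably why the hypothesis in Theorem~\ref{thm_paramApproxAlg} restricts $\phi$ to a concrete small constant.
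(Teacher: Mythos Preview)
Your proof is correct and follows essentially the same route as the paper: both start from $X_i > \tfrac{2}{3}\tilde\beta$, use the Step~3 error bound to lower bound $\nnzs{A_{i,C^R}}$, pass to $\nnzs{A_{i,C}}$ via $C \subseteq C^R$ and Claim~\ref{clm_Step2_Cnt}, and then invoke Lemma~\ref{lem_technical}.\ref{line_xbeta} after the same numerical check at $\phi = 1/80$. The only difference is cosmetic---you phrase it as a contrapositive and track $x_i$ directly, while the paper argues directly and tracks $\nnzs{A_{i,C}} = \beta - x_i$---but the chain of inequalities is identical.
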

\begin{proof}
	If row $i$ is selected in Step 4, then we have by the approximation guarantee of $X_i$, definition of Step 4, Claim~\ref{clm_Step2_Cnt}, and Lemma~\ref{lem_approxalphabeta}
	\begin{align*}
	\nnzs{A_{i,C^R}} 
	&\ge X_i - \frac{1}{9} |C^R| > \frac 23 \tilde \beta - 
	\frac{1}{9} \left(1 + \frac{1+\phi}{1-3\phi} \cdot \frac{2\phi}{1-\phi} \right) \beta \\
	& \ge \frac{2}{3} \cdot \frac{1-3\phi}{1-\phi} \beta - 
	\frac{1}{9} \left(1 + \frac{1+\phi}{1-3\phi} \cdot \frac{2\phi}{1-\phi} \right) \beta.
	\end{align*}
	It is easy to see that for sufficiently small $\phi \ge 0$ this yields
	\[
	\nnzs{A_{i,C^R}} > \frac \beta 2 + \frac{1+\phi}{1-3\phi} \cdot \frac{2\phi}{1-\phi} \beta.
	\]
	One can check that $0 \le \phi \le 1/80$ is sufficient. 
	Since there are $|C^{R}|\leq \big(1 + \frac{1+\phi}{1-3\phi} \cdot \tfrac{2\phi}{1-\phi} \big) \beta$ columns remaining, in particular the $\beta$ columns $C \subseteq C^R$ which are selected by $v$, we obtain 
	\[ 
	\nnzs{A_{i,C}} \ge \nnzs{A_{i,C^R}} - (|C^R| - \beta) > \beta /2. 
	\]
	By Lemma~\ref{lem_technical}.\ref{line_xbeta}, we thus obtain that row $i$ is selected by the optimal $u$.
	The statement for the columns is symmetric.
\end{proof}

\begin{clm}
	\label{clm_Step3_Cnt} 
	After Step 4 there are $|R^R \setminus R^S| \le 6 \phi \alpha$ remaining rows 
	and $|C^R \setminus C^S| \le 6 \phi \beta$ remaining columns.
\end{clm}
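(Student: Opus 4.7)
The plan is to decompose $R^R \setminus R^S$ as the disjoint union $(R^R \setminus R) \cup (R \setminus R^S)$ and bound the two pieces separately. This decomposition is valid because Claim~\ref{clm_Step3_Correct} gives $R^S \subseteq R$, while Claim~\ref{clm_Step2_Correct} implies $R \subseteq R^R$ (a row selected by the optimal $u$ cannot have been deleted in Step~2, since Step~2 only deletes rows $i$ with $\tu_i = u_i = 0$).

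The first piece is essentially for free: from Claim~\ref{clm_Step2_Cnt} and $|R| = \alpha$,
\[
  |R^R \setminus R| \;\le\; |R^R| - \alpha \;\le\; \frac{1+\phi}{1-3\phi} \cdot \frac{2\phi}{1-\phi} \cdot \alpha,
\]
which is a small constant multiple of $\phi\alpha$ on $\phi \in (0,1/80]$. For the second piece, fix $i \in R \setminus R^S$. The Step~4 threshold gives $X_i \le \tfrac{2}{3}\tilde{\beta}$, and the Step~3 approximation guarantee yields $\nnzs{A_{i,C^R}} \le X_i + \tfrac{1}{9}|C^R|$. Since Claim~\ref{clm_Step2_Correct} also implies $C \subseteq C^R$ (optimal columns survive Step~2), we have the matching lower bound $\nnzs{A_{i,C^R}} \ge \nnzs{A_{i,C}} = \beta - x_i$, where $x_i$ is the quantity defined in Lemma~\ref{lem_technical}. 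Substituting the bounds on $\tilde{\beta}$ from Lemma~\ref{lem_approxalphabeta} and on $|C^R|$ from Claim~\ref{clm_Step2_Cnt} then gives $x_i \ge \beta - \tfrac{2}{3}\tilde{\beta} - \tfrac{1}{9}|C^R| \ge c_{\phi}\,\beta$, with $c_{\phi}$ an explicit constant tending to $2/9$ as $\phi \to 0$ and bounded away from zero on $(0,1/80]$.

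Now sum over $i \in R \setminus R^S$ and invoke Lemma~\ref{lem_technical}.\ref{line_optxyeq}, which gives $\sum_{i \in R} x_i \le \sum_{i \in R}(x_i + y_i) \le \opt$. Combined with the hypothesis $\opt \le \phi\,\nnz{A}$ and $\nnz{A} \le \tfrac{\alpha\beta}{1-\phi}$ from Lemma~\ref{lem_technical}.\ref{line_nnzaphabeta}, this yields
\[
  |R \setminus R^S| \cdot c_{\phi}\,\beta \;\le\; \sum_{i \in R \setminus R^S} x_i \;\le\; \opt \;\le\; \frac{\phi}{1-\phi}\,\alpha\beta,
\]
and hence $|R \setminus R^S| \le \tfrac{\phi}{c_{\phi}(1-\phi)}\,\alpha$, again a small multiple of $\phi\alpha$. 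Adding the two pieces and verifying that the sum is at most $6\phi\alpha$ on $\phi \in (0, 1/80]$ completes the row bound; the column bound $|C^R \setminus C^S| \le 6\phi\beta$ follows by an identical symmetric argument.

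The main obstacle is purely numerical: one must check that the two contributions $\tfrac{1+\phi}{1-3\phi}\cdot\tfrac{2\phi}{1-\phi}\,\alpha$ and $\tfrac{\phi}{c_{\phi}(1-\phi)}\,\alpha$ really do sum to at most $6\phi\alpha$ throughout the range $\phi \in (0, 1/80]$. This is a routine but finicky calculation with the explicit constants from Lemma~\ref{lem_approxalphabeta} and Claim~\ref{clm_Step2_Cnt}, and uses no ideas beyond those already in the setup.
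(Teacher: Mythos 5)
Your decomposition $R^R \setminus R^S = (R^R \setminus R) \cup (R \setminus R^S)$ and the per-piece lower bounds on the charge to $\opt$ are sound and essentially the same ideas the paper uses. However, the final numerical step fails: bounding the two pieces \emph{separately} and then adding is too lossy. As $\phi \to 0$ your first bound tends to $2\phi\alpha$, and with $c_\phi \to 2/9$ your second bound tends to $\tfrac{9}{2}\phi\alpha$, so the sum tends to $6.5\phi\alpha$, which already exceeds $6\phi\alpha$ even in the easiest regime (and it only gets worse as $\phi$ increases toward $1/80$). The claimed ``routine but finicky calculation'' does not go through as written.

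The reason for the loss is that both of your bounds implicitly charge the \emph{full} $\opt$: piece 1 via the Claim~\ref{clm_Step2_Cnt} argument (which rests on $\sum_{i \in R^R\setminus R}\nnzs{A_{i,:}} \le \opt$), and piece 2 via $\sum_{i \in R\setminus R^S}x_i \le \opt$. But these two sums run over \emph{disjoint} sets of rows ($R^R\setminus R \subseteq \bar R$ and $R\setminus R^S \subseteq R$), so by Lemma~\ref{lem_technical}.\ref{line_optxyeq} they \emph{jointly} sum to at most $\opt$:
\[
\opt \;\ge\; \sum_{i \in R^R\setminus R}\nnzs{A_{i,:}} \;+\; \sum_{i \in R\setminus R^S}(x_i + y_i).
\]
Now use the per-row lower bounds you already derived: $\nnzs{A_{i,:}} \ge \tfrac{1-3\phi}{2(1+\phi)}\beta$ for $i \in R^R\setminus R$ and $x_i \ge c_\phi\beta$ for $i \in R\setminus R^S$, and observe that the second (smaller) constant $c_\phi$ lower-bounds both, giving
\[
|R^R \setminus R^S| \cdot c_\phi\,\beta \;\le\; \opt \;\le\; \frac{\phi}{1-\phi}\,\alpha\beta,
\]
whence $|R^R\setminus R^S| \le \tfrac{\phi}{c_\phi(1-\phi)}\alpha \le 6\phi\alpha$ since $c_\phi \ge \tfrac{1}{5}$ on $(0,1/80]$. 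This joint charging is exactly what the paper does (it lower-bounds the per-row contribution to $\opt$ by $\tfrac{1}{5}\beta$ \emph{uniformly} over all remaining rows, then divides $\opt$ by $\beta/5$). So your approach needs the disjointness observation; without it the constant does not land.
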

\begin{proof}
	After Step 4, every remaining row $i$, for any $0 \le \phi \le 1/80$, satisfies 
	\[
	\nnz{A_{i,:}} \ge \frac{1-\phi}{1+\phi} \cdot \frac{\tilde \beta}2 \ge \frac{1-\phi}{1+\phi} \cdot \frac{1-3\phi}{1-\phi} \cdot \frac{\beta}2 \ge \frac 25 \beta,
	\]
	Moreover, each such row satisfies
	\[
	\nnzs{A_{i,C^R}}\le X_i + \frac 19 |C^R| \le \frac 23 \tilde \beta + \frac 19 |C^R|,
	\]
	which together with $\tilde \beta \le \tfrac{1+\phi}{1-\phi} \beta$ (Lemma~\ref{lem_approxalphabeta}) and $|C^R| \le \big(1 + \frac{1+\phi}{1-3\phi} \cdot \tfrac{2\phi}{1-\phi} \big) \beta$ (Claim~\ref{clm_Step2_Cnt}) yields 
	\[
	\nnzs{A_{i,C^R}}\le \left(\frac 23 \cdot \frac{1+\phi}{1-\phi} + \frac 19 + \frac 19 \cdot \frac{1+\phi}{1-3\phi} \cdot \frac{2\phi}{1-\phi} \right) \beta.
	\]
	It is easy to see that for sufficiently small $\phi \ge 0$ we have $\nnzs{A_{i,C^R}}\le \tfrac 45 \beta$, and it can be checked that $0 \le \phi \le 1/80$ is sufficient.
	
	If $i$ is not selected by $u$, then its contribution to $\opt$ is $\nnz{A_{i,:}} \ge \tfrac 25 \beta$. 
	If $i$ is selected by $u$, then since $C \subseteq C^R$ its contribution to $\opt$ is at least
	\[
	\beta - \nnz{A_{i,C}} \ge \beta - \nnzs{A_{i,C^R}} \ge \beta - \frac 45 \beta = \frac 15 \beta.
	\]
	Thus, the number of remaining rows is at most 
	\[
	\frac \opt {\beta/5} \le \frac{ 5 \phi \alpha \beta}{(1-\phi) \beta} \le 6 \phi \alpha,
	\] 
	where we used Lemma~\ref{lem_technical}.\ref{line_nnzaphabeta}. 
	The statement for the columns is symmetric.
\end{proof}

We are now ready to prove correctness of Algorithm~\ref{alg_pL0rank1}.
\begin{proof}
	[Proof of Theorem~\ref{thm_paramApproxAlg}] 
	The rows and columns removed in Step 2 are also not picked by the optimal solution, by Claim \ref{clm_Step2_Correct}. Hence, in the region $([m] \setminus R^R) \times [n]$ and $[m] \times ([n] \setminus C^R)$ we incur the same error as the optimal solution.
	The rows and columns chosen in Step 4 are also picked by the optimal solution, by Claim \ref{clm_Step3_Correct}. Hence, in the region $R^S \times C^S$ we incur the same error as the optimal solution. We split the remaining matrix into three regions: $(R^R \setminus R^S)\times C^S$, $R^S \times(C^R \setminus C^S)$, and $(R^R \setminus R^S) \times(C^R \setminus C^S)$. 
	
	In the region $(R^R \setminus R^S)\times C^S$ we compute for any row $i \in R^R \setminus R^S$ an additive $\phi |C|$-approximation $X'_i$ of $\nnzs{A_{i,C}}$, and we pick row $i$ iff $X'_i \ge |C|/2$. In case $\big| \nnzs{A_{i,C}} - |C|/2 \big| > \phi |C|$, we have $X'_i \ge |C|/2$ if and only if $\nnzs{A_{i,C}} \ge |C|/2$, and thus our choice for row $i$ is optimal, restricted to region $(R^R \setminus R^S)\times C^S$. Otherwise, if $\big| \nnzs{A_{i,C}} - |C|/2 \big| \le \phi |C|$, then no matter whether we choose row $i$ or not, we obtain approximation ratio 
	\[
	\frac{|C|/2 + \phi |C|}{|C|/2 - \phi |C|} = \frac{1+2\phi}{1-2\phi} \le 1+5\phi,
	\]
	restricted to region $(R^R \setminus R^S)\times C^S$. 
	The region $R^S \times(C^R \setminus C^S)$ is symmetric.
	
	Finally, in region $(R^R \setminus R^S) \times(C^R \setminus C^S)$ we pessimistically assume that every entry is an error. 
	By Claim~\ref{clm_Step3_Cnt} and Lemma~\ref{lem_technical}.\ref{line_nnzaphabeta}, 
	this submatrix has size at most 
	\[
	6 \phi \alpha \cdot 6 \phi \beta \le 36 \phi^2 (1+\phi) \nnz{A} \le 37 \phi^2 \nnz{A}.
	\]
	In total, over all regions, we computed vectors $\tu, \tv$ such that
	\[
	\nnzs{A - \tu \tv^T} \le (1+5\phi) \opt + 37 \phi^2 \nnz{A}.
	\]
	This completes the correctness prove of Algorithm~\ref{alg_pL0rank1}. 
\end{proof}

\subsection{The Exact Algorithm}

We now prove Theorem~\ref{thm:FPTAlg}, i.e., given a matrix $A\in\{0,1\}^{m\times n}$ we exactly solve the \probOne problem in time $2^{O(\opt / \sqrt{\nnz{A}})} \cdot\mathrm{poly}(mn)$ if we have $\psi \Def  \opt / \nnz{A} \le 1/240$. 
This algorithm builds upon the algorithmic
results established in Theorem~\ref{thm_paramApproxAlgCombined} and 
Theorem~\ref{thm_ImprRank1}, and it consists of the following three phases:

1.  Run the algorithm in Theorem~\ref{thm_ImprRank1} to compute 
a $3$-approximation of $\psi=\mathrm{OPT}/\lVert A\rVert_{0}$, 
i.e. a number $\phi \in [\psi, 3\psi]$.

2. Run Steps 1-4 of Algorithm~\ref{alg_pL0rank1}, resulting in selected rows $R^S$ and columns $C^S$, and undecided rows $R' = R^R \setminus R^S$ and columns $C' = C^R \setminus C^S$. As shown above, the choices made by these steps are optimal.

3. For the remaining rows $R'$ and
columns $C'$ we use brute force to
find the optimum solution. Specifically, assume without loss of generality that $|R'| \le |C'|$. Enumerate all Boolean vectors $u' \in \{0,1\}^{R'}$. For each $u'$, set $\tilde u_i = u'_i$ for all $i \in R'$ to complete the specification of a vector $\tilde u \in \{0,1\}^m$. We can now find the optimal choice of vector $\tilde v$ in polynomial time, since the optimal choice is to set $\tilde v_j = 1$ iff column $A_{:,j}$ has more 1's than 0's in the support of $\tilde u$. Since some $u'$ gives rise to the optimal vector $\tilde u = u$, we solve the \probOne problem exactly.

To analyze the running time, note that by Claim~\ref{clm_Step3_Cnt} we have 
\[
\min\{|R'|,|C'|\} \le 6 \phi \min\{\alpha,\beta\} \le 6 \phi \sqrt{\alpha \beta}.
\]
By Lemma~\ref{lem_technical}.\ref{line_nnzaphabeta} and $\phi \le 3 \psi$, we obtain $\min\{|R'|,|C'|\} = O(\psi \sqrt{\nnz{A}})$. Hence, we enumerate $2^{O(\psi \sqrt{\nnz{A}})} = 2^{O(\opt / \sqrt{\nnz{A}})}$ vectors $u'$, and the total running time is $2^{O(\opt / \sqrt{\nnz{A}})} \cdot \mathrm{poly}(mn)$. 
This completes the proof of Theorem~\ref{thm:FPTAlg}.

\newpage
\section{Sample Complexity Lower Bound for Boolean $\ell_{0}$-Rank-$1$}

We now give a lower bound of $\Omega(n/\phi)$ on the number of samples 
of any $(1+O(\phi))$-approximation algorithm for the \probOne problem, 
where $\phi \ge \opt / \nnz{A}$ as before.

\begin{theorem}\label{thm:SampleLowerBoundBRone}
	Let $C \ge 1$.
	Given an $n \times n$ binary matrix $A$ with column adjacency arrays and 
	with row and column sums, and given $\sqrt{\log (n) / n} \ll \phi \leq 1/100C$ 
	such that $\opt / \nnz{A} \le \phi$, computing a $(1+C \phi)$-approximation of
	the \probOne problem requires to read $\Omega(n/\phi)$ entries of $A$ (in the worst case over $A$). 
\end{theorem}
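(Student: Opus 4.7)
The plan is to apply Yao's minimax principle: exhibit a distribution $\mathcal{D}$ over $n\times n$ binary matrices such that every deterministic algorithm reading fewer than $\Omega(n/\phi)$ entries (but receiving the exact row and column sums for free) fails to produce a $(1+C\phi)$-approximate solution with constant probability. I would take $\mathcal{D}$ to be an equal mixture of two sub-distributions $\mathcal{D}_0,\mathcal{D}_1$ satisfying (i) $\opt_A/\nnz{A}\le\phi$ w.h.p.\ under both, (ii) the row- and column-sum profiles of $A$ have identical marginal laws under $\mathcal{D}_0$ and $\mathcal{D}_1$ (so the ``free'' aggregates carry no distinguishing information), and (iii) the optima differ by a multiplicative factor of at least $1+3C\phi$, so that any $(1+C\phi)$-approximation algorithm must, in effect, decide which sub-distribution the input came from.

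Concretely, I would take $\mathcal{D}_0$ to be a ``null'' distribution in which a trivial pair such as $(\mathbf{1},\mathbf{1})$ is optimal with $\opt_0\approx \phi\,\nnz{A}$, realized as a random doubly-regular matrix with roughly $\phi n^2$ zeros. Under $\mathcal{D}_1$ I would plant a hidden vector $v^{\star}\in\{0,1\}^n$ differing from $\mathbf{1}$ on a random subset $S\subseteq[n]$ of size $\Theta(n)$, and bias the entries of columns $j\in S$ so that $(\mathbf{1},v^{\star})$ is optimal with $\opt_1\le (1-3C\phi)\,\opt_0$. The noise will be tuned so that, after averaging over the random $S$, the marginal law of any fixed single entry $A_{i,j}$ differs between $\mathcal{D}_0$ and $\mathcal{D}_1$ by only $O(\phi/n)$ in total variation, while the planted gap in the objective is a full $\Omega(C\phi)\cdot\opt$.

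Given such a construction, the lower bound follows from a standard hybrid/chaining argument: the joint distribution of the row sums, column sums, and $q$ adaptively queried entries has total variation distance at most $O(q\phi/n)$ between $\mathcal{D}_0$ and $\mathcal{D}_1$, so for $q=o(n/\phi)$ it is $o(1)$. Thus the algorithm's output distribution is essentially the same on both sub-distributions, and cannot be a $(1+C\phi)$-approximation on both simultaneously. Yao's principle then extracts a specific worst-case matrix $A$ requiring $\Omega(n/\phi)$ queries.

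The principal obstacle is building $\mathcal{D}_1$ that simultaneously achieves (a) exact matching of row/column sum distributions with $\mathcal{D}_0$, (b) $O(\phi/n)$ per-entry TV gap, and (c) a $1+3C\phi$ multiplicative gap in $\opt$. Matching the sums forces a ``doubly-regular'' planted construction in which the signal lives in second- or higher-order correlations between entries rather than in any single-entry marginal, in the spirit of planted-partition lower bounds for spectral and low-rank problems. The assumption $\phi\gg\sqrt{\log(n)/n}$ is precisely what places the construction in the regime where concentration of row and column sums is too weak to reveal the planting through aggregates, yet the planted $\Omega(C\phi)\opt$ gap is statistically meaningful; verifying these three requirements quantitatively for a concrete family of planted distributions is the bulk of the technical work.
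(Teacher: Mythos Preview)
Your proposal has a genuine gap in the indistinguishability step. You assert that if each single entry has marginal total-variation distance $O(\phi/n)$ between $\mathcal{D}_0$ and $\mathcal{D}_1$, then a ``standard hybrid/chaining argument'' gives joint TV $O(q\phi/n)$ over $q$ adaptive queries. But the hybrid argument bounds the joint TV by the sum of the \emph{conditional} TVs of each query given all previous answers, not by the sum of marginal TVs. In your construction the entries are correlated through the hidden set $S$: after a few answers in column $j$, the posterior on $\{j\in S\}$ shifts, and with it the conditional law of the next entry in that column under $\mathcal{D}_1$. The very feature you rely on to hide the signal from row/column sums---``the signal lives in second- or higher-order correlations''---is precisely what breaks subadditivity from marginals. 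Concretely, to create a $\Theta(C\phi)\cdot\opt$ gap you must give columns in $S$ a $\Theta(\phi)$ bias relative to columns outside $S$; then $\Theta(1/\phi^2)$ queries within a single column already yield $\Omega(1)$ TV between $\mathcal{D}_0$ and $\mathcal{D}_1$ (this is just distinguishing two $\phi$-biased coins), whereas your bound would predict only $O(q\phi/n)=O(1/(\phi n))=o(1)$. So the chaining bound as stated is false for any construction with the properties you list, and the plan does not go through without a different lower-bound mechanism.

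The paper takes a different route that sidesteps this issue entirely. Rather than a single two-hypothesis test, it builds a \emph{direct sum} of $k=\Theta(\phi n)$ independent biased-coin problems: each of $k$ row-pairs is governed by an independent $p_i\in\{1/2-\phi',1/2+\phi'\}$ with $\phi'=\Theta(C\phi)$, and one shows that any $(1+C\phi)$-approximate solution must correctly classify a $0.96$ fraction of the $p_i$'s. A direct-sum lemma for biased coins (Lemma~\ref{lem:estimatingexpectations}) then forces $\Omega(k/\phi'^2)=\Omega(n/\phi)$ samples. The row/column sums and adjacency-array access are neutralized not by matching distributions but by a deterministic pairing trick: the matrix is arranged so that $A_{i,2j-1}+A_{i,2j}=1$ and $A_{2i-1,j}+A_{2i,j}=1$ for all relevant $i,j$, making every row and column sum a fixed constant and pinning the $\ell$-th nonzero in any row to one of two known positions, resolvable with a single sample. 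This keeps the $k$ sub-problems genuinely independent, so no delicate joint-TV analysis is needed.
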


\subsection{Core Probabilistic Result}

The technical core of our argument is the following lemma.

\begin{lem} \label{lem:estimatingexpectations}
	Let $\phi\in(0,1/2)$. Let $X_1, \ldots, X_k$ be binary random variables with expectations $p_1, \ldots, p_k$, where $p_i \in \{1/2 -\phi, 1/2+\phi\}$ for each $i$. Let $\A$ be an algorithm which can adaptively obtain any number of samples of each random variable, and which outputs bits $b_i$ for every $i \in [1:k]$. Suppose that with probability at least $0.95$ over the joint probability space of $\A$ and the random samples, $\A$ outputs for at least a $0.95$ fraction of all $i$ that 
	$b_i = 1$ if $p_i = 1/2 + \phi$ and $b_i = 0$ otherwise. 
	Then, with probability at least $0.05$, $\A$ makes $\Omega(k/\phi^2)$ samples in total, asymptotically in $k$.
\end{lem}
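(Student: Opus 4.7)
The plan is to use Yao's minimax principle together with an information-theoretic argument against an adaptive algorithm. First, fix the uniform prior: let $Z_1,\dots,Z_k$ be i.i.d.\ uniform bits and set $p_i = 1/2+\phi$ if $Z_i=1$, otherwise $p_i=1/2-\phi$. The worst-case guarantee of $\A$ implies the same guarantee in expectation over this prior; in particular, the expected number of coins misclassified is at most $0.05 k + 0.05 k = 0.1 k$ (success contributes $\le 0.05k$ errors, failure contributes $\le k$).

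For contradiction, suppose that for every constant $c>0$ we have $\Pr[N \ge ck/\phi^2] < 0.05$ for $k$ large, where $N$ is the (random) total number of samples $\A$ draws. Consider the truncated algorithm $\A'$ that simulates $\A$ but aborts and outputs an arbitrary bitstring once it has drawn $M \Def ck/\phi^2$ samples. Since $\A'$ differs from $\A$ only on an event of probability $<0.05$, the success probability of $\A'$ is still at least $0.90$, and hence its expected per-coin error rate is some constant $\bar p_e$ strictly below $1/2$.

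Now apply two complementary bounds on $\sum_i I(Z_i; b_i)$, where $b_i$ is the output of $\A'$ for coin $i$. \emph{Lower bound (Fano).} Coin-wise Fano gives $I(Z_i;b_i) \ge 1-H_2(p_{e,i})$, and concavity of $H_2$ together with $\frac{1}{k}\sum_i p_{e,i} \le \bar p_e$ yields $\sum_i I(Z_i;b_i) \ge k(1-H_2(\bar p_e)) = \Omega(k)$. \emph{Upper bound (info per sample).} Let $O=(O_1,\dots,O_M)$ be the full transcript of bits observed by $\A'$ (padded with an independent fair coin if $\A$ stops earlier, which only increases mutual information). Data processing gives $I(Z_i;b_i)\le I(Z_i;O)$, and because the $Z_i$ are independent, subadditivity of conditional entropy yields $\sum_i I(Z_i;O) \le I(Z;O)$ with $Z=(Z_1,\dots,Z_k)$. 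Using the chain rule, $I(Z;O)=\sum_{t=1}^{M} I(Z;O_t\mid O_{<t})$; conditioned on $O_{<t}$, the index $i_t$ sampled at step $t$ is deterministic, $O_t$ is a bit, and the additional knowledge of $Z$ only pins down the bias of $O_t$ to $1/2\pm \phi$, so $I(Z;O_t\mid O_{<t}) \le 1-H_2(1/2+\phi)=\Theta(\phi^2)$. Summing, $I(Z;O) \le M\cdot \Theta(\phi^2) = \Theta(c)\cdot k$.

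Combining the two bounds forces $\Theta(c)\cdot k \ge \Omega(k)$, which fails once $c$ is smaller than an absolute constant. This contradicts the assumption, so for that constant $c$ we must have $\Pr[N \ge ck/\phi^2]\ge 0.05$, establishing the lemma.

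The main obstacle is handling the adaptive, randomly stopped sampling strategy, since the per-step mutual information bound $\Theta(\phi^2)$ is easy in a non-adaptive setting but the chain rule must be applied over a random horizon. The truncation trick neutralizes this: by replacing $\A$ with its truncated version $\A'$, we replace the random stopping time $N$ by a deterministic budget $M$ while preserving success probability (up to the additive slack coming from the hypothesized tail event). The remaining calculations—Fano with concavity, subadditivity across independent $Z_i$, and the $\Theta(\phi^2)$ per-bit KL bound—are then standard and just need to be carried out with enough slack so that the two sides of the inequality are separated by a constant factor.
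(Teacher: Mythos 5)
Your proof is correct, and it takes a genuinely different route from the paper's. The paper reduces from the single-coin distinguishing lower bound (cited as Theorem 4.32 of \cite{b02}, giving $\Omega(1/\phi^2)$ samples to distinguish $\mathrm{Bern}(1/2+\phi)$ from $\mathrm{Bern}(1/2-\phi)$ with constant advantage) via a hybrid/embedding argument: truncate $\A$ to a deterministic budget, argue by Markov that on most indices $\A$ spends few samples and is correct, embed a fresh single-coin instance at a uniformly random index $i$, and use the indistinguishability of the surrounding coins to contradict the single-coin bound. You instead give a self-contained Fano-vs-chain-rule argument: lower-bound $\sum_i I(Z_i;b_i)$ by $\Omega(k)$ via Fano plus concavity of $H_2$, and upper-bound it by $M \cdot (1-H_2(1/2+\phi)) = M \cdot \Theta(\phi^2)$ via data processing, subadditivity across the independent $Z_i$, and the per-step mutual-information bound under the chain rule, which cleanly handles adaptivity. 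Both approaches use the same truncation trick to pass from a high-probability tail bound on the number of samples to a fixed budget $M$. The tradeoff: the paper's reduction is shorter if one takes the single-coin lower bound as a black box, and avoids information-theoretic machinery; your argument is self-contained, avoids the somewhat delicate "indistinguishable coin" embedding step, and makes the quantitative dependence on $\phi$ transparent via $1-H_2(1/2+\phi)=\Theta(\phi^2)$. Two small points worth making explicit in a full write-up: the outputs $b_i$ depend on the algorithm's private randomness $R$ in addition to the transcript $O$, so the data-processing step should be $I(Z_i;b_i)\le I(Z_i;(O,R))=I(Z_i;O\mid R)$ and the chain rule applied conditionally on $R$; and the per-step bound $I(Z;O_t\mid O_{<t},R)\le 1-H_2(1/2+\phi)\le 4\phi^2$ holds uniformly over $\phi\in(0,1/2)$, so the constants close as claimed.
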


\begin{proof}
	
	Consider the following problem $P$: let $X$ be a binary random variable with expectation
	$p$ drawn uniformly in $\{1/2-\phi, 1/2+\phi\}$. It is well-known that any algorithm which, with probability at least $0.6$, obtains samples from $X$ and outputs $0$ if $p = 1/2-\phi$ and outputs $1$ if $p = 1/2 +\phi$, requires $\Omega(1/\phi^2)$ samples; see, e.g., Theorem 4.32 of \cite{b02}. Let $c > 0$ be such that $c/\phi^2$ is a lower bound on the number of samples for this problem $P$. 
	
	Let $\A$ be an algorithm solving the problem in the lemma statement. Since $\A$ succeeds with probability at least $0.95$ in obtaining the guarantees of the lemma for given sequence $p_1, \ldots, p_k$, it also succeeds with this probability when $(p_1, \ldots, p_k)$ is drawn from the uniform distribution on $\{1/2-\phi, 1/2+\phi\}^k$. 
	
	Suppose, towards a contradiction, that $\A$ takes less than $0.05 \cdot c k/\phi^2$ samples with probability at least $0.95$. By stopping $\A$ before taking $0.05 \cdot c k/\phi^2$ samples, we obtain an algorithm $A'$ that always takes less than $0.05 \cdot c k/\phi^2$ samples. 
	By the union bound, $A'$ obtains the guarantees of the lemma for the output bits $b_i$ with probability at least $0.9$, over the joint probability space of $A'$ and the random samples.
	
	Note that the expected number of samples $A'$ takes from a given $X_i$ is less than $0.05 \cdot c/\phi^2$. By Markov's inequality, for a $0.95$ fraction of indices $i$, $A'$ takes less than $c/\phi^2$ samples from $X_i$. We say that $i$ is {\it good} if $A'$ takes less than $c/\phi^2$ samples from $X_i$ and the output bit $b_i$ is correct. By union bound, at least a $1-(1-0.9)-(1-0.95)=0.85$ fraction of indices $i$ is good. 
	
	Since $(p_1, \ldots, p_k)$ is drawn from the uniform distribution on $\{1/2-\phi, 1/2+\phi\}^k$, 
	with probability at least $0.95$ the number $k_+ = |\{ i \,:\, p_i = 1/2 + \phi \}|$ satisfies 
	$0.45 k \le k_+ \le 0.55 k$ (for sufficiently large $k$). This implies that a $0.65$ fraction 
	of indices 
	$\{  i \,:\, p_i = 1/2 + \phi \}$ is good, as otherwise the number of bad $i$'s is at least 
	$(1-0.65) \cdot 0.45 k > 0.15 k$. 
	Similarly, a $0.65$ fraction of indices $\{  i \,:\, p_i = 1/2 - \phi \}$ is good. 
	
	Given an instance of problem $P$ with random variable $X$ and expectation $p$, we choose a uniformly random $i \in [k]$, and set $X_i = X$. For $j \neq i$, we independently and uniformly at random choose $p_j \in \{1/2-\phi, 1/2+\phi\}$. We then run algorithm $A'$. Whenever $A'$ samples from $X_i$, we sample a new value of $X$ as in problem $P$. Whenever $A'$ samples from $X_j$ for $j \neq i$, we flip a coin with probability $p_j$ and report the output to $A'$. If $A'$ takes $c/\phi^2$ samples from $X_i$, then we abort, thus ensuring that $A'$ always takes less than $c/\phi^2$ samples from $X_i=X$. Observe that the input to $A'$ is a sequence of random variables $X_1, \ldots, X_k$ with expectations $p_1, \ldots, p_k$ which are independent and uniformly distributed in $\{1/2-\phi, 1/2+\phi\}$. In particular, except for their expectation these random variables are indistinguishable.
	
	We now condition on $0.45 k \le k_+ \le 0.55 k$, which has success probability at least $0.95$ for sufficiently large $k$. Then no matter whether $p_i = 1/2 + \phi$ or $p_i = 1/2 - \phi$, at least a $0.65$ fraction of indices $j$ with $p_j = p_i$ is good. Since $i$ was chosen to be a uniformly random position independently of the randomness of the sampling and the algorithm $A'$, and the $X_j$ with $p_j = p_i$ are indistinguishable, with probability at least $0.65$ index $i$ is good. In this case, $A'$ takes less than $c/\phi^2$ samples from $X_i = X$ and correctly determines the output bit $b_i$, i.e., whether $p_i = 1/2+\phi$. 
	As by union bound the total success probability is $1 - (1-0.65) - (1-0.95) = 0.6$, this contradicts the requirement of $c/\phi^2$ samples mentioned above for solving $P$. Hence, the assumption was wrong, and $\A$ takes $\Omega(k/\phi^2)$ samples with probability at least $0.05$.
\end{proof}

We start with a simplified version of our result, where we only have random access to the matrix entries. Below we extend this lower bound to the situation where we even have random access to the adjacency lists of all rows and columns.

\begin{thm}\label{thm:simpleLB}
	Let $C \ge 1$.
	Given an $n \times n$ binary matrix $A$ by random access to its entries, 
	and given $\sqrt{\log (n) / n} \ll \phi \leq 1/100C$ such that $\opt / \nnz{A} \le \phi$, 
	computing a $(1+C \phi)$-approximation of $\opt$ requires to read $\Omega(n/\phi)$ entries
	of $A$ (in the worst case over $A$). 
\end{thm}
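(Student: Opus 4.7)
I will reduce the Bernoulli-classification lower bound of Lemma~\ref{lem:estimatingexpectations} to the approximation lower bound by constructing a random family of $n\times n$ matrices on which any $(1+C\phi)$-approximate rank-1 decomposition implicitly classifies the biases of $k=\Theta(\phi n)$ independent Bernoulli variables. Each entry read by the algorithm will correspond to one sample of such a variable, and Lemma~\ref{lem:estimatingexpectations} then forces $\Omega(k/\phi^{2})=\Omega(n/\phi)$ entry reads with constant probability, whence by averaging over the random input some fixed matrix attains this bound in the worst case.

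\textbf{Construction.} Draw $\sigma_{1},\ldots,\sigma_{k}\in\{-1,+1\}$ independently and uniformly with $k=\lceil\phi n\rceil$. Set columns $1,\ldots,n-k$ of $A$ to be all-ones, and for each \emph{mystery} column $j\in\{n-k+1,\ldots,n\}$ draw its entries $A_{i,j}\sim\mathrm{Bernoulli}(1/2+\sigma_{j}\phi)$ independently. The hypothesis $\phi\gg\sqrt{\log n/n}$ together with Chernoff and union bounds guarantees that w.h.p.\ every mystery column has $(1/2+\sigma_{j}\phi)n\pm o(\phi n)$ ones. Lemma~\ref{lem_technical} then identifies the optimum as $u^{\star}=\mathbf{1}$ with $v^{\star}_{j}=1$ iff $j\le n-k$ or $\sigma_{j}=+1$, giving $\opt=(1/2-\phi)kn\pm o(\phi n^{2})=\Theta(\phi n^{2})$, while $\nnz{A}=\Theta(n^{2})$; hence $\opt/\nnz{A}\le\phi$ as required.

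\textbf{Reduction and invocation of the lemma.} Consider any output $(\tilde u,\tilde v)$ with $\nnz{A-\tilde u\tilde v^{T}}\le(1+C\phi)\opt$. Decomposing the cost by column: every non-mystery column must have $\tilde v_{j}=1$ (otherwise it alone contributes $n$, blowing the budget), and then contributes $|\{i:\tilde u_{i}=0\}|$ to the cost, so fitting within the extra budget $C\phi\opt=\Theta(C\phi^{2}n^{2})$ forces $|\{i:\tilde u_{i}=0\}|=O(C\phi^{2}n)$. For each mystery column $j$, the cost as a function of $\tilde v_{j}$ is minimized w.h.p.\ by setting $\tilde v_{j}=1$ iff $\sigma_{j}=+1$ (by concentration of the column sum in the $\tilde u_{i}=1$ rows, which form a $1-o(1)$ fraction of $[n]$), and any wrong choice costs an extra $2\phi n-o(\phi n)$; hence at most $O(Ck)$ mystery columns can be misclassified. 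For $C$ a sufficiently small constant this is below $0.05k$ misclassifications, so reading off $b_{j}:=\tilde v_{j}$ yields a classifier that agrees with $\sigma_{j}$ on at least a $0.95$ fraction of mystery columns w.h.p. Since an entry read on mystery column $j$ is exactly one independent sample of $X_{j}\sim\mathrm{Bernoulli}(1/2+\sigma_{j}\phi)$, Lemma~\ref{lem:estimatingexpectations} applied to $X_{1},\ldots,X_{k}$ forces $\Omega(k/\phi^{2})=\Omega(n/\phi)$ total entry reads with probability $\ge 0.05$, and averaging over the random $\sigma$'s yields the worst-case bound.

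\textbf{Main obstacle.} The delicate step is the reduction: one must verify carefully that the $O(C\phi^{2}n)$ rows with $\tilde u_{i}=0$, the random column-sum fluctuations, and possible joint misbehaviour of $\tilde u$ and $\tilde v$ cannot hide many wrong $\tilde v_{j}$'s, i.e.\ that under the high-probability concentration events the bits $b_{j}:=\tilde v_{j}$ really do form a bona fide classifier satisfying the hypothesis of Lemma~\ref{lem:estimatingexpectations}. A secondary and more technical point is that Lemma~\ref{lem:estimatingexpectations} is stated for a $0.95$ correct fraction; to obtain the lower bound across the full range $C\in[1,\,1/(100\phi)]$ one appeals to a straightforward strengthening of Lemma~\ref{lem:estimatingexpectations} to any correctness fraction $>1/2+\varepsilon$, which follows by the same reduction to the two-coin distinguishability problem.
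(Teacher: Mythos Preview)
Your approach is essentially the paper's (with rows and columns swapped), but one parameter is set incorrectly and this breaks the reduction for the stated range $C \ge 1$.

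You take the Bernoulli bias to be $\phi$. A $(1+C\phi)$-approximation then has extra budget $C\phi\cdot\opt \approx \tfrac{C}{2}\phi k n$, while each misclassified mystery column costs an extra $\approx 2\phi n$, so the number of misclassifications satisfies only $b \lesssim Ck/4$ --- exactly the $O(Ck)$ you computed. Since the theorem assumes $C \ge 1$, this is always at least $k/4$ and never below the $0.05k$ threshold of Lemma~\ref{lem:estimatingexpectations}. Your proposed fix, a strengthening of the lemma to any correctness fraction strictly above $1/2$, does not rescue the argument: already for $C \ge 2$ the bound $b \lesssim Ck/4$ gives no correctness guarantee above $1/2$, and for $C \ge 4$ it is vacuous, so a $(1+C\phi)$-approximation may output \emph{arbitrary} bits on the mystery columns while staying within budget. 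Thus no version of the lemma applies, and the reduction fails for every constant $C \ge 2$.

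The paper handles this by rescaling the bias to $\phi' \Def 25C\phi$; the hypothesis $\phi \le 1/(100C)$ ensures $\phi' \le 1/4$, so this is a legal Bernoulli parameter. Now each misclassification costs $\approx 2\phi' n = 50C\phi n$, the budget $C\phi\cdot\opt$ permits at most $b \le 0.02k$ misclassifications, and Lemma~\ref{lem:estimatingexpectations} applies as stated. The resulting sample bound is $\Omega(k/\phi'^{2}) = \Omega(n/(\phi C^{2}))$, which for constant $C$ is $\Omega(n/\phi)$. With this one change your argument goes through; the rest of your reduction (forcing $\tilde u$ to be nearly all-ones, reading off $b_j := \tilde v_j$, averaging over the random instance to get a worst-case bound) matches the paper's.
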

\begin{proof}
	Set $\phi' \Def  25 C \phi$ and $k \Def  \phi n / 2$. As in Lemma~\ref{lem:estimatingexpectations}, consider binary random variables $X_1,\ldots,X_{k}$ with expectations $p_1,\ldots,p_{k}$, where $p_i \in \{1/2 -\phi', 1/2+\phi'\}$ for each $i$. We (implicitly) construct an $n \times n$ matrix $A$ as follows. 
	For ever $k < i \le n, \, 1 \le j \le n$ we set $A_{i,j} \Def  1$. For any $1 \le i \le k, \, 1 \le j \le n$ we sample a bit $b_{i,j}$ from $X_i$ and set $A_{i,j} \Def  b_{i,j}$. Note that we can run any \probOne algorithm implicitly on $A$: whenever the algorithm reads an entry $A_{i,j}$ we sample a bit from $X_i$ to determine the entry (and we remember the entry for possible further accesses). 
	
	Let us determine the optimal solution for $A$. Note that for each $i > k$, since the row $A_{i,:}$ is all-ones, it is always better to pick this row than not to pick it, and thus without loss of generality any solution $u,v$ has $u_i = 1$. Similarly, for any $j$, since the column $A_{:,j}$ has $n-k > n/2$ 1's in rows picked by $u$, it is always better to pick the column than not to pick it, and thus $v_j = 1$, i.e., $v$ is the all-ones vector. Hence, the only choice is for any $1 \le i \le k$ to pick or not to pick row $i$. 
	Note that no matter whether we pick these rows or not, the total error is at most $\phi n^2 / 2$, since these rows in total have $k n = \phi n^2 / 2$ entries, and all remaining entries of $A$ are correctly recovered by the product $u v^T$ by the already chosen entries of $u$ and $v$. Hence, $\opt \le \phi n^2 / 2$, and since $\nnz{A} \ge (n-k)n \ge n^2/2$, we obtain, as required, $\opt / \nnz{A} \le \phi$.
	
	Now consider the rows $1 \le i \le k$ more closely. Since $v$ is the all-ones vector, not picking row $i$ incurs cost for each 1 in the row, which is cost $\nnzs{A_{i,:}}$, while picking row $i$ incurs cost for each 0 in the row, which is cost $n - \nnzs{A_{i,:}}$. Note that the expected number of 1's in row $1 \le i \le k$ is $p_i n$.  The Chernoff bound yields concentration: We have w.h.p.\ 
	$| \nnzs{A_{i,:}} - p_i n | \le 0.01 \cdot \phi' n$, where we used $\phi' \gg \sqrt{\log (n) / n}$. In the following we condition on this event and thus drop ``w.h.p.'' from our statements.
	In particular, for any $i$ with $p_i = 1/2 + \phi'$ we have $\nnzs{A_{i,:}} \ge (1/2 + 0.99 \phi') n$, and for any $i$ with $p_i = 1/2 - \phi'$ we have $\nnzs{A_{i,:}} \le (1/2 - 0.99 \phi') n$. 
	
	By picking all rows $i \le k$ with $p_i = 1/2 + \phi'$ and not pick the rows with $p_i = 1/2 - \phi'$, we see that $\opt \le (1/2 - 0.99 \phi') k n$.
	Now consider a solution $u$ that among the rows $1 \le i \le k$ with $p_i = 1/2 + \phi'$ picks $g_+$ many and does not pick $b_+$ many. Similarly, among the rows with $p_i = 1/2 - \phi'$ it picks $g_-$ and does not pick $b_-$. 
	Note that each of the $g_+$ ``good'' rows incurs cost 
	\[
	n - \nnz{A_{i:}} \ge n - (1/2 + 1.01\phi') n = (1/2 - 1.01\phi') n.
	\]
	Each of the $b_+$ ``bad'' rows incurs a cost of $\nnzs{A_{i:}} \ge (1/2 + 0.99\phi') n$. Similar statements hold for $g_-$ and $b_-$, and thus for $g \Def  g_+ + g_-$ and $b \Def  b_+ + b_-$, with $g+b = k$, we obtain a total cost of 
	\begin{align*}
	\nnzs{A - u v^T} 
	&\ge  g \cdot (1/2 - 1.01 \phi')n + b \cdot (1/2 + 0.99 \phi')n \\
	& = k(1/2-0.99\phi')n + 2b\phi' n - 0.02k\phi'n \\
	&\ge \opt + 2 b \phi' n - 0.02 \phi' k n.
	\end{align*}
	If $b \ge 0.02 k$, then 
	\[
	\nnzs{A - u v^T} \ge \opt + 0.02 \phi' k n \ge (1 + 0.04 \phi') \opt.
	\]
	By contraposition, if we compute a $(1 + 0.04 \phi' = 1+C \phi)$-approximation on $A$, then $b \le 0.02 k$, and thus the vector $u$ correctly identifies for at least a $0.98$ fraction of the random variables $X_i$ whether $p_i = 1/2 + \phi'$ or $p_i = 1/2 - \phi'$. Since this holds w.h.p., by Lemma~\ref{lem:estimatingexpectations} we need $\Omega(k/\phi'^2) = \Omega(n/(\phi C^2))$ samples from the variables $X_i$, and thus $\Omega(n/(\phi C^2))$ reads in $A$. Since $C \ge 1$ is constant, we obtain a lower bound of $\Omega(n/\phi)$. This lower bound holds in expectation over the constructed distribution of $A$-matrices, and thus also in the worst case over $A$.
\end{proof}

\subsection{Hard Instance}

The construction in Theorem~\ref{thm:simpleLB} does not work in the case when 
we have random access to the adjacency lists of the rows, 
since this allows us to quickly determine the numbers of 1's per row, 
which is all we need to know in order to decide whether we pick a particular row 
in the matrix constructed above. 
We overcome this issue, using the adapted construction in Theorem~\ref{thm:SampleLowerBoundBRone}.

We now present the correctness of Theorem~\ref{thm:SampleLowerBoundBRone}.

\begin{proof}[Proof of Theorem~\ref{thm:SampleLowerBoundBRone}]
	We assume that $n$ is even. 
	Let $\phi', k, X_1,\ldots,X_k, p_1, \ldots ,p_k$ be as in the proof of Theorem~\ref{thm:simpleLB}.
	We adapt the construction of the matrix $A$ as follows. 
	For any $2k < i \le n, \, 1 \le j \le n/2$ we set $A_{i,2j} \Def  1$ and $A_{i,2j-1} \Def  0$. For any $1 \le i \le k, \, 1 \le j \le n/2$ we sample a bit $b_{i,j}$ from $X_i$ and set $A_{2i,2j} \Def  A_{2i-1,2j-1} \Def  b_{i,j}$ and $A_{2i-1,2j} \Def  A_{2i,2j-1} \Def  1 - b_{i,j}$. 
	
	As before, when running any \probOne algorithm on $A$ we can easily support random accesses to entries $A_{i,j}$, by sampling from $X_{\lceil i/2 \rceil}$ to determine the entry (and remembering the sampled bit for possible further accesses). Furthermore, we can now allow random accesses to the \emph{adjacency arrays} of rows and columns. Specifically, if we want to determine the $\ell$-th 1 in row~$i \le 2k$, we know that among the entries $A_{i,1},\ldots,A_{i,2\ell}$ there are exactly $\ell$ 1's, since by construction $A_{i,2j-1} + A_{i,2j} = 1$. Hence, the $\ell$-th 1 in row $i$ is at position $A_{i,2\ell-1}$ or $A_{i,2\ell}$, depending only on the sample $b_{\lceil i/2 \rceil,\ell}$ from $X_{\lceil i/2 \rceil}$. For rows $i > 2k$, the $\ell$-th 1 is simply at position $A_{i,2\ell}$.
	Thus, accessing the $\ell$-th 1 in any row takes at most one sample, so we can simulate any algorithm on $A$ with random access to the adjacency lists of rows. The situation for columns is essentially symmetric. Similarly, we can allow constant time access to the row and column sums.
	
	In the remainder we show that the constructed matrix $A$ has essentially the same properties as the construction in Theorem~\ref{thm:simpleLB}. We first argue that any 2-approximation $u,v$ for the \probOne problem on $A$ picks all rows $i > 2k$ and picks all even columns and does not pick any odd column. Thus, the only remaining choice is which rows $i \le 2k$ to pick. 
	To prove this claim, first note that any solution following this pattern has error at most $2k n = \phi n^2$, 
	since the $2k$ undecided rows have $2k n$ entries, and all other entries are correctly recovered by the 
	already chosen parts of $u v^T$.
	Hence, we have $\opt \le \phi n^2$.
	Now consider any 2-approximation $u,v$, which must have cost at most $2\phi n^2$.
	Note that $u$ picks at least $(1-5\phi) n$ of the rows $\{2k+1,\ldots,n\}$, since each such row contains $n/2$ 1's that can only be recovered if we pick the row, so we can afford to ignore at most $8k=4 \phi n$ of these $n-2k = (1-\phi) n$ rows. 
	Now, each even column contains at least $(1-5\phi) n > n/2$ 1's in picked rows, and thus it is always better to pick the even columns. Similarly, each odd column contains at least $n/2$ 0's in picked rows, and thus it is always better not to pick the odd columns. Hence, we obtain without loss of generality $v_{2j} = 1$ and $v_{2j-1} = 0$. Finally, each row $i > 2k$ contains $n/2$ 1's in columns picked by $v$ and $n/2$ 0's in columns not picked by $v$, and thus it is always better to pick row $i$. Hence, we obtain without loss of generality $u_i = 1$ for $i > 2k$. 
	
	Our goal now is to lower bound $\nnzs{A - u v^T}$ in terms of $\opt$ 
	and the error term $b\phi'n$, similarly to the proof in Theorem~\ref{thm:simpleLB}.
	Notice that we may ignore the odd columns, as they are not picked by $v$.
	Restricted to the even columns, row $2i$ is exactly as row $i$ in the construction 
	in Theorem~\ref{thm:simpleLB}, while row $2i-1$ is row $2i$ negated. 
	Thus, analogously as in the proof of Theorem~\ref{thm:simpleLB}, 
	we obtain w.h.p.\ $\opt \le (1/2 - 0.99\phi') 2 k n$ and
	\[
	\nnzs{A - u v^T} \ge \opt + 2b\phi'n - 0.04 k \phi' n\geq(1+0.04\phi')\opt,
	\] 
	where $b\geq0.04k$ is the number of ``bad'' rows $i \le 2k$. 
	Again analogously, if we compute a $(1+0.04 \phi') = 1+C\phi)$-approximation on $A$,
	then $b \le 0.04k$, and thus w.h.p.\ we correctly identify  for at least a $0.96$ 
	fraction of the random variables $X_i$ whether 
	$p_i = 1/2 + \phi'$ or $p_i  =1/2 - \phi'$. 
	As before, this yields a lower bound of $\Omega(n/\phi)$ samples.
\end{proof}

\bibliographystyle{alpha}
\bibliography{ref}

\end{document}